\newcommand\remove[1]{}
\newtheorem{theorem}{Theorem}
\newtheorem{defn}{Definition}
\newtheorem{lemma}[theorem]{Lemma}
\newtheorem{proposition}[theorem]{Proposition}
\newcommand{\ba}{\begin{array}}
\newcommand{\ea}{\end{array}}
\newcommand{\be}{\begin{equation}}
\newcommand{\ee}{\end{equation}}
\newcommand{\bea}{\begin{eqnarray}}
\newcommand{\eea}{\end{eqnarray}}
\def\argmax{\text{argmax}}
\newcommand\nc\newcommand
\nc\bfa{{\boldsymbol a}}\nc\bfA{{\bf A}}\nc\cA{{\mathcal A}}
\nc\bfb{{\boldsymbol b}}\nc\bfB{{\boldsymbol B}}\nc\cB{{\mathcal B}}
\nc\bfc{{\boldsymbol c}}\nc\bfC{{\bf C}}\nc\cC{{\mathcal C}}
\nc\sC{{\mathscr C}}
\nc\bfd{{\boldsymbol d}}\nc\bfD{{\bfD}}
\nc\cD{{\mathcal D}}
\nc\bfe{{\boldsymbol e}}\nc\bfE{{\bf E}}\nc\cE{{\mathcal E}}
\nc\bff{{\boldsymbol f}}\nc\bfF{{\bf F}}\nc\cF{{\mathcal F}}
\nc\bfg{{\boldsymbol g}}\nc\bfG{{\bf G}}\nc\cG{{\mathcal G}}
\nc\bfh{{\boldsymbol h}}\nc\bfH{{\bf H}}\nc\cH{{\mathcal H}}
\nc\bfi{{\boldsymbol i}}\nc\bfI{{\bf I}}\nc\cI{{\mathcal I}}\nc\sI{{\mathscr I}}
\nc\bfj{{\boldsymbolj}}\nc\bfJ{{\bf J}}\nc\cJ{{\mathcal J}}
\nc\bfk{{\boldsymbolk}}\nc\bfK{{\bf K}}\nc\cK{{\mathcal K}}
\nc\bfl{{\boldsymboll}}\nc\bfL{{\bf L}}\nc\cL{{\mathcal L}}
\nc\bfm{{\boldsymbolm}}\nc\bfM{{\bf M}}\nc\cM{{\mathcal M}}
\nc\bfn{{\boldsymboln}}\nc\bfN{{\bf N}}\nc\cN{{\mathcal N}}
\nc\bfo{{\boldsymbolo}}\nc\bfO{{\bf O}}\nc\cO{{\mathcal O}}
\nc\bfp{{\boldsymbolp}}\nc\bfP{{\bf P}}\nc\cP{{\mathcal P}}
\nc\eP{{\EuScriptP}}\nc\fP{{\mathfrak P}}
\nc\bfq{{\boldsymbol q}}\nc\bfQ{{\bf Q}}\nc\cQ{{\mathcal Q}}
\nc\bfr{{\boldsymbol r}}\nc\bfR{{\bf R}}\nc\cR{{\mathcal R}}
\nc\bfs{{\boldsymbol s}}\nc\bfS{{\boldsymbol S}}\nc\cS{{\mathcal S}}
\nc\bft{{\boldsymbol t}}\nc\bfT{{\bf T}}\nc\cT{{\mathcal T}}
\nc\bfu{{\boldsymbol u}}\nc\bfU{{\bf U}}\nc\cU{{\mathcal U}}
\nc\bfv{{\boldsymbol v}}\nc\bfV{{\bf V}}\nc\cV{{\mathcal V}}
\nc\bfw{{\boldsymbol w}}\nc\bfW{{\bf W}}\nc\cW{{\mathcal W}}
\nc\bfx{{\boldsymbol x}}\nc\bfX{{\bf X}}\nc\cX{{\mathcal X}}
\nc\bfy{{\boldsymbol y}}\nc\bfY{{\bf Y}}\nc\cY{{\mathcal Y}}
\nc\bfz{{\boldsymbol z}}\nc\bfZ{{\bf Z}}\nc\cZ{{\mathcal Z}}
\newcommand{\red}[1]{\textcolor{red}{#1}}
\begin{document}
\title{Interactive Function Computation via Polar Coding}

\author[T.C. Gulcu]{Talha Cihad Gulcu$^\ast$} 
\thanks{{\em Date}\/: \today.\/
\\
\hspace*{.15in}
$^\ast$
Department of ECE and Institute for Systems Research, 
			University of Maryland, College Park, MD 20742, Email: gulcu@umd.edu.
Research supported in part by NSF grant CCF1217245}
\author[A. Barg]{Alexander Barg$^{\ast\ast}$}\thanks{$^{\ast\ast}$
Department of ECE and Institute for Systems Research, University
of Maryland, College Park, MD 20742, and IITP, Russian Academy of
Sciences, Moscow, Russia. Email: abarg@umd.edu. Research supported
in part by NSF grants CCF1217894, CCF1217245, and CCF1422955. Email: abarg@umd.edu.}

\begin{abstract} In a series of papers N. Ma and P. Ishwar (2011-13) considered a range of distributed source coding problems that 
arise in the context of iterative computation of functions, characterizing the region of achievable communication rates. We consider 
the problems of interactive computation of functions by two terminals and interactive computation in a collocated network, showing that
the rate regions for both these problems can be achieved using several rounds of polar-coded
transmissions.
\end{abstract}
\maketitle

\date{the date}

\section{Introduction}
Interactive computation in networks has been recently attracting attention of researchers in information theory and computer science alike.
Aspects of interactive computation have been analyzed from various perspectives including establishing the region of achievable
rates, complexity and security of computations, as well as a number of other problems \cite{Ayasko10,Braverman2011,Chen2007,Nazer07,Tyagi11}.

A line of work starting with the paper \cite{Korner79} examined the question of computing a function $f(X,Y)$ where $X$ is a discrete
memoryless source and $Y$ represents side information provided to the decoder as a random variable correlated with $X.$
The main question addressed in these works is whether communication for computing the function rather than communicating the source itself can reduce the
volume of transmission. While \cite{Korner79} confined itself to the modulo-two sum of $X$ and $Y$, later works, e.g., \cite{Orlitsky01} extended
the problem to arbitrary functions $f$, finding the region of achievable rates for one or two rounds
of communication for computing $f$. 

In this work we focus on the problems considered in \cite{MaIshwar11,MaIshwar12} which generalize the setting of \cite{Orlitsky01} to multiple rounds of communication. The main problem considered in these papers concerns the scenario in which two terminals observe multiple independent realizations of correlated random variables. The objective of the terminals is to establish and conduct communication that enables them to compute a function of their observations. An obvious solution is to transmit the entire sequence of observations from Terminal A to Terminal B and the same in the reverse direction whereupon the computation can be trivially completed. 
The problem considered in the cited works is to reduce the amount of transmitted information using ideas from 
distributed lossy compression, thereby reducing the problem to a version of distributed source coding. An extension of this problem considered in \cite{MaIshwar12} concerns transmission in a multiterminal network where the computation is performed by a single dedicated node. 
In both scenarios the cited papers characterized exactly the region of achievable rates of communication for the function computation.

Starting with the results of \cite{MaIshwar11,MaIshwar12}, in this paper we design explicit communication protocols that achieve the rate regions of the two communication models discussed above. 
In our schemes, communication is performed by exchanging several messages between the terminals formed by using the ideas related to Ar{\i}kan's polar coding scheme \cite{ari09}. 
Polar codes were initially introduced for transmission over binary-input discrete memoryless channels \cite{ari09}.
They were subsequently applied in a variety of situations related to communication and data compression. 
In particular, it is possible to modify the original scheme to achieve the optimal compression rate in the problem of
lossless coding of memoryless discrete sources  as well as a distributed version of this problem (the Slepian-Wolf
problem) \cite{Arikan2010}. It is also possible to design a polar-coding scheme for lossy source coding,
including Wyner-Ziv's distributed version of this problem \cite{kor09a,Korada2010}. As shown in these works, it is possible
to compress a discrete memoryless source using polar codes, attaining the compression rate that approaches the (symmetric) rate-distortion function of the source.

These results serve a starting point of our research which also proceeds in the context of distributed lossy compression.
The new challenges in our constructions arise from the fact that for function computation we need to implement
an interactive scheme. The problem extends beyond using several rounds of the lossy compression scheme
because neither the coding of \cite{Korada2010} nor its analysis generalize immediately to multiple rounds. 
To proceed, we bring in an idea in another recent work on polar codes, \cite{Honda13}, devoted to their extension to asymmetric channels.
Recall that the original polar coding scheme \cite{ari09} involves data bits together with ``frozen bits'' whose 
values are shared with the decoder. Paper \cite{Honda13} further refines this partition, introducing three types of coordinates
based on their conditional entropies. We modify this idea, defining a partition that ensures the validity of our
interactive communication scheme. This setup, however, comes at a price of more involved analysis, which we proceed to discuss.

Recall that the main challenge in proving that polar codes attain the rate-distortion function consisted 
in showing that the joint statistic of the source sequence and the polar-compressed sequence is close to the ``ideal'' statistic arising from
the rate-distortion theorem \cite{Korada2010}. Estimates of this kind form the main technical contents of our research, and
lie in the core of the proofs. Our situation however is more difficult than the setting of distributed compression because we
need to show that the mentioned statistic is close to the ideal distribution both for the transmitting and receiving parties. 
It may seem that the transmitter already has all the information, and there is no reason that it cannot recover the data with high
probability or even probability one. This is not the case because the interactive nature of the communication protocol calls for
a different encoding procedure of polar codes. To define it, we introduce a partition of the data block
into message bits, random bits, and near-deterministic bits. This supports the required functionality, but at the same time
biases the joint statistic. For this reason, to prove proximity of the distributions even in the first round, 
we have to rely on rather involved induction arguments, analyzing separately the observations of the transmitter and the receiver.
At a high level, we need to show that both terminals generate the same sequence of random variables with high probability, leading to the reliable computation of their functions. Proofs of the described claims take up 
a large part of the paper. These ideas are developed in Sect.~\ref{sect:firstround}, \ref{sect:analysis1}; see in particular
Lemmas \ref{l1} and \ref{l2}. 

Once the needed properties of the distributions are established for the first round, we proceed to extend the argument to multiple rounds of communication. Namely, in Sect.~\ref{sect:remaining_rounds}, \ref{sect:multiround} we show that after several rounds
of communication at rates that approach the optimal rate for this problem, the terminals recover the random sequences
generated by each other with high probability. This is proved via another induction argument which has to take account of multiple Markov chain conditions that arise naturally in the course of the exchange. 

Our overall goal is accomplished in Sect.~\ref{sect:strongtypicality} where we prove that the desired function values are computed by the terminals with probability approaching one. To complete the discussion, in Sect.~\ref{sect:example} we give
an example of distributed computation where our scheme provides a gain in the amount of transmitted data
over sending the realizations of the random variables observed by the terminals.

Finally, in Sect.~\ref{sect:networks} we show that the designed protocol can be extended to a version of distributed computation
performed in a network of terminals \cite{MaIshwar12}. It turns out that our scheme for two terminals can be modified to attain optimal rates of communication
for this scenario. The main elements of the analysis are similar to the case of two terminals.

In summary, we suggest a version of polar codes that support the primitive of 
interactive lossy source coding and apply it to some function computation problems. 
This takes interactive source coding one step closer towards practicality by showing that polar codes, which are known to have near linear coding complexity, can indeed recover the rate regions. We also introduce some new technical tools that could be useful in other interactive communication schemes based on polar codes.

\section{Problem statement}
\subsection{Two-terminal network}\label{sect:2t}
The interactive distributed source coding problem that we consider
in this paper is illustrated in Figure \ref{figure1}. Let $X$ and $Y$ be discrete random
variables taking values in finite sets (alphabets) $\cX$ and $\cY$ and let $p_{XY}$ be their joint 
distribution. Suppose that we are given $N$ independent realizations 
$$(X,Y)^{1:N}=((X(1),Y(1)), (X(2),Y(2)),
\dots,(X(N),Y(N)))$$ 
of the pair $(X,Y)$ (here and elsewhere a vector of the form $(X^i,\dots,X^j)$ is abbreviated as $X^{i:j}$).
 We assume that Terminal A
observes the sequence $X^{1:N}\in {\mathcal X}^{N}$ and Terminal B
observes the sequence $Y^{1:N}\in {\mathcal Y}^{N}.$ 
%where for instance ${\mathcal X}^{i:j}$ denotes the alphabet of the random variables of $(X^i,\dots,X^j).$

 The aim of Terminal A is to calculate the function $f_A: {\mathcal X}
\times {\mathcal Y}\to {\mathcal Z}_A$ for indices $i=1,\dots,N$. Similarly, the aim 
of Terminal B is to calculate the function $f_B: {\mathcal X} \times {\mathcal Y}\to {\mathcal Z}_B,$
where $\cZ_A,\cZ_B$ are some finite alphabets.
In other words, Terminals A and B attempt to compute $Z_A^{1:N}\triangleq(Z_A(1),Z_A(2),\dots,Z_A(N))$
and ${Z}_B^{1:N}\triangleq(Z_B(1),Z_B(2),\dots,Z_B(N))$ respectively, where $Z_A(i)=f_A(X(i),Y(i))$ and
$Z_B(i)=f_B(X(i),Y(i))$, for $i=1,\dots,N$.

\begin{figure}[ht]%\vspace{0.5cm}
\centering
%\captionsetup{justification=centering}
\includegraphics[width=2.8in]{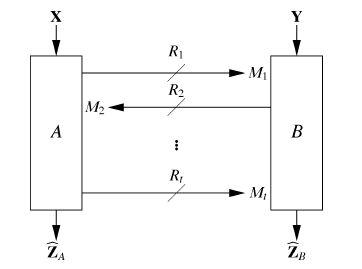}
\caption{Interactive distributed source coding with t alternating messages.}\label{figure1}
%\vspace{0.5cm}
\end{figure}

\begin{defn}
A two-terminal $t$-round interactive source code with the parameters
$(t,N,|{\mathcal M}_1|,\dots,|{\mathcal M}_t|)$ is formed by $t$ encoding functions
$e_1,\dots,e_t$ and two block decoding functions $g_A$, $g_B$ of blocklength
$N$ such that
\begin{align*}
&\text{(\rm Enc $j, j=1,\dots, t$)}\quad e_j: 
\begin{cases}
{\mathcal X}^{N} \times \bigotimes_{i=1}^{j-1} {\mathcal M}_i \to {\mathcal M}_j\;& \text{if } j\, \text{is odd}\\
{\mathcal Y}^{N} \times \bigotimes_{i=1}^{j-1} {\mathcal M}_i \to {\mathcal M}_j\; &\text{if } j\, \text{is even}
\end{cases}
\end{align*}
\begin{align*}
&\text{\rm (Dec A)}\quad g_A:{\mathcal X}^{N} \times \bigotimes_{j=1}^{t} {\mathcal M}_j \to {\mathcal Z}_A^{N}\\
&\text{\rm (Dec B)}\quad g_B:{\mathcal Y}^{N} \times \bigotimes_{j=1}^{t} {\mathcal M}_j \to {\mathcal Z}_B^{N}.
\end{align*}
\end{defn}
Without loss of generality we are assuming that communication is initiated by Terminal A.
The value of the encoder mapping $e_j$ is called the $j$th message (of A or B, as appropriate) and denoted by $M_j, j=1,\dots,t,$
where $t$ is the total
number of messages in the protocol. The outputs of the decoders $A$ and $B$ are
denoted by $\hat Z_A^{1:N}$ and $\hat Z_B^{1:N},$ respectively.
%${\mathbf {\hat{Z}}}_A$ and ${\mathbf {\hat{Z}}}_B$ respectively.

\begin{defn}
A rate tuple ${\mathbf R}=(R_1,\dots,R_t)$ is achievable for $t$-round interactive
function computation if for every $\epsilon >0$ there exists $N(\epsilon,t)$
such that for all $N >N(\epsilon,t)$, there exists a two-terminal interactive source code
with the parameters $(t,N,|{\mathcal M}_1|,\dots,|{\mathcal M}_t|)$ such that
\begin{align*}
&\frac{1}{N}\log_2 |{\mathcal M}_j| \leq R_j+\epsilon,\; j=1,\dots,t \\
&\Pr(Z_A^{1:N} \neq \hat Z_A^{1:N})\leq\epsilon, \; 
\Pr(Z_B^{1:N} \neq \hat Z_B^{1:N})\leq\epsilon.
\end{align*}
\end{defn}
The set of all achievable rate tuples is denoted by ${\mathcal R}_t^A$.

\begin{theorem} \label{region} {\cite{MaIshwar11}} A $t$-tuple of rate values ${\mathbf R}$ is contained in the region of achievable rates ${\mathcal R}_t^A$ if and only if there exist random variables $U^{1:t}=(U^1,\dots,U^t)$ such that for all $i=1,\dots,t$
     \begin{align} \label{rate}
&R_i \geq \begin{cases}
   I(X;U^i|Y,U^{1:i-1}), \quad U^i\rightarrow(X,U^{1:i-1})\rightarrow Y,  &i \text{ odd}\\
    I(Y;U^i|X,U^{1:i-1}),  \quad 
     U^i\rightarrow(Y,U^{1:i-1})\rightarrow X, &i \text{ even}
   \end{cases} 
   \\ 
 & H(f_A(X,Y)|X,U^{1:t})=0,\;H(f_B(X,Y)|Y,U^{1:t})=0
   \nonumber
    \end{align}
 		where the auxiliary random variables $U^{1:t}$ are supported on finite sets $\,\cU^i$ such that
\begin{align}
|{\mathcal U^j}|\leq
\begin{cases}
|{\mathcal X}|(\prod_{i=1}^{j-1} |{\mathcal U^i}|)+t-j+3, &\text{$j$ odd}\\[.05in]
|{\mathcal Y}|(\prod_{i=1}^{j-1} |{\mathcal U^i}|)+t-j+3, &\text{$j$ even}.\label{cardinality}
\end{cases}
\end{align}

\end{theorem}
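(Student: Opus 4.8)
The plan is to prove the two inclusions separately: an achievability construction showing that any rate tuple for which auxiliary variables $U^{1:t}$ as in \eqref{rate} exist lies in $\cR_t^A$, and a converse that extracts such variables (obeying \eqref{cardinality}) from any code in an achieving sequence. For achievability I would realize the protocol as a cascade of $t$ conditional Wyner--Ziv (covering-plus-binning) layers. Assume the first $j-1$ rounds have produced sequences $U^{1:j-1,1:N}$ held by both terminals and jointly typical with $(X^{1:N},Y^{1:N})$, and let $j$ be odd (Terminal A speaks). Fix the conditional law $p_{U^j\mid X,U^{1:j-1}}$ supplied by \eqref{rate}, which satisfies $U^j\to(X,U^{1:j-1})\to Y$; for each typical value of $U^{1:j-1,1:N}$ generate a codebook of about $2^{NI(X;U^j\mid U^{1:j-1})}$ sequences drawn i.i.d.\ from $p_{U^j\mid U^{1:j-1}}$ and partition it into about $2^{NI(X;U^j\mid Y,U^{1:j-1})}$ bins. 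By the covering lemma Terminal A finds a codeword $U^{j,1:N}$ jointly typical with $(X^{1:N},U^{1:j-1,1:N})$ and transmits its bin index, which costs rate $I(X;U^j\mid Y,U^{1:j-1})+o(1)\le R_j+\epsilon$ by \eqref{rate}; using $(Y^{1:N},U^{1:j-1,1:N})$, the Markov chain, and the identity $I(X;U^j\mid U^{1:j-1})-I(Y;U^j\mid U^{1:j-1})=I(X;U^j\mid Y,U^{1:j-1})$, Terminal B recovers $U^{j,1:N}$ as the unique bin member jointly typical with its own data. Even rounds are identical with $(X,Y)$ and $(A,B)$ swapped. Conditioning on a good-codebook event and union-bounding the $O(t)$ decoding-error events, after $t$ rounds both terminals hold $U^{1:t,1:N}$ jointly typical with their observations with probability $1-o(1)$; since $H(f_A(X,Y)\mid X,U^{1:t})=0$, the value of $f_A$ is a fixed function of $(X,U^{1:t})$ on the typical set, so Terminal A outputs that function coordinatewise and, by the conditional typicality lemma, matches $Z_A^{1:N}$ with high probability, and symmetrically for Terminal B. The care needed here is the bookkeeping of error events and of the conditioning on previously decoded layers so that the layers compose without loss.

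For the converse, take an achievable $\mathbf R$ and a corresponding sequence of codes and single-letterize in the style of the converses for multi-round interactive source coding (cf.\ \cite{Orlitsky01}). For odd $i$, $M_i$ is a deterministic function of $(X^{1:N},M^{1:i-1})$, so $N(R_i+\epsilon)\ge\log_2|\mathcal{M}_i|\ge H(M_i\mid M^{1:i-1},Y^{1:N})=I(X^{1:N};M_i\mid M^{1:i-1},Y^{1:N})$; expanding by the chain rule over the $N$ coordinates and regrouping with the Csisz\'{a}r sum identity leads one to an auxiliary variable $U^i(n)$ bundling $M^{1:i}$ with a suitable block of past/future coordinates of $X^{1:N}$ and $Y^{1:N}$. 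With a time-sharing index $Q$ uniform on $\{1,\dots,N\}$ and $U^i=(U^i(Q),Q)$, $X=X(Q)$, $Y=Y(Q)$, this gives $R_i+\epsilon\ge I(X;U^i\mid Y,U^{1:i-1})$; the memorylessness of $p_{XY}^{\otimes N}$ with the causal dependence of the messages gives the Markov chains $U^i\to(X,U^{1:i-1})\to Y$ for odd $i$ (and symmetrically for even $i$); and Fano's inequality applied to $\Pr(\hat Z_A^{1:N}\ne Z_A^{1:N})\le\epsilon$ gives $H(Z_A^{1:N}\mid X^{1:N},M^{1:t})\le N\eta(\epsilon)$ with $\eta(\epsilon)\to0$, which single-letterizes to $H(f_A(X,Y)\mid X,U^{1:t})\le\eta(\epsilon)$, and likewise for Terminal B. Letting $\epsilon\downarrow0$ and passing to the limit (using the cardinality bounds below to confine the auxiliary laws to a compact set) produces variables satisfying \eqref{rate} exactly. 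I expect the main obstacle to be precisely the definition of $U^i(n)$: the messages must be combined with just the right past/future source symbols so that the rate telescoping, \emph{all} Markov conditions across the $t$ rounds, and both function-recovery conditions hold simultaneously under one consistent choice.

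Finally, the cardinality bounds \eqref{cardinality} follow by applying the Fenchel--Eggleston--Carath\'{e}odory support lemma successively for $j=1,\dots,t$: when thinning the alphabet of $U^j$ one must preserve the joint pmf of $(X,U^{1:j-1})$ for odd $j$ (resp.\ $(Y,U^{1:j-1})$ for even $j$) together with the finitely many functionals of $p_{U^j\mid\cdot}$ that still occur in \eqref{rate} — the rate $R_j$, the $t-j$ later rate terms, and the two function-recovery entropies — and a careful accounting of these constraints yields exactly the stated bound. These bounds also make the feasible set of $p_{U^{1:t}\mid XY}$ compact, so the region cut out by \eqref{rate} is closed, which legitimizes the limit in the converse and completes the proof.
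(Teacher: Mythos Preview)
The paper does not contain a proof of this theorem: it is stated with the citation \cite{MaIshwar11} and used as a black box, so there is no ``paper's own proof'' to compare against. Your sketch---layered Wyner--Ziv random coding for achievability, single-letterization with a time-sharing variable and Fano for the converse, and the support lemma for the cardinality bounds---is the standard approach and is essentially what \cite{MaIshwar11} does.

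What the present paper \emph{does} prove is a constructive alternative to the achievability direction only: rather than random codebooks, covering, and binning, it builds an explicit polar-coding protocol (Sections~\ref{sect:firstround}--\ref{sect:strongtypicality}) in which each round realizes the conditional law $p_{U^j\mid X,U^{1:j-1}}$ (or $p_{U^j\mid Y,U^{1:j-1}}$) via successive random sampling on a three-part index partition, and the analysis shows that the induced joint law is $O(2^{-N^{\beta}})$-close in total variation to the target law $P_{\bfU^{1:t}\bar X\bar Y}$. This buys explicit encoders/decoders of complexity $O(N\log N)$ and an exponential error decay, at the cost of a substantially more delicate analysis (Lemmas~\ref{l1}, \ref{l2}, \ref{lemma:ml}) than the typicality bookkeeping you outline. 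The paper does not address the converse or the cardinality bounds at all; for those it simply defers to \cite{MaIshwar11}, so your converse and Carath\'eodory arguments have no counterpart here to compare with.
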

The conditions of entropy being equal to zero in this theorem simply reflect the fact that $f_A$ (or $f_B$) is a deterministic function of $X,U^{1:t}$ (or $Y,U^{1:t}$), and no additional randomness is involved in its evaluation. Finding the auxiliary random variables $U^1,U^2,\dots,U^t$ that satisfy the conditions of this theorem for a given pair of functions $f_A,f_B$ is a separate question which is addressed on a case-by-case basis.

Of course, the main question associated with this result, before we even try to construct an explicit scheme that aims at attaining this rate region, is whether  the
communication protocol implied by this theorem results in overall saving in communication compared to a straightforward transmission of $X$ to $B$ and $Y$ to $A.$
The answer is positive at least in some examples \cite{MaIshwar11}. We discuss one
of them below in this paper; see Sect.~\ref{sect:example}.

\subsection{Multiterminal collocated networks}
\label{sect:m_terminal}
Ma, Ishwar, and Gupta \cite{MaIshwar12} also considered a multiterminal extension of the problem described in the previous section.
To describe it, consider a network with $m$ source terminals and a single sink terminal.
Each source terminal $j$ observes a random sequence 
$(X^j)^{1:N}=(X^j(1),\dots,X^j(N))\in \cX_j^N, j=1,\dots,m$.
Unlike the two-terminal case, the sources are assumed to be independent, i.e., 
for any $i\in[N]$, the random variables $(X^1(i),X^2(i),\dots,X^m(i))$ satisfy
\be
{P}_{X^{1:m}}(x^{1:m})= \prod_{j=1}^m {P}_{X^j}(x^j).\nonumber
\ee

Let $f:{\mathcal X}_1 \times \dots \times{\mathcal X}_m \to {\mathcal Z}$ be the function
that the sink terminal aims to compute. In other words, the purpose of this terminal is
to compute the sequence $Z^{1:N}=(Z(1),\dots,Z(N)),$ where $Z(i)\triangleq f(X^1(i),X^2(i),\dots,X^m(i))$
is the $i^{\text{th}}$ coordinate of the function. 

We assume that communication is initiated by
Terminal $1$. The terminals take turns to broadcast messages in $t$ steps. Every broadcasted
message is recovered correctly by every terminal. Based on all the $t$ messages 
transmitted, the sink node computes $Z^{1:N}.$ If $t>m$, the communication is called interactive.

\begin{defn}
A $t$-message distributed source code in a collocated network with parameters
$(t,N,|{\mathcal M}_1|,\dots,|{\mathcal M}_t|)$ is a collection of $t$ encoding functions
$e_1,\dots,e_t$ and a decoding function $g$, where for every $i\in [t]$,
$j= (i-1)\, \text{\rm mod}\, m+1$
     $$
e_i:({\mathcal X}^j)^{N}\times \bigotimes_{l=1}^{i-1} {\mathcal M}_l \to {\mathcal M}_i,
\qquad g:\bigotimes_{l=1}^t {\mathcal M}_l \to {\mathcal Z}^{N}.
     $$
\end{defn}
The output of the encoder $e_i$ is called the $i^{\text{th}}$ message. The output of the decoder is denoted
by $\hat Z^{1:N}.$

\begin{defn}
A rate tuple ${\mathbf R}=(R_1,\dots,R_t)$ is achievable for $t$-round 
function computation in a collocated network if for all $\epsilon >0$ there exists $N(\epsilon,t)$
such that for every $N >N(\epsilon,t)$, there exists a $t$-message distributed source code 
with the parameters $(t,N,|{\mathcal M}_1|,\dots,|{\mathcal M}_t|)$ such that
    \begin{align*}
\frac{1}{N}\log_2 |{\mathcal M}_i|& \leq R_i+\epsilon,\quad i=1,\dots,t,\\
{P}(Z^{1:N}\ne \hat Z^{1:N})&\leq\epsilon.
    \end{align*}
\end{defn}
The set of all achievable rate tuples is denoted by ${\mathcal R}_t$.

\begin{theorem} \label{region2}{\cite{MaIshwar12}} For $i=1,\dots, t$ let 
  \begin{equation}
  D_i=\{R_i: R_i\ge I(X^j;U^i|U^{1:i-1}) \text{ for all } j=(i-1)\, \text{\rm mod}\, m+1\}.\label{rate1point5}
  \end{equation}
For all $t\in {\mathbb N}$, we have
    \begin{equation}\label{rate2}
{\mathcal R}_t=\hspace*{-.1in}\bigcup_{ P_{U^{1:t}|X^{1:m}} }\hspace*{-.1in}
         \{ \bfR=(R_1,\dots,R_t)| R_i\in D_i, i\in[t]\}
   \end{equation}
where the union is over the distributions $P_{U^{1:t}|X^{1:m}}$ that
satisfy the following conditions:

(i) $H(f(X^{1:m})|U^{1:t})=0$, 

(ii) For every $i\in[t],  j= (i-1)\, \text{\rm mod}\, m+1$,
$U^i\rightarrow(U^{1:i-1},X^j)\rightarrow(X^{1:j-1}, X^{j+1:m})$ is a Markov chain;

 (iii) The cardinalities
of the alphabets of the auxiliary random variables $U^{1:t}$ are bounded above as in
(\ref{cardinality}).
\end{theorem}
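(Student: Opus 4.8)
The plan is to establish Theorem~\ref{region2} by the standard two-part information-theoretic argument — a direct (achievability) part and a converse — followed by a cardinality reduction for the auxiliary alphabets. Throughout write $j_i=(i-1)\bmod m+1$ for the terminal that broadcasts in round $i$.

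\textbf{Achievability.} Fix a conditional law $P_{U^{1:t}|X^{1:m}}$ meeting conditions (i)--(iii) and construct a code by cascading $t$ covering (lossy source coding) steps. Inductively assume that before round $i$ every terminal shares the same block $(U^{1:i-1})^{1:N}$ and that $((X^{1:m})^{1:N},(U^{1:i-1})^{1:N})$ is jointly typical. In round $i$, generate a codebook of $\approx 2^{N(I(X^{j_i};U^i|U^{1:i-1})+\delta)}$ sequences $(U^i)^{1:N}$ drawn i.i.d.\ from $P_{U^i|U^{1:i-1}}$ conditioned on the shared $(U^{1:i-1})^{1:N}$; terminal $j_i$, which also observes $(X^{j_i})^{1:N}$, broadcasts the index of one codeword jointly typical with $((X^{j_i})^{1:N},(U^{1:i-1})^{1:N})$. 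The covering lemma produces such a codeword with probability $\to 1$, and since the sink observes no source the index must be transmitted in full — there is no Wyner-Ziv binning gain — so the rate is $I(X^{j_i};U^i|U^{1:i-1})+\delta$, which lies in $D_i$. The Markov lemma, invoked with the Markov chain (ii), then gives that $((X^{1:m})^{1:N},(U^{1:i})^{1:N})$ is again jointly typical, closing the induction. After round $t$ every terminal — in particular the sink — knows $(U^{1:t})^{1:N}$ exactly; using condition (i) to obtain a deterministic map $\phi$ with $f(X^{1:m})=\phi(U^{1:t})$ almost surely, the sink outputs $\hat Z(n)=\phi(U^{1:t}(n))$, and joint typicality forces $\hat Z^{1:N}=Z^{1:N}$ outside the (vanishing) union of the atypicality, covering-failure, and Markov-lemma-failure events.

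\textbf{Converse.} Given a sequence of codes with vanishing error and $\frac1N\log|{\mathcal M}_i|\le R_i+\epsilon$, bound for each $i$
\[
N(R_i+\epsilon)\;\ge\;\log|{\mathcal M}_i|\;\ge\;H(M_i\mid M^{1:i-1})\;\ge\;I(M_i;(X^{j_i})^{1:N}\mid M^{1:i-1}),
\]
and expand the last term coordinatewise. Single-letterize by introducing a time-sharing index $Q$ uniform on $[N]$ and independent of everything else, and setting $U^i=(M_i,G_Q,Q)$, where $G_n$ is a ``partial genie'' built from past and/or future coordinates of the source blocks and of the function values $Z^{1:N}$, chosen so that: (a) the telescoped sum is at least $N\,I(X^{j_i}(Q);U^i\mid U^{1:i-1})$; (b) the Markov relations (ii), $U^i\to(U^{1:i-1},X^{j_i})\to(X^{1:j_i-1},X^{j_i+1:m})$, hold — here the mutual independence of $X^1,\dots,X^m$ and the fact that $M_i$ is a function of $(X^{j_i})^{1:N}$ and $M^{1:i-1}$ alone are what make the genie legitimate; and (c) $H(f(X^{1:m}(Q))\mid U^{1:t})$ is small, which follows from Fano's inequality applied to $P(\hat Z^{1:N}\ne Z^{1:N})\le\epsilon$ with $\hat Z^{1:N}=g(M^{1:t})$, provided $G_n$ includes $Z^{1:n-1}$ (or $Z^{n+1:N}$). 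Writing $X^{j_i}:=X^{j_i}(Q)$ and letting $\epsilon\to0$, $N\to\infty$, a subsequential limit of the joint laws (extracted using the uniform cardinality bounds below and continuity of entropy) satisfies (i) with equality and $R_i\ge I(X^{j_i};U^i|U^{1:i-1})$, so $\bfR$ belongs to the claimed union.

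\textbf{Cardinality.} The bounds (iii) follow from the support (Fenchel-Eggleston-Carath\'eodory) lemma applied to the auxiliaries one at a time: with $P_{X^{1:m},U^{1:i-1}}$ held fixed, the alphabet of $U^i$ can be pruned while preserving the law of $(X^{j_i},U^{1:i-1})$, the conditional mutual informations relevant to rounds $\ge i$, the Markov relations, and $H(f|U^{1:t})$; counting the functionals that must be retained yields the expression in (\ref{cardinality}). I expect the real obstacle to be part (b) of the converse — choosing the genie $G_n$ so that the Markov structure (ii), the rate bound, and the zero-entropy condition are honoured \emph{simultaneously} across the $t$ interleaved broadcast rounds; the independence of the sources is precisely what keeps the correlations created by conditioning on $M^{1:i-1}$ under control, and it is also what separates this argument from the two-terminal converse underlying Theorem~\ref{region}.
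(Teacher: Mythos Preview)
This theorem is not proved in the present paper: it is quoted from \cite{MaIshwar12} and used as input. The paper's own contribution concerning this region is Section~\ref{sect:networks}, which gives a constructive polar-coding scheme attaining the achievability direction; it does not touch the converse or the cardinality bounds at all. Consequently there is no ``paper's own proof'' of Theorem~\ref{region2} to compare your proposal against.

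For what it is worth, your achievability sketch (random conditional codebooks, covering lemma, Markov lemma) is the standard route and is entirely different in spirit from what the paper actually does in Section~\ref{sect:networks}: there the scheme is explicit, the encoding is by successive random sampling along the partition ${\mathcal F}_r,{\mathcal F}_d,{\mathcal I}$ of \eqref{partition3}--\eqref{partition4}, and the analysis rests on total-variation closeness to the ideal law (Lemma~\ref{l4}) rather than on typicality. Your converse outline is plausible but, as you yourself note, the simultaneous genie construction ensuring the Markov chains (ii), the rate bounds, and the zero-entropy condition is left unspecified; filling it in would require going to \cite{MaIshwar12}.
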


%\vspace*{.1in}
A polar-coded scheme that attains this rate region is presented in Sect.~\ref{sect:networks}.

\section{Preliminaries on polar coding}
We begin with recalling basic notation for polar codes. 
For a binary random variable $T$ and a discrete random variable $V$ supported on ${\mathcal V}$
define the Bhattacharyya parameter as follows:
   $$
Z(T|V)=2\sum_{v\in{\mathcal V}} P_V(v) \sqrt{{P_{T|V}(0|v) P_{T|V}(1|v)}}.
  $$
If $P_T(0)=P_T(1)=1/2,$ then this definition coincides with the usual definition of the Bhattacharyya parameter
for the communication channel $T\to V.$ The value $Z(T|V), 0\le Z(T|V)\le 1$ measures the amount of randomness in $T$ given $V$ in the sense
that if it is close to zero, then $T$ is almost constant, while if it is close to one, then $T$ is almost uniform in $\{0,1\}.$

For $N=2^n$ and $n\in{\mathbb N}$, the polarizing matrix (or the Ar{\i}kan transform matrix)
is defined as $G_N=B_N F^{\otimes n}$, where $F=\text{\small{$\Big(\hspace*{-.05in}\begin{array}{c@{\hspace*{0.05in}}c}
    1&0\\[-.05in]1&1\end{array}\hspace*{-.05in}\Big)$}}$,
$\otimes$ is the Kronecker product of matrices, and $B_N$ is a 	``bit reversal'' permutation 
matrix \cite{ari09}. In his landmark paper \cite{ari09}, Ar{\i}kan showed that given a binary
channel $W$, an appropriate subset of the rows of $G_N$ can be used as a generator matrix 
of a linear code that approaches the symmetric capacity of $W$ as $N\to\infty$.

\subsection{Source coding}
Let $X$ be a binary memoryless source, let $X^{1:N}$ denote $N$ independent copies of $X$, and let $U^{1:N}=X^{1:N}G_N$. Define subsets 
$\cH_{X}=\cH_{X,N}$ and $\cL_X=\cL_{X,N}$ of $[N]$ as follows:
\begin{equation}\begin{aligned}
{\mathcal H}_X&=\{i\in [N]: Z(U^i|U^{1:i-1})\geq 1-\delta_N\} \\
{\mathcal L}_X&=\{i\in [N]: Z(U^i|U^{1:i-1})\leq \delta_N\}
\end{aligned} \label{eq:L}
\end{equation}
where $\delta_N\triangleq2^{-N^{\beta}}$, $\beta\in (0,1/2)$. (The choice of this particular value of $\delta_N$
is related to the convergence rate of the polarizing process \cite{ari09a}.) Note that each bit $U_i,i\in\cL_X$
is nearly deterministic given the values $U^{1:i-1}$, while the bits in $\cH_X$ are nearly uniformly
random. As shown in \cite{Arikan2010}, the proportion of indices $i\in [N]$ that are contained in $\cH_X$ approaches
$H(X),$ and the proportion of bits that are not polarized (i.e., are in $(\cH_X\cup\cL_X)^c$) behaves as $o(N).$ 
Therefore, as $N\to\infty,$ the source sequence $x^{1:N}$ can be recovered with high probability from $N H(X)$ bits in $\cH_X.$

Suppose further that there is a random variable $Y$ with a joint distribution $P_{XY}$ with the source ($Y$ is often called the side information
about $X$). Similarly to \eqref{eq:L} define
\begin{equation}\begin{aligned}
&{\mathcal H}_{X|Y}=\{i\in [N]: Z(U^i|U^{1:i-1},Y^{1:N})\geq 1-\delta_N\} \\
&{\mathcal L}_{X|Y}=\{i\in [N]: Z(U^i|U^{1:i-1},Y^{1:N})\leq \delta_N\}.
\end{aligned} \label{eq:LY}
\end{equation}
Suppose again that the polarizing transformation is applied to $X^{1:N}.$ It can be shown that \cite{Arikan2010,kor09a} 
\begin{align*}
&\lim_{N\to\infty} \frac{1}{N}|{\mathcal H}_{X|Y}|= H(X|Y)\\
&\lim_{N\to\infty} \frac{1}{N}|{\mathcal L}_{X|Y}|=1-H(X|Y).
\end{align*}
In other words, using polarization the source can be compressed to $N H(X|Y)$ bits. This setting is useful, for instance,
in distributed lossless compression where the correlation between the observations of two terminals plays the role of the side information.
Note that $Z(U_i|U^{1:i-1},Y^{1:N})\le Z(U_i|U^{1:i-1})$ and therefore,
  \begin{equation}\begin{aligned}\label{eq:ZZ}
    \cH_{X|Y}& \subseteq \cH_X\\
    \cL_{X}  &\subseteq \cL_{X|Y}.
    \end{aligned}
  \end{equation}

\subsection{Channel coding}

Let $W(Y|X)$ be a binary-input discrete memoryless channel with capacity achieving distribution
$P_X$. In the case of uniform $P_X$, \cite{ari09a} showed that $n$ iterations of the transform with kernel $F$
polarize the transmitted bits into an almost deterministic subset $N_d\subset [N]$ and an almost random subset
$N_r\subset[N]$ so that $|N_d|\to N I(X;Y)$ as $n\to\infty$. This construction was extended in \cite{Honda13} to cover the case of arbitrary distributions $P_X$ 
(see also a discussion of this construction in  \cite{Mondelli14}). They observed that
if the bits in $\cH_X\backslash\cL_{X|Y}$ are known to the decoder, the remaining bits are likely to be contained
in $\cL_{X|Y},$ and can be recovered correctly with high probability from the channel output $y^{1:N}$ and previously found bits using the successive decoding procedure.  
This shows that the number of bits that carry information equals $N(H(X)-H(X|Y))=NI(X;Y).$ We refer to \cite{Honda13} for further details.

\remove{To formalize this observation, consider the following partition of the set $[N]:$
   \begin{equation}\label{part}
\left.
\begin{array}{l}
{\mathcal F}_r={\mathcal H}_X \cap {\mathcal L}^c_{X|Y}\\
{\mathcal F}_d={\mathcal H}_X^c\\[.05in]
{\mathcal I}={\mathcal H}_X \cap {\mathcal L}_{X|Y}
\end{array}\right\}
    \end{equation}
where the superscript $^c$ refers to the complement of the subset in $[N].$
Since ${\mathcal H}_{X|Y}\subseteq {\mathcal H}_X$ and the number
of indices that are neither in ${\mathcal H}_{X|Y}$ nor
in ${\mathcal L}_{X|Y}$ is $o(N)$, we have
$ %\begin{equation}
\lim_{N\to\infty} \frac{1}{N} |{\mathcal I}|=I(X;Y).
$%\end{equation}

We assume that the encoder and the decoder can access the joint distribution
\begin{align}
&{P}_{V^{1:N}X^{1:N}Y^{1:N}}(v^{1:N},x^{1:N},y^{1:N})={\mathbbm 1}(x^{1:N}G_N=v^{1:N})\prod_{i=1}^N P_X(x_i)W(y_i|x_i)\label{dist_first}
\end{align}
as well as the partition (\ref{part}). Based on this data it is possible to perform reliable transmission and decoding of the
information at rates close to the channel capacity.} 

\remove{
The encoder constructs the vector $v^{1:N}$ as follows. 
The values $v_i,i\in \cF_r$ and $v_i,i\in{\mathcal F}_d$ are computed in a deterministic way according to  
the rule $\lambda_i(v^{1:i-1})$.

The values $v_i,i\in \cF_r$ are taken to be independent samples of a Bernoulli-$(1/2)$ random variable. This sequence
is generated only once and is independent of the message. It is assumed to be available to the receiver.
The message bits are written in the coordinates $v_i,i\in\cI.$ 
 Finally, the bits $v_i,i\in{\mathcal F}_d$
are computed in a deterministic way according to  
   $$
   \argmax_{v\in\{0,1\}} {P}_{V^i|V^{1:i-1}}(v|v^{1:i-1}).
   $$
 
Then, having calculated $v^{1:N}$, the encoder computes $x^{1:N}=v^{1:N} G_N^{-1}=v^{1:N} G_N$
and transmits $x^{1:N}$ over the channel. \remove{Note that the way $v^{1:N}$ is
generated ensures that $v^{1:N}$ will indeed have the same distribution
as the one induced by (\ref{dist_first}) as $N\to\infty$.} 
Moreover, by the above discussion the transmission rate tends
to $I(X;Y)$ as blocklength goes to infinity, as desired.

After receiving $y^{1:N}$, the decoder forms an estimate $\hat v^{1:N}$ as follows:

  \begin{align*}
\hat{v}^i=\begin{cases}
v^i, & i\in{\mathcal F}_r\\
\argmax_{v\in\{0,1\}} {P}_{V^i|V^{1:i-1}}(v|v^{1:i-1}), & i\in{\mathcal F}_d\\
\argmax_{v\in\{0,1\}} {P}_{V^i|V^{1:i-1}Y^{1:N}}(v|v^{1:i-1},y^{1:N}), &\quad i\in {\mathcal I}.
\end{cases}
   \end{align*}

  \begin{align*}
\hat{v}^i=\begin{cases}
 \lambda_i(v^{1:i-1}), & i\in {\mathcal F}_r \cup {\mathcal F}_d  \\
\argmax_{v\in\{0,1\}} {P}_{V^i|V^{1:i-1}Y^{1:N}}(v|v^{1:i-1},y^{1:N}), \hspace*{-.1in}&
 i\in {\mathcal I}.
\end{cases}
   \end{align*}
%, as explained in \cite{sasoglu11},
It is shown in \cite{Honda13} that there exists a family of deterministic rules $\{\lambda_i, i\in {\mathcal F}_r \cup {\mathcal F}_d\}$ 
such that     
   $$
P_e\leq \sum_{i\in {\mathcal I}} Z(V^i|V^{1:i-1},Y^{1:N})=O(2^{-N^{\beta}})
    $$
where the bound $Z(V^i|V^{1:i-1},Y^{1:N})=O(2^{-N^{\beta}}), i\in {\mathcal I}$ follows
from \cite{ari09a}, and the absolute constants under big-$O$ are independent of $N$.
}

\section{The Analysis of Polar Codes for Interactive Function Computation Problem}
\label{sect:mainpart}
In this section we show that the rate region (\ref{rate}) of the two-terminal
function computation is achievable via polar coding. The overall idea is to transmit the value of the auxiliary 
random variables $U^i$ in their respective rounds of communication; see Theorem \ref{region}. This is done interactively by alternating the roles of
the transmitter and the receiver between the terminals. Upon completion of the communication, both terminals
have the realizations of the $U^i$s, and their respective values coincide with high probability. The random
variables associated with these realizations are denoted by $U_A^i$ and $U_B^i$ below.
Once the desired properties of these random variables are established, the actual function computation is accomplished
relying on the conditional entropy constraints in \eqref{rate}.

In the first part of our presentation (Sections
\ref{sect:firstround} and \ref{sect:analysis1}), we describe and analyze the first round of communication between the
terminals. As already mentioned, we will need to show that the joint distributions of $U_A^1$ and the observations of the
terminals given by the random variables $X,Y$ are close to the ideal distribution $P_{(U^1)^{1:N},X^{1:N},Y^{1:N}}$.
The reason that this needs to be proved for the transmitter terminal (Terminal A in Round 1) is discussed in the Introduction in general
terms. In greater detail, it stems from the fact that, apart from the data bits, we also have a subblock of low-entropy (nearly deterministic) bits encoded into the vector $U_A^1.$ This entails the need for a careful analysis of the empirical probability distribution, which is performed in Lemma \ref{l1}.  

Once this is accomplished, we move to the analysis of the data received by Terminal B. We need to show that its version of the
realization of $U^1$ equals $U_A^1$ with high probability. To prove this, we would like to make use of the proximity of 
the joint statistics to the ideal distribution, but this fact itself requires a proof. Thus, we are faced with proving two concurrent
and mutually interdependent estimates. This question is resolved by an induction argument that gets rather technical and relies on delicate
estimates of the distance between various distributions and on Markov chain conditions. This argument forms the contents of Lemma \ref{l2} below.
%We note that a similar (but easier) version of this general question was recently analyzed in \cite{ye15}.

The next step is to generalize the claim for Round 1 to multiple rounds.
This part is relatively easier, but still new to the analysis of polar codes because of accounting for multiple
Markov chain conditions. It is contained in Sect.~\ref{sect:multiround}.
To conclude the proof, we show in Sect.~\ref{sect:strongtypicality} that each terminal  correctly
computes its function value with a probability converging to 1.
%This final part of the proof involves tedious probability theoretic arguments.

Let $U^1,\dots,U^t$ be random variables that satisfy the Markov chain conditions and conditional entropy conditions 
of Theorem~\ref{region}. 
Throughout the section,
${P}_{XYU^1}$ and ${P}_{XYU^{1:t}}$ refer to the joint distribution of the random variables 
$X,Y,U^1$ and $X,Y,U^1,\dots,U^t,$ respectively. 
We will also assume that all the random variables $U^1,\dots, U^t$
are binary. Generalizations to the case of a nonbinary alphabet can be easily accomplished using
a multitude of methods available in the literature. 
Finally, to simplify the notation, in this section we use $\bar{\phantom{x}}$ to refer to 
$N$-vectors: for instance, $\bar X=X^{1:N}, \bar u^1_A=(u^1_A)^{1:N},$ etc. For vectors of other dimensions we retain the original 
notation, e.g., $(u^1_A)^{1:i}=(u^1_A(1),\dots,u^1_A(i)),$ etc.

\subsection{First round of communication}
            \label{sect:firstround}
We begin with a detailed discussion of 
the first round of communication, i.e., the round in which $A$ transmits to $B$ a message from its
set of $2^{N R_1}$ messages. We begin with a detailed discussion of 
the first round of communication, i.e., the round in which $A$ transmits to $B$ a message from its
set of $2^{N R_1}$ messages. 

Consider the joint distribution
\begin{align*}
&{P}_{{\bar V^1}{\bar X}{\bar Y}{\bar U^1}}({\bar v^1},{\bar x},{\bar y},{\bar u^1})
={\mathbbm 1}({\bar u^1}G_N={\bar v^1})\prod_{i=1}^N 
{P}_{XYU^1}(x^i,y^i,(u^1)^i).
\end{align*}
Various marginal and conditional distributions used below, denoted by $P$, are assumed to be implied by this expression.
The purpose of the first round of communication is
to make it possible for both terminals to generate the random vector ${\bar U^1}$ 
so that the joint distribution of ${\bar U^1}, {\bar X}, {\bar Y}$ is close to $P_{{\bar U^1}{\bar X}{\bar Y}}$.

\remove{In similarity to the channel coding construction \eqref{part}, we could partition $[N]$ as follows:
  \begin{equation}\label{partition}
  \left.
\begin{array}{l}
{\mathcal F}_r={\mathcal H}_{U^1} \cap {\mathcal L}^c_{U^1|X}\\[.05in]
{\mathcal F}_d={\mathcal H}_{U^1}^c\\[.05in]
{\mathcal I}={\mathcal H}_{U^1} \cap {\mathcal L}_{U^1|X}
\end{array}\right\}
    \end{equation}
    where the notation is self-explanatory.
However,
The discrepancy between the probability distributions is easier to estimate if instead of (\ref{partition}) }
Consider the following partition:
 \begin{equation}\label{partition2}
 \left.\begin{array}{l}
{\mathcal F}_r={\mathcal L}^c_{U^1} \cap {\mathcal H}_{U^1|X}\\[.05in]
{\mathcal F}_d={\mathcal L}_{U^1}\\[.05in]
{\mathcal I}={\mathcal L}^c_{U^1} \cap {\mathcal H}^c_{U^1|X}.
\end{array}\right\}
    \end{equation}
{\em Remark:} For readers familiar with \cite{Honda13} we note that this partition, while inspired by this paper, is different
from the one used in it. Our choice is better suited for the analysis of joint statistics of the observations and the auxiliary random variables that arise in the present study.
%The blocks of the partitions \eqref{partition} and \eqref{partition2} differ by $o(N)$ coordinates; however we find this choice better suited 

\vspace*{.1in}
{\em Round 1:} The transmission scheme in the first round of communication pursues the goal
of sharing the sequence ${\bar u^1}$ between the two terminals. This goal is accomplished using
the following procedure.
Given ${\bar x}$, Terminal A computes the sequence ${\bar v^1_A}$ in
a successive fashion %\footnote{By abuse of notation, below we drop the subscript $A$.} 
 by sampling from the conditional distribution  %\marginpar{\scriptsize For consistency, we should be using the subscript A.}
\begin{align}
&{Q}_{(V^1_A)^i|(V^1_A)^{1:i-1}{\bar X}}((v^1_A)^i|(v^1_A)^{1:i-1},{\bar x})=
\begin{cases}
1/2 , &i\in {\mathcal F}_r\\
{P}_{(V^1)^i|(V^1)^{1:i-1}}((v^1_A)^i|(v^1_A)^{1:i-1}), &i\in {\mathcal F}_d\\
{P}_{(V^1)^i|(V^1)^{1:i-1}{\bar X}}((v^1_A)^i|(v^1_A)^{1:i-1},{\bar x}),  &i\in {\mathcal I}.
\end{cases}
\label{qdist}
\end{align}
%{It is precisely this assignment that suggests replacing the partition \eqref{partition} with \eqref{partition2}:
%with the former choice it becomes difficult to estimate the distance between the distributions for the subset $\cF_d.$}

Once ${\bar v^1_A}$ is found, Terminal A transmits $(v^1_A)^i$ to Terminal B. Note that the bits in the subset
${\mathcal L}_{U^1}^c \cap {\mathcal L}_{U^1|Y}$ can be recovered by $B$ with high probability based on its own observations. For
this reason, A transmits only the subvector of ${\bar v^1_A}$ whose coordinate indices satisfy
   \begin{equation}\label{eq:I'}
i\in {\mathcal I}'\triangleq {\mathcal I}\backslash({\mathcal L}_{U^1}^c \cap {\mathcal L}_{U^1|Y}).
    \end{equation}
After observing ${\bar y}$ and receiving $(v^1_A)^i, i\in {\mathcal I}'$ from Terminal A, 
Terminal B calculates $(v^1_B)^i, i\in({\mathcal I}')^c$ in a probabilistic way by sampling from 
the distribution
\begin{align}
{Q}_{(V^1_B)^i|(V^1_B)^{1:i-1}{\bar Y}}((v^1_B)^i|(v^1_B)^{1:i-1},{\bar y})=
\begin{cases}
1/2, &i\in {\mathcal F}_r\\
{P}_{(V^1)^i|(V^1)^{1:i-1}}((v^1_B)^i|(v^1_B)^{1:i-1}), &i\in {\mathcal F}_d\\
{P}_{(V^1)^i|(V^1)^{1:i-1}{\bar Y}}((v^1_B)^i|(v^1_B)^{1:i-1},{\bar y}), &i\in {\mathcal I}\backslash{\mathcal I}'.
\end{cases}
\label{qdist2}
\end{align}   
Since $(v^1_B)^i=(v^1_A)^i$ for all $i\in{\mathcal I}'$, Terminal B can form the sequence $\bar v_B.$ It then computes
${\bar u^1_B}$ by performing the multiplication ${\bar u^1_B}={\bar v^1_B} G_N$. Terminal A also computes its version
of the sequence ${\bar u^1}$ by finding ${\bar u^1_A}={\bar v^1_A}G_N.$

\subsection{Analysis of the first round of communication}\label{sect:analysis1}
First let us show that the rate of the first round of communication approaches the limiting value given in 
Theorem \ref{region}.  

%The analysis of this procedure is performed in the remaining part of this section. We will see
%that $\Pr( u_1^{1:N}\ne u_B^{1:N})\le 2^{-N\beta'},$ where $\beta'$ is a constant that depends on $\beta.$

\vspace*{.1in}

\begin{lemma}
The rate of the first round of communication tends to $I(X;U^1|Y)$ as $N$ goes to infinity.
\end{lemma}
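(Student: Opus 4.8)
The rate of the first round is $\tfrac1N|\mathcal I'|$, where $\mathcal I' = \mathcal I \setminus (\mathcal L_{U^1}^c \cap \mathcal L_{U^1|Y})$ and $\mathcal I = \mathcal L_{U^1}^c \cap \mathcal H_{U^1|X}^c$. The plan is to compute the limiting density of each of the three sets $\mathcal L_{U^1}$, $\mathcal H_{U^1|X}$, $\mathcal L_{U^1|Y}$, then combine them using the inclusion relations \eqref{eq:ZZ}. First I would record the standard polarization facts applied to the source $V^1 = (U^1)^{1:N}G_N$: by \cite{Arikan2010} the density $\tfrac1N|\mathcal H_{U^1}|\to H(U^1)$, and since the non-polarized indices form an $o(N)$ fraction, $\tfrac1N|\mathcal L_{U^1}|\to 1-H(U^1)$, hence $\tfrac1N|\mathcal L_{U^1}^c|\to H(U^1)$. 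Applying the side-information version of these statements (with $X$ playing the role of side information), $\tfrac1N|\mathcal H_{U^1|X}|\to H(U^1|X)$, so $\tfrac1N|\mathcal H_{U^1|X}^c|\to 1-H(U^1|X)$; likewise $\tfrac1N|\mathcal L_{U^1|Y}|\to 1-H(U^1|Y)$.

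Next I would pin down $\tfrac1N|\mathcal I|$. Since $\mathcal H_{U^1|X}\subseteq\mathcal H_{U^1}$ (the conditional Bhattacharyya parameter can only decrease when conditioning on more, as noted before \eqref{eq:ZZ}), we have $\mathcal H_{U^1|X}^c\supseteq\mathcal H_{U^1}^c\supseteq\mathcal L_{U^1}$, so $\mathcal L_{U^1}^c\cap\mathcal H_{U^1|X}^c = \mathcal H_{U^1|X}^c\setminus\mathcal L_{U^1}$, giving
\[
\tfrac1N|\mathcal I| \;=\; \tfrac1N|\mathcal H_{U^1|X}^c| - \tfrac1N|\mathcal L_{U^1}| \;\longrightarrow\; \big(1-H(U^1|X)\big) - \big(1-H(U^1)\big) \;=\; I(X;U^1).
\]
(Equivalently one invokes the $o(N)$-fraction of non-polarized indices to write $\mathcal I$ up to $o(N)$ as $\mathcal H_{U^1}\cap\mathcal L_{U^1|X}$ and gets the same value.) Then I would handle the part removed from $\mathcal I$: the set $\mathcal L_{U^1}^c\cap\mathcal L_{U^1|Y}$ has density $\to H(U^1)-H(U^1|Y) = I(Y;U^1)$ by the same subtraction argument, using $\mathcal L_{U^1}\subseteq\mathcal L_{U^1|Y}$ from \eqref{eq:ZZ}. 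Moreover this set is contained in $\mathcal I$ up to an $o(N)$ fraction: indeed $\mathcal L_{U^1}^c\cap\mathcal L_{U^1|Y}\subseteq \mathcal L_{U^1}^c$, and an index in $\mathcal L_{U^1|Y}$ that is not in $\mathcal H_{U^1|X}^c$ would have to lie in $\mathcal H_{U^1|X}$, i.e., simultaneously be nearly deterministic given $Y$ and nearly uniform given $X$; by the Markov chain $U^1\to X\to Y$ one has $Z(U^i|U^{1:i-1},\bar Y)\ge$ something comparable to $Z(U^i|U^{1:i-1},\bar X)$ only after a degradation argument — more simply, $\mathcal H_{U^1|X}\subseteq\mathcal H_{U^1|XY}\subseteq\mathcal H_{U^1|Y}$ fails in general, so instead I would argue directly that $|\mathcal L_{U^1|Y}\cap\mathcal H_{U^1|X}| = o(N)$ because $|\mathcal L_{U^1|Y}| + |\mathcal H_{U^1|X}| = (1-H(U^1|Y)) N + H(U^1|X)N + o(N) \le N + o(N)$ only bounds the overlap trivially; the cleaner route is $\mathcal L_{U^1|Y}\setminus\mathcal L_{U^1} \subseteq \mathcal L_{U^1|Y}\setminus\mathcal H_{U^1}^c = \mathcal L_{U^1|Y}\cap\mathcal H_{U^1}$, and combined with $\mathcal H_{U^1|X}\subseteq\mathcal H_{U^1}$ one still needs the overlap of $\mathcal L_{U^1|Y}$ with $\mathcal H_{U^1|X}$ to be small. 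I would establish this via the counting identity: $\tfrac1N|\mathcal L_{U^1}^c\cap\mathcal L_{U^1|Y}\cap\mathcal H_{U^1|X}| \le \tfrac1N|\mathcal L_{U^1|Y}| - \tfrac1N|\mathcal L_{U^1|Y}\cap\mathcal H_{U^1|X}^c|$ and the fact, already used, that $\mathcal L_{U^1|Y}$ minus its polarized-under-$X$ part is $o(N)$ — this is where the Markov relation $U^1\to X\to Y$ enters, ensuring $\mathcal L_{X\text{-conditioned}}$ refines appropriately.

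Putting the pieces together, $\tfrac1N|\mathcal I'| = \tfrac1N|\mathcal I| - \tfrac1N|\mathcal I\cap(\mathcal L_{U^1}^c\cap\mathcal L_{U^1|Y})| \to I(X;U^1) - I(Y;U^1)$. Finally, invoking the Markov chain $U^1\to X\to Y$ (so that $I(X;U^1) - I(Y;U^1) = I(X;U^1|Y)$ by the chain rule $I(XY;U^1) = I(Y;U^1)+I(X;U^1|Y) = I(X;U^1)+I(Y;U^1|X) = I(X;U^1)+0$), we conclude the rate tends to $I(X;U^1|Y)$, matching \eqref{rate} for $i=1$. The main obstacle I anticipate is the bookkeeping showing that $\mathcal L_{U^1}^c\cap\mathcal L_{U^1|Y}$ is, up to $o(N)$ indices, a subset of $\mathcal I$ — i.e., that removing it from $\mathcal I$ genuinely subtracts its full density rather than something smaller — which is exactly the place where the Markov condition $U^1\to X\to Y$ is needed to control the interaction between the $X$-conditioned and $Y$-conditioned polarized sets; everything else is a routine assembly of the limiting densities from \cite{Arikan2010,kor09a,Honda13}.
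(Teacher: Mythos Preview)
Your overall plan matches the paper's: compute $\tfrac1N|\mathcal I|\to I(X;U^1)$, compute $\tfrac1N|\mathcal L_{U^1}^c\cap\mathcal L_{U^1|Y}|\to I(Y;U^1)$, subtract, and use the Markov chain to convert $I(X;U^1)-I(Y;U^1)$ into $I(X;U^1|Y)$. That part is fine.

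The gap is exactly the step you flag as ``the main obstacle'': showing that $\mathcal L_{U^1}^c\cap\mathcal L_{U^1|Y}$ sits inside $\mathcal I$ so that the subtraction is legitimate. You circle around the right idea (degradation) but then abandon it and try several counting arguments that do not close. In fact the inclusion holds \emph{exactly}, not merely up to $o(N)$, and the argument is short. The Markov chain $U^1\to X\to Y$ means the ``channel'' from $U^1$ to $Y$ is a degraded version of the channel from $U^1$ to $X$; degradation is preserved under polarization, so for every $i$ one has $Z((V^1)^i\mid (V^1)^{1:i-1},\bar Y)\ge Z((V^1)^i\mid (V^1)^{1:i-1},\bar X)$. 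Hence $\mathcal L_{U^1|Y}\subseteq\mathcal L_{U^1|X}$ (this is Lemma~4.7 in \cite{kor09a}, which the paper cites). Since $\mathcal L_{U^1|X}$ and $\mathcal H_{U^1|X}$ are disjoint, $\mathcal L_{U^1|X}\subseteq\mathcal H_{U^1|X}^c$, and therefore
\[
\mathcal L_{U^1}^c\cap\mathcal L_{U^1|Y}\;\subseteq\;\mathcal L_{U^1}^c\cap\mathcal L_{U^1|X}\;\subseteq\;\mathcal L_{U^1}^c\cap\mathcal H_{U^1|X}^c\;=\;\mathcal I.
\]
With this in hand, $|\mathcal I'|=|\mathcal I|-|\mathcal L_{U^1}^c\cap\mathcal L_{U^1|Y}|$ on the nose, and your remaining computations go through. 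Your remark that ``$\mathcal H_{U^1|X}\subseteq\mathcal H_{U^1|XY}\subseteq\mathcal H_{U^1|Y}$ fails in general'' misled you: under the Markov chain the first containment is an equality and the degradation inequality does give $\mathcal H_{U^1|X}\subseteq\mathcal H_{U^1|Y}$ --- but you do not need the $\mathcal H$-sets at all here; working with the $\mathcal L$-sets as above is cleaner.
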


\begin{proof}
Since ${\mathcal L}_{U^1}\subseteq {\mathcal L}_{U^1|Y}$ \eqref{eq:ZZ}, it follows that %for $N\to\infty$
\begin{align*}
\lim_{N\to\infty}\frac{|{\mathcal L}_{U^1}^c \cap {\mathcal L}_{U^1|Y}|}{N}
&=\lim_{N\to\infty}\Big(\frac{|{\mathcal L}_{U^1|Y}|}{N}-
\frac{|{\mathcal L}_{U^1}|}{N}\Big)\\
&=(1-H(U^1|Y))-(1-H(U^1))\\&=I(U^1;Y).
\end{align*}
Moreover, the Markov chain condition $U^1\rightarrow X\rightarrow Y$ imposed
by Theorem \ref{region} implies the inclusion ${\mathcal L}_{U^1|Y}\subseteq {\mathcal L}_{U^1|X}$.
(See Lemma 4.7 of \cite{kor09a} for the proof.)
Therefore, as the blocklength $N$ goes to infinity, the rate of the first round of communication converges to
\begin{align}
\lim_{N\to\infty}\frac{|{\mathcal I}|}{N}-I(U^1;Y)&=I(U^1;X)-I(U^1;Y)\nonumber\\
&=(H(U^1)-H(U^1|X))-(H(U^1)-H(U^1|Y))\nonumber\\
&=H(U^1|Y)-H(U^1|X)\nonumber\\
&=H(U^1|Y)-H(U^1|X,Y) \label{markov}\\
&=I(X;U^1|Y)\nonumber
\end{align}
%in accordance with (\ref{rate}). 
as desired, where (\ref{markov}) again follows from the Markov condition.
%constraint $U^1\rightarrow X\rightarrow Y$.
\end{proof}
%Note that the rate of the first round of communication coincides with the lower bound given by (\ref{rate}).
%i.e., it is nothing but the lower bound of $R_1$ in (\ref{rate}).

As already discussed, the main technical obstacle is to show that the joint statistics
of the observations and the auxiliary random variables are close to the ideal statistic.
More specifically, we need to prove that the joint distributions of both ${\bar U^1_A}, {\bar X}, {\bar Y}$ and 
${\bar U^1_B}, {\bar X}, {\bar Y}$ are close to $P_{{\bar U^1} {\bar X} {\bar Y}}$,
and in fact ${\bar U^1_A}={\bar U^1_B}$ holds true with probability converging to $1$.

Let ${Q}_{{\bar U^1_A}{\bar X}{\bar Y}}({\bar u^1_A},{\bar x},{\bar y})$ denote the probability that Terminal A
observes the source sequence ${\bar x}$, Terminal B observes the source sequence ${\bar y}$, and the procedure 
described by (\ref{qdist}) outputs ${\bar u^1_A}$. 

\begin{lemma} \label{l1} For any $\beta_1<\beta\in(0,1/2)$, starting with some $N$ we have 
    $$
\|{Q}_{{\bar U^1_A} {\bar X} {\bar Y}}-{P}_{{\bar U^1} {\bar X} {\bar Y}}\|_1 = O(2^{-N^{\beta_1}}).
    $$
\end{lemma}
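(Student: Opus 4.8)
The plan is to bound the total variation distance $\|Q_{\bar U^1_A \bar X \bar Y} - P_{\bar U^1 \bar X \bar Y}\|_1$ by relating it to a more tractable quantity: the divergence between $Q$ and $P$ on the $\bar V^1$ coordinates, conditioned on $\bar X$. The key observation is that both the ideal distribution $P$ and the scheme-induced distribution $Q$ share the same marginal on $\bar X$ and the same kernel $\mathbbm{1}(\bar u^1 G_N = \bar v^1)$ relating $\bar V^1$ to $\bar U^1$; moreover, in the ideal world $\bar Y$ is generated from $\bar X$ through the memoryless channel $P_{Y|X}$ (using the Markov chain $U^1 \to X \to Y$), and in the scheme $\bar Y$ is simply Terminal B's observation, also distributed as $P_{Y^{1:N}|X^{1:N}}$. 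So the only discrepancy lies in how $\bar V^1$ is produced from $\bar X$: the scheme uses the kernel $Q_{(V^1_A)^i|(V^1_A)^{1:i-1}\bar X}$ of \eqref{qdist}, whereas the ideal chain rule gives $P_{(V^1)^i|(V^1)^{1:i-1}\bar X}$. By the standard chaining/coupling argument for polar codes (cf. \cite{Korada2010,Honda13}), it suffices to control $\|Q_{\bar V^1_A \bar X} - P_{\bar V^1 \bar X}\|_1$, and then the data-processing inequality for total variation propagates the bound to the triple $(\bar U^1_A, \bar X, \bar Y)$.

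First I would write $D(P_{\bar V^1 \bar X} \| Q_{\bar V^1_A \bar X})$ (or the reverse, whichever telescopes more cleanly) using the chain rule for KL divergence along the coordinates $i = 1, \dots, N$, noting that the two kernels agree on $\bar X$ and on all coordinates $i \in \mathcal{I}$ of $\bar V^1$. Hence the divergence collapses to a sum over $i \in \mathcal{F}_r \cup \mathcal{F}_d$ of the per-coordinate divergences. For $i \in \mathcal{F}_d = \mathcal{L}_{U^1}$, the ideal conditional $P_{(V^1)^i|(V^1)^{1:i-1}}$ is nearly deterministic, so its divergence from the scheme's choice (which also uses $P_{(V^1)^i|(V^1)^{1:i-1}}$, i.e., \emph{samples} from it rather than taking the argmax) is itself tiny; one uses the relation between the Bhattacharyya parameter $Z((V^1)^i | (V^1)^{1:i-1}) \le \delta_N$ and the conditional entropy $H((V^1)^i|(V^1)^{1:i-1})$, so each such term is $O(\delta_N)$ up to a polynomial factor. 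For $i \in \mathcal{F}_r = \mathcal{L}^c_{U^1} \cap \mathcal{H}_{U^1|X}$, the scheme uses the uniform distribution $1/2$, while the ideal conditional has $Z((V^1)^i|(V^1)^{1:i-1}, \bar X) \ge 1 - \delta_N$, i.e., is nearly uniform, so again the per-coordinate divergence is $O(\delta_N)$ up to polynomial factors via the bound $1 - H \le $ (function of $Z$). Summing $N$ terms each of order $\delta_N = 2^{-N^\beta}$ gives $D = O(N^c 2^{-N^\beta})$ for some constant $c$, hence $\|Q_{\bar V^1_A \bar X} - P_{\bar V^1 \bar X}\|_1 = O(\sqrt{N^c 2^{-N^\beta}}) = O(2^{-N^{\beta_1}})$ by Pinsker's inequality, for any $\beta_1 < \beta$.

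I would then extend from $(\bar V^1_A, \bar X)$ to $(\bar U^1_A, \bar X, \bar Y)$. Since $\bar U^1$ is a deterministic invertible function of $\bar V^1$ (multiplication by $G_N = G_N^{-1}$), the total variation distance is unchanged when passing to $(\bar U^1_A, \bar X)$. Then, appending $\bar Y$: in both the $P$-world and the $Q$-world, conditioned on $\bar X$, the vector $\bar Y$ is drawn from the \emph{same} memoryless channel $\prod_i P_{Y|X}(y^i|x^i)$ and is conditionally independent of $\bar V^1$ given $\bar X$ in the ideal world; in the scheme, $\bar Y$ is the given observation, and $\bar V^1_A$ is generated as a function of $\bar X$ and local randomness only, so again $\bar Y \perp \bar V^1_A \mid \bar X$. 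Therefore the joint distributions factor as (common $\bar X$-$\bar Y$ part) $\times$ (conditional law of $\bar U^1$ given $\bar X$), and $\|Q_{\bar U^1_A \bar X \bar Y} - P_{\bar U^1 \bar X \bar Y}\|_1 \le \|Q_{\bar U^1_A \bar X} - P_{\bar U^1 \bar X}\|_1$, completing the bound.

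The main obstacle I anticipate is the careful bookkeeping in the telescoping divergence computation — specifically, verifying that the per-coordinate divergences for $i \in \mathcal{F}_r$ and $i \in \mathcal{F}_d$ are genuinely of order $\delta_N$ (times at most a polynomial in $N$). This requires the precise two-sided relationship between the Bhattacharyya parameter $Z(T|V)$ and the conditional entropy $H(T|V)$ for a binary $T$ — namely that $Z$ close to $0$ (resp. close to $1$) forces $H$ close to $0$ (resp. close to $1$), with explicit rates — and then translating an entropy gap into an $\ell_1$ or KL gap between the true conditional and the idealized one (deterministic or uniform, respectively). One must also be attentive to the fact that the scheme in \eqref{qdist} \emph{samples} from the true conditional on $\mathcal{F}_d$ rather than rounding to the argmax, which actually makes this direction of the estimate cleaner than in \cite{Honda13}, but the analogous estimate on $\mathcal{F}_r$ (where the scheme substitutes the exact uniform for a near-uniform conditional that depends on $\bar X$) needs the fact that $i \in \mathcal{F}_r$ implies $i \in \mathcal{H}_{U^1|X}$, hence $Z((V^1)^i | (V^1)^{1:i-1}, \bar X) \ge 1 - \delta_N$, which is exactly why the partition \eqref{partition2} was chosen over the naive one. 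A secondary technical point is ensuring the loss in exponent from Pinsker (square root) and from the polynomial prefactor is absorbed by allowing any $\beta_1 < \beta$, which is routine.
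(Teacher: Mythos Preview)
Your approach is correct and structurally parallel to the paper's: both reduce to bounding $\|Q_{\bar V^1_A \bar X} - P_{\bar V^1 \bar X}\|_1$ via the Markov condition $U^1 \to X \to Y$ and the bijectivity of $G_N$, then telescope over the coordinates $i=1,\ldots,N$, with only $i \in \mathcal{F}_r \cup \mathcal{F}_d$ contributing. The paper telescopes the $\ell_1$ distance directly (the product--difference expansion of \cite{kor09a}) and proves two propositions: for $i\in\mathcal{F}_r$ it bounds $E_P|1/2 - P((V^1)^i=0\mid (V^1)^{1:i-1},\bar X)|$, and for $i\in\mathcal{F}_d$ it bounds $E_P|P((V^1)^i=0\mid (V^1)^{1:i-1}) - P((V^1)^i=0\mid (V^1)^{1:i-1},\bar X)|$, each by $O(2^{-N^\beta/2})$. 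Your KL--plus--Pinsker route is a legitimate alternative; in fact the $\mathcal{F}_d$ case is \emph{cleaner} your way, since the expected KL term is exactly $I((V^1)^i;\bar X\mid (V^1)^{1:i-1}) \le H((V^1)^i\mid (V^1)^{1:i-1}) \le \delta_N$, whereas the paper's $\ell_1$ version requires a separate argument to rule out the two near-deterministic conditionals concentrating on \emph{opposite} values. The price you pay is a Pinsker square root, giving a slightly worse exponent that is nonetheless absorbed by $\beta_1<\beta$.

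Two points to tighten. First, your description of the $\mathcal{F}_d$ case is muddled: you write that both the ideal and the scheme use $P_{(V^1)^i|(V^1)^{1:i-1}}$, but in the chain rule for $D(P_{\bar V^1 \bar X}\|Q_{\bar V^1_A \bar X})$ the ideal conditional is $P_{(V^1)^i|(V^1)^{1:i-1},\bar X}$, so the term is not zero---it is precisely the mutual information above. Second, commit to $D(P\|Q)$ rather than leaving ``or the reverse'' open: $D(Q\|P)$ is problematic on $\mathcal{F}_r$ because the uniform reference can put mass where the true conditional is exponentially small, and the chain-rule expectation would be under $Q$, which you do not yet control.
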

The proof is given in Appendix \ref{sect:AA}.

Now we turn to the information processing by Terminal B described above (see \eqref{qdist2}). Let 
$${Q}_{{\bar U^1_B}{\bar X}{\bar Y}}({\bar u^1_B},{\bar x},{\bar y})$$ 
denote the probability that Terminals A and B observe the source sequences ${\bar x}$ and ${\bar y}$ 
respectively, and the described procedure outputs ${\bar u^1_B}$. Then Terminal B's counterpart of Lemma \ref{l1} can
be stated as follows.
\begin{lemma}\label{l2} For any $\beta_2<\beta\in(0,1/2)$, starting with some $N$ we have 
   \begin{gather}
\|{Q}_{{\bar U^1_B}{\bar X}{\bar Y}}-{P}_{{\bar U^1}{\bar X}{\bar Y}}\|_1 =  O(2^{-N^{\beta_2}})
\label{eq:l2}\\
\Pr\{{\bar U_A^1}={\bar U_B^1}\}=1-O(2^{-N^{\beta_2}}).\label{eq:agreement}
    \end{gather}
\end{lemma}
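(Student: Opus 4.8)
\textbf{Proof proposal for Lemma \ref{l2}.}

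The plan is to prove the two claims \eqref{eq:l2} and \eqref{eq:agreement} together by a coupling/induction argument that relates Terminal B's sampling procedure \eqref{qdist2} to Terminal A's procedure \eqref{qdist}, using Lemma \ref{l1} as the anchor. First I would set up the comparison with the ``ideal'' conditional distribution: define an auxiliary distribution $Q'_{\bar U^1_B \bar X \bar Y}$ in which the coordinates $i\in{\mathcal I}'$ are \emph{not} overwritten by the received bits but are instead sampled from the true conditional ${P}_{(V^1)^i|(V^1)^{1:i-1}\bar Y}$, i.e.\ the distribution that Terminal B would produce if it were running the honest single-terminal polar source-coding scheme on $\bar Y$ with the partition \eqref{partition2} shifted to condition on $Y$ instead of $X$. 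By the same argument as in Lemma \ref{l1} (with $X$ replaced by $Y$, ${\mathcal H}_{U^1|X}$ replaced by ${\mathcal H}_{U^1|Y}$, and using the inclusions ${\mathcal L}_{U^1}\subseteq{\mathcal L}_{U^1|Y}\subseteq{\mathcal L}_{U^1|X}$ from the previous lemma and \eqref{eq:ZZ}), this $Q'$ satisfies $\|Q'_{\bar U^1_B \bar X \bar Y}-{P}_{\bar U^1 \bar X \bar Y}\|_1=O(2^{-N^{\beta_2}})$. The triangle inequality then reduces \eqref{eq:l2} to bounding $\|Q'_{\bar U^1_B \bar X \bar Y}-Q_{\bar U^1_B \bar X \bar Y}\|_1$, i.e.\ the cost of replacing the honestly-sampled bits on ${\mathcal I}'$ with the bits actually received from A.

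The key step is to bound this last discrepancy and simultaneously the disagreement probability $\Pr\{\bar U^1_A\neq\bar U^1_B\}$. Here I would argue as follows. On the event that A's sequence $\bar v^1_A$ is ``typical'' in the sense that its joint law with $\bar X,\bar Y$ is within $O(2^{-N^{\beta_1}})$ of ${P}_{\bar V^1\bar X\bar Y}$ (guaranteed in total variation by Lemma \ref{l1}), the received coordinates $(v^1_A)^i,\,i\in{\mathcal I}'$ behave like samples from ${P}_{(V^1)^i|(V^1)^{1:i-1}}$ conditioned on the prefix. Now compare B's two ways of filling in these coordinates: honest sampling from ${P}_{(V^1)^i|(V^1)^{1:i-1}\bar Y}$ versus copying A's bit. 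For indices in ${\mathcal I}'={\mathcal I}\setminus({\mathcal L}^c_{U^1}\cap{\mathcal L}_{U^1|Y})$ one checks that $i\in{\mathcal I}$ forces $i\notin{\mathcal H}^c_{U^1|X}$\,—wait, more carefully: ${\mathcal I}={\mathcal L}^c_{U^1}\cap{\mathcal H}^c_{U^1|X}$, so $i\in{\mathcal I}'$ means $i\notin{\mathcal L}_{U^1}$, $i\notin{\mathcal H}_{U^1|X}$, and $i\notin{\mathcal L}_{U^1|Y}$; hence for these indices $Z((V^1)^i\mid (V^1)^{1:i-1},\bar Y)$ is \emph{not} small, which is exactly why A must transmit them. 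So the proximity of B's copied bits to B's honestly-sampled bits cannot be argued pointwise; instead it must be argued \emph{aggregately} through total variation on the whole block, exploiting that B's prefix $(v^1_B)^{1:i-1}$ agrees with A's prefix once all earlier transmitted coordinates have been copied and all earlier non-transmitted coordinates agree with high probability. This forces an induction over the coordinates $i=1,\dots,N$, at each step carrying two coupled invariants: (a) the joint law of $((v^1_B)^{1:i},\bar X,\bar Y)$ is close in $\ell_1$ to the ideal; (b) $(v^1_B)^{1:i}=(v^1_A)^{1:i}$ on all of $[i]$ with high probability. The bits in ${\mathcal I}\setminus{\mathcal I}'={\mathcal L}^c_{U^1}\cap{\mathcal L}_{U^1|Y}$ are handled by the standard successive-cancellation reliability estimate: given the correct prefix and $\bar y$, B's decoded bit equals $(v^1_A)^i$ except with probability $O(Z((V^1)^i\mid(V^1)^{1:i-1},\bar Y))\le O(\delta_N)$, and summing over at most $N$ such indices gives the $O(2^{-N^{\beta_2}})$ term in \eqref{eq:agreement} (for any $\beta_2<\beta$, absorbing the polynomial factor $N$).

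Concretely, the induction would proceed: assume after coordinate $i-1$ that $\|Q^{(i-1)}_{(V^1_B)^{1:i-1}\bar X\bar Y}-P_{(V^1)^{1:i-1}\bar X\bar Y}\|_1\le\Delta_{i-1}$ and that $\Pr\{(v^1_B)^{1:i-1}\ne(v^1_A)^{1:i-1}\}\le\eta_{i-1}$. For the step to coordinate $i$: if $i\in{\mathcal F}_r$, both terminals sample a fresh $\mathrm{Bernoulli}(1/2)$ bit, and—since ${\mathcal F}_r\subseteq{\mathcal H}_{U^1}$-ish, the ideal conditional is also near-uniform, so the $\ell_1$ distance grows by only $O(\delta_N)$ per such index (this is exactly the mechanism in Lemma \ref{l1}); the agreement of A's and B's random bits on ${\mathcal F}_r$ is arranged by sharing the common randomness, so $\eta$ does not increase. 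If $i\in{\mathcal F}_d$, both terminals sample from the \emph{same} conditional ${P}_{(V^1)^i|(V^1)^{1:i-1}}$, so no new $\ell_1$ error beyond propagation and no new disagreement once prefixes agree—again A and B share the randomness used. If $i\in{\mathcal I}'$, B copies A's bit, so conditional on prefix-agreement B's bit equals A's bit deterministically ($\eta$ unchanged), and the $\ell_1$ contribution is controlled by $\|Q_{(V^1_A)^i|\cdots}-P_{(V^1)^i|\cdots}\|_1$, which Lemma \ref{l1} already bounds in aggregate. If $i\in{\mathcal I}\setminus{\mathcal I}'$, B runs successive cancellation from $\bar y$: the bit-error probability is $O(\delta_N)$ by the Bhattacharyya bound $Z((V^1)^i\mid(V^1)^{1:i-1},\bar Y)\le\delta_N$, feeding $\eta_i\le\eta_{i-1}+O(\delta_N)$, and the $\ell_1$ term grows by $O(\sqrt{\delta_N})$ at worst (standard estimate relating decoding error to TV perturbation). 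Accumulating over $i\in[N]$, $\Delta_N=O(2^{-N^{\beta_1}})+N\cdot O(\sqrt{\delta_N})=O(2^{-N^{\beta_2}})$ and $\eta_N=N\cdot O(\delta_N)=O(2^{-N^{\beta_2}})$ for any $\beta_2<\beta$, which is \eqref{eq:l2} after marginalizing to $(\bar U^1_B,\bar X,\bar Y)$ via $\bar u^1_B=\bar v^1_B G_N$, and \eqref{eq:agreement}.

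\textbf{Main obstacle.} The hard part is the coupling on the transmitted coordinates ${\mathcal I}'$: there the two terminals are sampling from genuinely different conditionals (A conditions on $\bar X$, B would condition on $\bar Y$), so one cannot compare them locally, and the only lever is that B does not sample at all but copies A's bit. Making this rigorous requires carrying the $\ell_1$-closeness of \emph{A's} joint statistics (Lemma \ref{l1}) through B's entire successive-cancellation recursion while the prefix-agreement event is only high-probability, not certain—so the two induction invariants (a) and (b) genuinely feed each other and must be advanced in lockstep. Keeping the constants uniform in $i$ and showing the accumulated error is still $O(2^{-N^{\beta_2}})$ (losing only an arbitrarily small slice of the exponent, $\beta_2<\beta$) is where the delicate bookkeeping lies; everything else is a routine adaptation of the arguments already used for Lemma \ref{l1} and of the standard polar successive-cancellation error bound from \cite{ari09a}.
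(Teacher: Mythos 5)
Your proposal is essentially the paper's own proof: the paper likewise proves \eqref{eq:l2} and \eqref{eq:agreement} by a single induction that advances, in lockstep, the prefix-law closeness $\|{Q}_{(V^1_B)^{1:i}\bar X\bar Y}-{P}_{(V^1)^{1:i}\bar X\bar Y}\|_1$ and the prefix-agreement probability, with the same four-case analysis over ${\mathcal F}_r$, ${\mathcal F}_d$, ${\mathcal I}\setminus{\mathcal I}'$, ${\mathcal I}'$, the Markov chain $U^1\rightarrow X\rightarrow Y$ for the copied bits, and near-determinism (small Bhattacharyya parameter) for the untransmitted low-entropy bits. The only deviations are cosmetic: the paper has B randomly sample the ${\mathcal I}\setminus{\mathcal I}'$ bits from ${P}_{(V^1)^i|(V^1)^{1:i-1}\bar Y}$ rather than MAP-decode them (agreement then follows because both terminals' conditionals concentrate on the same value), it uses shared randomness only on ${\mathcal F}_r$ and handles ${\mathcal F}_d$ via the same near-determinism argument, and the ${\mathcal F}_d$ step does incur a small per-index $\ell_1$ increment (Proposition \ref{prop:2}) of the same order $O(2^{-N^{\beta}/2})$ you already budget for elsewhere, so your final accounting is unaffected.
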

{\em Remark:} The statement that we need below is given by \eqref{eq:agreement}. However, both claims \eqref{eq:l2} and
\eqref{eq:agreement} are used in the proof (recall the discussion in the introduction to this section).

The proof is given in Appendix \ref{sect:AB}.

\subsection{The remaining rounds of communication}\label{sect:remaining_rounds}
The purpose of this round
to make it possible for both terminals to generate the random vector ${\bar U^{i+1}}$ 
so that the joint distribution of ${\bar U^{i+1}}, {\bar X}, {\bar Y}$ is close to 
the ideal distribution $P_{{\bar U^{i+1}}{\bar X}{\bar Y}}$. 

The communication protocol of the first round easily generalizes to the remaining 
rounds of communication. Consider for instance round $i+1,$ where $i$ is even.
This means that information is communicated from $A$ to $B$, and that sequences 
$\bfu^{1:i}\triangleq ({\bar u^1},\dots,{\bar u^i})$ are already known to both sides. 

Below we use notation $P$ for the joint distribution
    \begin{align}
{P}_{{\bfU}^{1:t}{\bfV}^{1:t}{\bar X}{\bar Y}}
({\bfu}^{1:t},&{\bfv}^{1:t},{\bar x},{\bar y})\nonumber\\&=
  \prod_{j=1}^N{P}_{XYU^{1:t}}(x^j,y^j,(u^1)^j,\dots,(u^{t})^j)
  \prod_{i=0}^{t-1}{\mathbbm 1}({\bar u^{i+1}} G_N={\bar v^{i+1}}) \label{distribution}
    \end{align}
and distributions derived from it, where
$\bfV^{1:t}\triangleq({\bar V^1},\dots,{\bar V^t}), \bfU^{1:t}\triangleq({\bar U^1},\dots,{\bar U^t})$.
We assume that
no errors occurred in earlier rounds, so both terminals observe identical copies of $\bfu^{1:i}.$

\vspace*{.1in}
{\em Round $i+1$} ($i$ even): Terminal A partitions $[N]$ as follows:
     \begin{equation}\label{eq:pt2}
     \left.\begin{array}{l}
{\mathcal F}^{i+1}_r={\mathcal L}^c_{U^{i+1}} \cap {\mathcal H}_{U^{i+1}|(X,U^{1:i})}\\
{\mathcal F}^{i+1}_d={\mathcal L}_{U^{i+1}}\\
{\mathcal I}^{i+1}={\mathcal L}^c_{U^{i+1}} \cap {\mathcal H}^c_{U^{i+1}|(X,U^{1:i})}.
\end{array}\right\}
      \end{equation}
It then generates a sequence ${\bar v^{i+1}_A}$ randomly and successively by sampling from the distribution 
   \begin{align}
&{Q}_{(V^{i+1}_A)^j|(V^{i+1}_A)^{1:j-1},({\bar X},{\bfU}^{1:i})}((v^{i+1}_A)^j|(v^{i+1}_A)^{1:j-1},({\bar x},{\bfu}^{1:i}))\nonumber\\
&\hspace*{.3in}=
\begin{cases}
1/2, &j\in {\mathcal F}_r^{i+1}\\
{P}_{(V^{i+1})^j|(V^{i+1})^{1:j-1}}((v^{i+1}_A)^j|(v^{i+1}_A)^{1:j-1}), &j\in {\mathcal F}_d^{i+1}\\
{P}_{(V^{i+1})^j|(V^{i+1})^{1:j-1},({\bar X},{\bfU}^{1:i})}((v^{i+1}_A)^j|(v^{i+1}_A)^{1:j-1},({\bar x},{\bfu}^{1:i})), 
&j\in {\mathcal I}^{i+1}.
\end{cases}
\label{qdist3}
     \end{align}
Having found ${\bar v^{i+1}_A}$, Terminal $A$ computes the sequence ${\bar u^{i+1}_A}={\bar v^{i+1}_A} G_N$. 

To communicate information, A sends to B the sequence $(v^{i+1}_A)^j, j\in {\mathcal I}',$ where \eqref{eq:pt2}
   $$
{\mathcal I'}^{i+1}={\mathcal I}^{i+1}\backslash({\mathcal L}^c_{U^{i+1}}\cap {\mathcal L}_{U^{i+1}|(Y,U^{1:i})}).
   $$
Upon receiving the transmission, Terminal B generates $(v^{i+1}_B)^j, j\notin {\mathcal I}'$ by
sampling from the distribution
\begin{align}
&{Q}_{(V^{i+1}_B)^j|(V^{i+1}_B)^{1:j-1},({\bar Y},{\bfU}^{1:i})}((v^{i+1}_B)^j|(v^{i+1}_B)^{1:j-1},({\bar y},{\bfu}^{1:i}))\nonumber\\& \hspace*{.3in}=
\begin{cases}
1/2, &j\in {\mathcal F}_r,\\
{P}_{(V^{i+1})^j|(V^{i+1})^{1:j-1}}((v^{i+1}_B)^j|(v^{i+1}_B)^{1:j-1}), &j\in {\mathcal F}_d,\\
{P}_{(V^{i+1})^j|(V^{i+1})^{1:j-1},({\bar Y},{\bfU}^{1:i})}((v^{i+1}_B)^j|(v^{i+1}_B)^{1:j-1},({\bar y},{\bfu}^{1:i})), 
&j\in {\mathcal I}^{i+1}\backslash{\mathcal I'}^{i+1}.
\end{cases}
\label{qdist4}
\end{align}
The values $(v^{i+1}_B)^j, j\in {\mathcal I'}^{i+1}$ are known perfectly from the communication. 
Once the sequence ${\bar v^{i+1}_B}$ has been formed, Terminal B 
finds ${\bar u^{i+1}_B}={\bar v^{i+1}_B} G_N$. 

If $i$ is odd, the transmission proceeds from Terminal B to A. Both the description of the information
processing and the analysis below apply after obvious changes of notation.

\vspace*{.1in}
Let us show that the rate of $(i+1)^{\text{th}}$ round of communication
matches the lower bound of $R_{i+1}$ given in (\ref{rate}).
\begin{lemma}
If $i+1$ is odd, the rate of the $(i+1)^{\text{th}}$ round converges to $I(X;U^i|Y,U^{1:i-1})$
as $N$ as goes to infinity. If $i+1$ is even, the rate converges to $I(Y;U^i|X,U^{1:i-1})$.  
\end{lemma}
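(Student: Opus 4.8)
The plan is to mirror the computation already carried out for Round~1 (the Lemma preceding this one), replacing the unconditioned side information $Y$ and the source $X$ by their versions conditioned on the common past $U^{1:i}$. Concretely, I would first record that, by the polarization results cited in \cite{Arikan2010,kor09a}, the normalized sizes of the relevant index sets converge to conditional entropies of the ``virtual source'' $U^{i+1}$ given $(X,U^{1:i})$ or $(Y,U^{1:i})$: namely
\begin{align*}
\lim_{N\to\infty}\frac{|{\mathcal L}_{U^{i+1}}|}{N}&=1-H(U^{i+1}),\\
\lim_{N\to\infty}\frac{|{\mathcal H}_{U^{i+1}|(X,U^{1:i})}|}{N}&=H(U^{i+1}|X,U^{1:i}),\\
\lim_{N\to\infty}\frac{|{\mathcal L}_{U^{i+1}|(Y,U^{1:i})}|}{N}&=1-H(U^{i+1}|Y,U^{1:i}).
\end{align*}
From the partition \eqref{eq:pt2} this gives $\lim_{N\to\infty}|{\mathcal I}^{i+1}|/N = H(U^{i+1})-H(U^{i+1}|X,U^{1:i}) = I(X,U^{1:i};U^{i+1})$, and the rate of the round is $\lim_{N\to\infty}|{\mathcal I'}^{i+1}|/N$, which by definition of ${\mathcal I'}^{i+1}$ equals $|{\mathcal I}^{i+1}|/N$ minus $|{\mathcal L}^c_{U^{i+1}}\cap{\mathcal L}_{U^{i+1}|(Y,U^{1:i})}|/N$.

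Next I would evaluate that subtracted term exactly as in the Round~1 lemma. Because $Z(V^j\mid V^{1:j-1},\bar Y,\mathbf U^{1:i})\le Z(V^j\mid V^{1:j-1})$ always, one has ${\mathcal L}_{U^{i+1}}\subseteq{\mathcal L}_{U^{i+1}|(Y,U^{1:i})}$, so
$$
\lim_{N\to\infty}\frac{|{\mathcal L}^c_{U^{i+1}}\cap{\mathcal L}_{U^{i+1}|(Y,U^{1:i})}|}{N}
=\big(1-H(U^{i+1}|Y,U^{1:i})\big)-\big(1-H(U^{i+1})\big)=I(Y,U^{1:i};U^{i+1}).
$$
Subtracting, the rate converges to $I(X,U^{1:i};U^{i+1})-I(Y,U^{1:i};U^{i+1}) = H(U^{i+1}|Y,U^{1:i})-H(U^{i+1}|X,U^{1:i})$. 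Finally I invoke the Markov chain condition from Theorem~\ref{region}: for $i+1$ odd the chain is $U^{i+1}\to(X,U^{1:i})\to Y$, so $H(U^{i+1}|X,U^{1:i})=H(U^{i+1}|X,Y,U^{1:i})$, and hence the rate equals $H(U^{i+1}|Y,U^{1:i})-H(U^{i+1}|X,Y,U^{1:i})=I(X;U^{i+1}|Y,U^{1:i})$, which is exactly $R_{i+1}$ in \eqref{rate} (after reindexing $i+1\mapsto i$). The even case is identical with the roles of $X$ and $Y$ swapped.

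The one genuine subtlety — the ``main obstacle,'' though it is more bookkeeping than difficulty — is that the polarization limits above are stated for the \emph{ideal} source with distribution $P_{XYU^{1:t}}$, whereas in the actual protocol the terminals have generated approximations $\mathbf U^{1:i}_A,\mathbf U^{1:i}_B$ whose joint law with $(\bar X,\bar Y)$ is only $O(2^{-N^{\beta'}})$-close to the ideal one (by Lemmas~\ref{l1}, \ref{l2} for the first round, and inductively by Sect.~\ref{sect:multiround} for later rounds), and moreover the two terminals' copies agree only with high probability. For the \emph{rate} statement this is harmless: the index sets ${\mathcal F}^{i+1}_r,{\mathcal F}^{i+1}_d,{\mathcal I}^{i+1},{\mathcal I'}^{i+1}$ are \emph{defined} via Bhattacharyya parameters of the ideal distribution $P$ (that is how \eqref{eq:pt2} and the definition of ${\mathcal I'}^{i+1}$ read), so their cardinalities are deterministic quantities depending only on $P$ and $N$, and the computation above applies verbatim. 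One should simply remark that the scheme freezes on the event (of probability $1-O(2^{-N^{\beta'}})$) that the past realizations coincide, so the number of bits actually transmitted in round $i+1$ is $|{\mathcal I'}^{i+1}|$, yielding the claimed limit; the vanishing-probability exceptional event does not affect the rate, which is what the lemma asserts.
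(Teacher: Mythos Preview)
Your approach is essentially the same as the paper's, and the entropy calculation is carried out correctly. There is one small gap: when you write that $|{\mathcal I'}^{i+1}|/N$ equals $|{\mathcal I}^{i+1}|/N$ minus $|{\mathcal L}^c_{U^{i+1}}\cap{\mathcal L}_{U^{i+1}|(Y,U^{1:i})}|/N$ ``by definition,'' you are implicitly assuming that ${\mathcal L}^c_{U^{i+1}}\cap{\mathcal L}_{U^{i+1}|(Y,U^{1:i})}\subseteq{\mathcal I}^{i+1}$, since for a set difference $|A\setminus B|=|A|-|B|$ only when $B\subseteq A$. This inclusion is not automatic from the definitions; it requires ${\mathcal L}_{U^{i+1}|(Y,U^{1:i})}\subseteq{\mathcal H}^c_{U^{i+1}|(X,U^{1:i})}$, which the paper obtains by first invoking the Markov chain $U^{i+1}\to(X,U^{1:i})\to(Y,U^{1:i})$ to deduce ${\mathcal L}_{U^{i+1}|(Y,U^{1:i})}\subseteq{\mathcal L}_{U^{i+1}|(X,U^{1:i})}$ (via \cite[Lemma~4.7]{kor09a}). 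You only invoke the Markov condition at the end for the entropy identity, so this set-theoretic use of it should be inserted earlier. Your closing paragraph on ideal versus empirical distributions is a correct observation but, as you yourself note, irrelevant to the rate computation since the index sets are defined through $P$.
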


\begin{proof}
Since ${\mathcal L}_{U^{i+1}}\subseteq {\mathcal L}_{U^{i+1}|(Y,U^{1:i})}$,
we have
\begin{align*}
\lim_{N\to\infty}\frac{|{\mathcal L}^c_{U^{i+1}}\cap {\mathcal L}_{U^{i+1}|(Y,U^{1:i})}|}{N}&=
\lim_{N\to\infty}\Big(\frac{|{\mathcal L}_{U^{i+1}|(Y,U^{1:i})}|}{N}-\frac{|{\mathcal L}_{U^{i+1}}|}{N}\Big)\\
&=(1-H(U^{i+1}|Y,U^{1:i}))-(1-H(U^{i+1}))\\
&=I(U^{i+1};Y,U^{1:i}).
\end{align*}
At the same time, Theorem \ref{region} implies that 
$U^{i+1}\rightarrow (X,U^{1:i})\rightarrow Y,$ and so also $U^{i+1}\rightarrow (X,U^{1:i})\rightarrow (Y,U^{1:i})$.
Hence, we have ${\mathcal L}_{U^{i+1}|(Y,U^{1:i})}\subseteq {\mathcal L}_{U^{i+1}|(X,U^{1:i})}$.
So, as the blocklength goes to infinity, the rate of communication converges to
\begin{align}
\lim_{N\to\infty}\frac{|{\mathcal I}|}{N}-I(U^{i+1};Y,U^{1:i})&=I(U^{i+1};X,U^{1:i})-I(U^{i+1};Y,U^{1:i})\nonumber\\
&=(H(U^{i+1})-H(U^{i+1}|X,U^{1:i}))-(H(U^{i+1})-H(U^{i+1}|Y,U^{1:i}))\nonumber\\
&=H(U^{i+1}|Y,U^{1:i})-H(U^{i+1}|X,U^{1:i})\nonumber\\
&=H(U^{i+1}|Y,U^{1:i})-H(U^{i+1}|X,U^{1:i},Y)\label{markov3}\\
&=I(X;U^{i+1}|Y,U^{1:i})\nonumber
\end{align}
which is consistent with (\ref{rate}). Eq. (\ref{markov3}) is justified by the fact 
that $U^{i+1}\rightarrow (X,U^{1:i})\rightarrow Y$ is a Markov chain.

The claim for the case when $i+1$ is even follows similarly.
\end{proof}

\subsection{Generalization of Lemmas \ref{l1} and \ref{l2} to multiple rounds}
\label{sect:multiround}
In this section we show that the joint distributions of both 
$\bfU_A^{1:t}, {\bar X}, {\bar Y}$ and $\bfU_B^{1:t}, {\bar X}, {\bar Y}$
are close to $P_{\bfU^{1:t} {\bar X} {\bar Y}}$, and that $\bfU_A^{1:t}=\bfU_B^{1:t}$
holds true with probability close to $1$. 
This is accomplished by extending Lemmas \ref{l1} and \ref{l2} to the case of $t>1$. We again face the same
technical difficulties as discussed in the beginning of Sect.~\ref{sect:mainpart}, but fortunately it is possible to
leverage the proofs of these lemmas to complete the argument.

Let ${Q}_{\bfU^{1:t}_A {\bar X}{\bar Y}}$ and ${Q}_{\bfU^{1:t}_B {\bar X}{\bar Y}}$ be the empirical
distributions induced by the sequence generation and communication protocols explained
in Section \ref{sect:remaining_rounds}. More formally, let
\begin{align*}
{Q}_{\bfU^{1:t}_A \bfV^{1:t}_A  {\bar X}{\bar Y}}(\bfu^{1:t}_A, & \bfv^{1:t}_A, {\bar x},{\bar y})=\prod_{j=1}^N P_{X,Y} (x^j,y^j)  
\prod_{i=0}^{t-1}{\mathbbm 1}({\bar u^{i+1}_A}G_N={\bar v^{i+1}_A})\\
&\times \prod_{i=0}^{t-1}  \prod_{j=1}^N
{Q}_{(V^{i+1}_A)^j|(V^{i+1}_A)^{1:j-1},({\bar X},{\bfU}^{1:i})} ((v^{i+1}_A)^j|(v^{i+1}_A)^{1:j-1},({\bar x},{\bfu}^{1:i}))
\end{align*}
and let ${Q}_{\bfU^{1:t}_B \bfV^{1:t}_B  {\bar X}{\bar Y}}(\bfu^{1:t}_B, \bfv^{1:t}_B, {\bar x},{\bar y}) $ be defined
similarly. Here, $\bfV_A^{1:t}\triangleq({\bar V_A^1},\dots,{\bar V_A^t}), \bfU_A^{1:t}\triangleq({\bar U_A^1},\dots,{\bar U_A^t})$
and the notation $\bfV_B^{1:t}$ and $\bfU_B^{1:t}$ has a similar meaning.

\begin{lemma}\label{lemma:ml} For any $\beta_3<\beta\in(0,1/2)$, starting with some $N$ we have
\begin{gather}
\text{Pr}\left\{\bfU_A^{1:t}=\bfU_B^{1:t}\right\} = 1-O(2^{-N^{\beta_3}})\label{l1-2}\\
\|{Q}_{\bfU_A^{1:t}{\bar X}{\bar Y}}-{P}_{\bfU^{1:t}{\bar X}{\bar Y}}\|_1 = O(2^{-N^{\beta_3}})\label{l2-2}\\
\|{Q}_{\bfU_B^{1:t}{\bar X}{\bar Y}}-{P}_{\bfU^{1:t}{\bar X}{\bar Y}}\|_1 = O(2^{-N^{\beta_3}}).\label{l3-2}
\end{gather}
\end{lemma}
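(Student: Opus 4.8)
The plan is to prove Lemma~\ref{lemma:ml} by induction on the number of rounds $t$, using Lemmas~\ref{l1} and \ref{l2} as the base case ($t=1$). Suppose the three claims hold for $t-1$ rounds; that is, $\bfU_A^{1:t-1}=\bfU_B^{1:t-1}$ except on an event of probability $O(2^{-N^{\beta_3}})$, and the empirical distributions of $(\bfU_A^{1:t-1},{\bar X},{\bar Y})$ and of $(\bfU_B^{1:t-1},{\bar X},{\bar Y})$ are each within $O(2^{-N^{\beta_3}})$ of $P_{\bfU^{1:t-1}{\bar X}{\bar Y}}$ in $L_1$. I would first condition on the (overwhelmingly likely) event $E_{t-1}=\{\bfU_A^{1:t-1}=\bfU_B^{1:t-1}\}$, so that both terminals share a common $\bfu^{1:t-1}$; on this event the round-$t$ protocol described in Section~\ref{sect:remaining_rounds} is exactly an instance of the single-round protocol of Section~\ref{sect:firstround}, but with the ``side information'' on the two sides upgraded from $X,Y$ to $(X,\bfU^{1:t-1})$ and $(Y,\bfU^{1:t-1})$ respectively, and with the target auxiliary variable $U^t$. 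The Markov chain conditions of Theorem~\ref{region} guarantee that the inclusions among the polarization sets $\cL_{U^t}, \cL_{U^t|(X,U^{1:t-1})}, \cL_{U^t|(Y,U^{1:t-1})}$ needed for the single-round argument continue to hold, so Lemmas~\ref{l1} and \ref{l2} apply \emph{verbatim} to round $t$, conditionally on $\bfu^{1:t-1}$.

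The second step is to turn these conditional estimates into unconditional ones. For the agreement claim \eqref{l1-2}, a union bound gives
\begin{align*}
\Pr\{\bfU_A^{1:t}\ne\bfU_B^{1:t}\}
&\le \Pr\{\bfU_A^{1:t-1}\ne\bfU_B^{1:t-1}\}
 + \Pr\{{\bar U_A^t}\ne{\bar U_B^t},\, E_{t-1}\}\\
&= O(2^{-N^{\beta_3}}) + O(2^{-N^{\beta_2}}),
\end{align*}
and choosing $\beta_3$ slightly below $\min(\beta_2,\beta_3^{\text{old}})$ (all still in $(0,1/2)$) absorbs both terms. For the distribution claims \eqref{l2-2}, \eqref{l3-2}, I would write the joint distribution as a product of the round-$(1{:}t{-}1)$ part and the conditional round-$t$ kernel, $Q_{\bfU_A^{1:t}{\bar X}{\bar Y}} = Q_{\bfU_A^{1:t-1}{\bar X}{\bar Y}}\cdot Q_{{\bar U_A^t}\mid \bfU_A^{1:t-1}{\bar X}}$ and likewise $P_{\bfU^{1:t}{\bar X}{\bar Y}} = P_{\bfU^{1:t-1}{\bar X}{\bar Y}}\cdot P_{{\bar U^t}\mid \bfU^{1:t-1}{\bar X}{\bar Y}}$ (using the Markov structure to drop $\bar Y$ from the conditioning of the last factor when $t$ is odd, and symmetrically for $t$ even), and then invoke the standard triangle-inequality/chaining bound for $L_1$ distances of products: $\|P_1 K_1 - P_2 K_2\|_1 \le \|P_1 - P_2\|_1 + \sup \|K_1(\cdot\mid z) - K_2(\cdot\mid z)\|_1$ plus an averaging over $z$. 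The first term is $O(2^{-N^{\beta_3^{\text{old}}}})$ by the inductive hypothesis; the second is the round-$t$ discrepancy controlled by Lemma~\ref{l1} (for the $A$-side) or Lemma~\ref{l2} (for the $B$-side), again $O(2^{-N^{\beta_2}})$, after noting that the round-$t$ kernel used with the ideal $\bfu^{1:t-1}$ versus the kernel appearing in $Q_{\bfU_A^{1:t}}$ differ only through the conditioning distribution on $\bfu^{1:t-1}$, whose discrepancy is already accounted for.

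The main obstacle I anticipate is the bookkeeping around \emph{which} conditioning variables appear in the round-$t$ kernels and making sure the relevant Markov chains let one replace, e.g., $P_{(V^t)^j\mid (V^t)^{1:j-1},(\bar X,\bfU^{1:t-1})}$ (what the protocol samples from) by the marginal induced by the true $P_{\bfU^{1:t}{\bar X}{\bar Y}}$ without incurring an error that is not already exponentially small. Concretely, the proof of Lemma~\ref{l1} shows the single-round $Q$-to-$P$ discrepancy is governed by $\sum_{j\in\cF_d}(1-\text{largest conditional prob}) $-type terms, i.e.\ by the near-determinism of the $\cL_{U^t}$ coordinates, and by the near-uniformity of the $\cF_r$ coordinates — these estimates carry over provided the polarization sets are defined with respect to the \emph{correct} conditioning, which is $(X,U^{1:t-1})$ on A's side. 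So the real content is checking that conditioning on the inductively-constructed $\bfu^{1:t-1}$ (rather than on an exact sample from $P_{\bfU^{1:t-1}}$) perturbs these quantities by at most $O(2^{-N^{\beta_3^{\text{old}}}})$, which follows from the inductive $L_1$ bound together with boundedness of the summands. Once this is in place, the choice $\beta_3 = \min(\beta_1,\beta_2) - o(1)$, kept in $(0,1/2)$, makes all error terms collapse to a single $O(2^{-N^{\beta_3}})$, completing the induction.
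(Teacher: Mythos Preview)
Your proposal is correct and follows essentially the same approach as the paper: induction on the number of rounds with Lemmas~\ref{l1} and~\ref{l2} as the base case, a triangle-inequality/chaining decomposition that separates the accumulated error from rounds $1{:}t-1$ (handled by the inductive hypothesis) from the fresh round-$t$ error (handled by re-running the single-round argument with side information upgraded to $(X,U^{1:t-1})$ or $(Y,U^{1:t-1})$), and the Markov conditions of Theorem~\ref{region} to justify the needed inclusions among polarization sets. The paper encodes the chaining step via the auxiliary distribution $\widehat{Q}$ (the round-$t$ $Q$-kernel applied to the true $P$-marginal on rounds $1{:}t-1$), which is exactly your ``$P_1K_1$ vs.\ $P_2K_2$'' bound; your explicit conditioning on the agreement event $E_{t-1}$ and union bound for \eqref{l1-2} is likewise what the paper does implicitly.
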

\begin{proof}
The proof proceeds by induction on the number of rounds. From the Lemmas \ref{l1} and \ref{l2} 
we know that \eqref{l1-2}-\eqref{l3-2} hold true for $t=1$. Let us assume that they hold for $t=i$
and prove them for $t=i+1.$
 %Assume the claim is true for $t=j$. 
If $i+1$ is odd, then the transmitting party is Terminal A. Then, from the induction hypothesis
\begin{equation}
\|{Q}_{\bfU_A^{1:i}{\bar X}{\bar Y}}-{P}_{\bfU^{1:i}{\bar X}{\bar Y}}\|_1 = O(2^{-N^{\beta_3}}) \label{remround}
\end{equation}
one can prove \eqref{remround} for $i+1$ 
in the same way as done in the proof of Lemma \ref{l1} in Section \ref{sect:firstround} with the only difference
that the Markov chain
$U^{i+1}\rightarrow (X,U^{1:i})\rightarrow Y$ is used instead of $U^1\rightarrow X \rightarrow Y$.
Further, we use the induction hypothesis
     \begin{gather}
\|{Q}_{\bfU_B^{1:i}{\bar X}{\bar Y}}-{P}_{\bfU^{1:i}{\bar X}{\bar Y}}\|_1 = O(2^{-N^{\beta_3}}) \label{remround2}\\
\text{Pr}\left\{\bfU_A^{1:i}=\bfU_B^{1:i}\right\}= 1-O(2^{-N^{\beta_3}}) \label{remround3}
     \end{gather}
and the triangle inequality %approach which led to \eqref{trianglefirst} for rounds rather than bits, i.e.,
\begin{align}
\|{Q}_{\bfU_B^{1:i+1}{\bar X}{\bar Y}}-{P}_{\bfU^{1:i+1}{\bar X}{\bar Y}}\|_1 &\leq \|{Q}_{\bfV_B^{1:i}{\bar X}{\bar Y}}-{P}_{\bfV^{1:i}{\bar X}{\bar Y}}\|_1+ \|{\widehat{Q}}_{\bfV_B^{1:i+1}{\bar X}{\bar Y}}-{P}_{\bfV^{1:i+1}{\bar X}{\bar Y}}\|_1\label{remroundineq5}
\end{align}
where
\begin{align*}
{\widehat{Q}}_{\bfV_B^{1:i+1}{\bar X}{\bar Y}} (\bfv^{1:i+1},{\bar x},{\bar y})&= {Q}_{\bfV_B^{i
+1}|\bfV_B^{1:i}{\bar X}{\bar Y}} (\bfv^{i+1}|\bfv^{1:i},{\bar x},{\bar y}) P_{\bfV^{1:i}{\bar X}{\bar Y}}(\bfv^{1:i},{\bar x},{\bar y})
\end{align*}
similarly to \eqref{widehat}, to observe that
one can prove \eqref{remround2},~\eqref{remround3} for $i+1$  
in the same way as in the proof of  Lemma \ref{l2}.
This is because together with  \eqref{remround2},~\eqref{remround3}, the inequality given in \eqref{remroundineq5} makes it possible to
reduce the analysis of Round $i+1$ to that of Round $1$.
Here again we rely on the Markov condition $U^{i+1}\rightarrow (X,U^{1:i})\rightarrow Y$ instead of $U^1\rightarrow X \rightarrow Y$.
The case of $i+1$ even is handled similarly. 
In that case, we have the Markov chain condition  $U^{i+1}\rightarrow (Y,U^{1:i})\rightarrow X$
instead of $U^{i+1}\rightarrow (X,U^{1:i})\rightarrow Y$. This completes the induction argument.
\end{proof}
%This  main point is that the analysis of round $j+1$ becomes identical to that of the
%first round once we focus on the distribution $\widehat{Q}_{\bfU^{1:j+1},{\bar X}}$ rather
%than on ${Q}_{\bfU^{1:j+1},{\bar X}}$.
\subsection{Computing the functions}\label{sect:strongtypicality}
Let us show that the functions $f_A({\bar x},{\bar y}),f_B({\bar x},{\bar y})$ can be computed based on
the communication between the terminals described in the previous sections. Using Lemma \ref{lemma:ml},
we prove that 
Terminals A and B compute their respective values of $f_A$ and $f_B$ respectively with probability close to one.

\begin{proposition} \label{prop:4} For Terminal A, there exists a $\tilde{Z}_A^{1:N}$ depending on
${\bar x}$ and ${\bfu^{1:t}}$ such that for all $0<\beta_7<\beta<1/2$, we have
\be
\Pr\{\tilde{Z}_A^{1:N}=f_A({\bar X},{\bar Y})\} = 1-O(2^{-N^{\beta_7}})\label{prop8_1}
\ee
starting from some $N$. Similarly, for Terminal B, there exists a 
$\tilde{Z}_B^{1:N}$ depending on ${\bar y}$ and ${\bfu^{1:t}}$ such that
\be
\Pr\{\tilde{Z}_B^{1:N}=f_B({\bar X},{\bar Y})\} = 1-O(2^{-N^{\beta_7}}) \label{prop8_2}
\ee
starting from some $N$. Moreover, the computation of $\tilde{Z}_A^{1:N}$ and  
$\tilde{Z}_B^{1:N}$ is linear in blocklength.
\end{proposition}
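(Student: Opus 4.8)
The plan is to exploit the conditional entropy constraints $H(f_A(X,Y)\mid X,U^{1:t})=0$ and $H(f_B(X,Y)\mid Y,U^{1:t})=0$ from Theorem \ref{region}, combined with the distributional closeness established in Lemma \ref{lemma:ml}. First I would observe that, under the \emph{ideal} distribution $P_{\bar X\bar Y\bfU^{1:t}}$, the zero-entropy condition means that for each coordinate $i$ there is a deterministic function $\varphi_A$ with $f_A(X(i),Y(i))=\varphi_A(X(i),(U^1)^{i},\dots,(U^t)^{i})$ almost surely; since the $N$ coordinates are i.i.d., we in fact have $H(f_A(\bar X,\bar Y)\mid \bar X,\bfU^{1:t})=0$, so there is a map $\Phi_A$ with $P\{f_A(\bar X,\bar Y)=\Phi_A(\bar X,\bfU^{1:t})\}=1$ exactly under the ideal distribution. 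Define $\tilde Z_A^{1:N}:=\Phi_A(\bar X,\bfU_A^{1:t})$ (Terminal A knows $\bar x$ and, after the $t$ rounds, its copy $\bfu_A^{1:t}$); analogously $\tilde Z_B^{1:N}:=\Phi_B(\bar Y,\bfU_B^{1:t})$.

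The core estimate is then a triangle-inequality / change-of-measure argument. Consider the event $\mathcal{E}_A=\{f_A(\bar X,\bar Y)\neq \Phi_A(\bar X,\bfU_A^{1:t})\}$. I would bound $\Pr_Q(\mathcal E_A)$ by writing it as a sum over triples $(\bar x,\bar y,\bfu^{1:t})$ of the indicator of the bad event times $Q_{\bfU_A^{1:t}\bar X\bar Y}$; replacing $Q$ by $P$ costs at most $\|Q_{\bfU_A^{1:t}\bar X\bar Y}-P_{\bfU^{1:t}\bar X\bar Y}\|_1=O(2^{-N^{\beta_3}})$ by \eqref{l2-2}, and under $P$ the bad event has probability zero. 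Hence $\Pr_Q(\mathcal E_A)=O(2^{-N^{\beta_3}})$, which gives \eqref{prop8_1} with any $\beta_7<\beta_3<\beta$. The argument for Terminal B uses \eqref{l3-2} in place of \eqref{l2-2} in exactly the same way, giving \eqref{prop8_2}. (One must note that the definition of $\tilde Z_B^{1:N}$ uses $\bfU_B^{1:t}$, which agrees with $\bfU_A^{1:t}$ only with high probability — but that is not needed here, since the $B$-estimate is compared directly against $P_{\bfU^{1:t}\bar X\bar Y}$ via \eqref{l3-2}; if one instead wanted the two terminals' $U$-sequences to coincide, one would invoke \eqref{l1-2}.)

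For the complexity claim, I would recall that each round consists of successive sampling from conditional distributions of polar-transformed coordinates and one multiplication by $G_N$; the polar transform and the successive computation of the conditionals $P_{(V^{i+1})^j\mid (V^{i+1})^{1:j-1},\dots}$ are done in $O(N\log N)$ time (Ar{\i}kan's recursive structure, \cite{ari09}), and there are $t=O(1)$ rounds. Finally, $\Phi_A$ and $\Phi_B$ act coordinatewise — $\tilde Z_A(i)=\varphi_A(X(i),(U^1_A)^i,\dots,(U^t_A)^i)$ — so the last step is $O(N)$. Thus the overall computation is $O(N\log N)$, which is the sense of ``linear in blocklength'' used throughout (near-linear, as stated in the introduction).

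The main obstacle I anticipate is purely bookkeeping rather than conceptual: making precise that the single-letter condition $H(f_A(X,Y)\mid X,U^{1:t})=0$ tensorizes to $H(f_A(\bar X,\bar Y)\mid\bar X,\bfU^{1:t})=0$ under the product distribution \eqref{distribution}, and confirming that the relevant $Q$-marginal in \eqref{l2-2}–\eqref{l3-2} is indeed the joint law of $(\bfU_A^{1:t},\bar X,\bar Y)$ (resp.\ $(\bfU_B^{1:t},\bar X,\bar Y)$) that governs the terminal's actual computation — i.e.\ that the marginalization over $\bfV^{1:t}_A$ in the definition of $Q_{\bfU^{1:t}_A\bfV^{1:t}_A\bar X\bar Y}$ is harmless because $\bar U=\bar V G_N$ is a bijection. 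Once these identifications are in place, the total-variation bound does all the work.
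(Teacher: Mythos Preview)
Your argument is correct and is in fact more direct than the route taken in the paper. Both proofs begin the same way: tensorize the single-letter constraint $H(f_A(X,Y)\mid X,U^{1:t})=0$ to $H(f_A(\bar X,\bar Y)\mid\bar X,\bfU^{1:t})=0$, and define $\tilde Z_A^{1:N}$ as the ($P$-a.s.\ unique) value determined by $(\bar X,\bfU_A^{1:t})$ --- your $\Phi_A(\bar X,\bfU_A^{1:t})$ is exactly the paper's ``value with $P$-conditional probability one.'' From there you apply the elementary total-variation inequality $Q(\mathcal B)\le P(\mathcal B)+\|Q-P\|_1$ to the bad set $\mathcal B=\{(\bfu,\bar x,\bar y):f_A(\bar x,\bar y)\neq\Phi_A(\bar x,\bfu)\}$, note that $P(\mathcal B)=0$, and invoke \eqref{l2-2}. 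This is clean and complete.

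The paper instead takes a detour: it bounds the \emph{entropy} gap $|H_Q(f_A(\bar X,\bar Y),\bar X,\bfU_A^{1:t})-H(f_A(\bar X,\bar Y),\bar X,\bfU^{1:t})|$ via the $\ell_1$--entropy continuity estimate (Cover--Thomas, Thm.~17.3.3), deduces $H_Q(f_A(\bar X,\bar Y)\mid\bar X,\bfU_A^{1:t})=O(2^{-N^{\beta_4}})$, applies Markov's inequality to the inner conditional entropy, and then the calculus inequality $(1-x)/(e-1)\le -x\ln x$ to extract a high-$Q$-probability value $\hat Z_A^{1:N}$, which is finally matched to $\tilde Z_A^{1:N}$ by another appeal to \eqref{l2-2}. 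Your single-line change of measure subsumes all of this; the only thing the paper's route buys is an explicit bound on the $Q$-conditional entropy itself, which is not needed for the proposition as stated.

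One small remark on the complexity claim: the proposition asserts only that computing $\tilde Z_A^{1:N}$ from the already-available $(\bar x,\bfu_A^{1:t})$ is linear, which is your coordinatewise observation $\tilde Z_A(i)=\varphi_A(x^i,(u^1_A)^i,\dots,(u^t_A)^i)$. The $O(N\log N)$ cost of the polar transforms belongs to the preceding communication rounds, not to this step, so there is no tension with the word ``linear'' here.
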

\begin{proof}
The proof relies on the conditional entropy constraints $H(f_A(X,Y)|X,U^{1:t})=0$
and $H(f_B(X,Y)|Y,U^{1:t})=0$ in (\ref{rate}).
First observe that these constraints easily extend to
the case of $N$ independent repetitions, i.e., that we have
\begin{align}
H(f_A({\bar X},{\bar Y})|{\bar X},\bfU^{1:t})&=0 \label{iid1}\\
H(f_B({\bar X},{\bar Y})|{\bar Y},\bfU^{1:t})&=0 .\label{iid2}
\end{align}
Then, define $\tilde{Z}_A^{1:N}$ and $\tilde{Z}_B^{1:N}$ as the values which satisfy
\begin{align*}
P_{f_A({\bar X},{\bar Y})| {\bar X}, \bfU^{1:t}} (\tilde{Z}_A^{1:N}| {\bar X}, \bfU^{1:t}_A)&=1\\
P_{f_B({\bar X},{\bar Y})| {\bar Y}, \bfU^{1:t}} (\tilde{Z}_B^{1:N}| {\bar Y}, \bfU^{1:t}_B)&=1
\end{align*}
Note that the computation of both $\tilde{Z}_A^{1:N}$ and $\tilde{Z}_B^{1:N}$  is linear in blocklength. 
From  \eqref{l2-2}-\eqref{l3-2} and the conditional entropy constraints \eqref{iid1}-\eqref{iid2}, it
follows that $\tilde{Z}_A^{1:N}$ and $\tilde{Z}_B^{1:N}$ exist with probability $1-O(2^{-N^{\beta_3}}).$
The rest of proof is devoted to show \eqref{prop8_1} and \eqref{prop8_2}.
For that purpose, we first rewrite \eqref{iid1} and \eqref{iid2} as
\begin{align}
H(f_A({\bar X},{\bar Y}),{\bar X},\bfU^{1:t})-H({\bar X},\bfU^{1:t})&=0 \label{entropy_eq1}\\
H(f_B({\bar X},{\bar Y}),{\bar Y},\bfU^{1:t})-H({\bar Y},\bfU^{1:t})&=0. \label{entropy_eq2}
\end{align}
Let $H_Q(f_A({\bar X},{\bar Y}),{\bar X},\bfU^{1:t}_A)$ refer to the entropy defined by the distribution
$Q_{\bfU_A^{1:t}{\bar X}{\bar Y}}$. For a sufficiently large $N$ and for all $0<\beta_4<\beta_3<1/2$ we have
\begin{align}
 |H_Q(f_A({\bar X},&{\bar Y}),{\bar X},\bfU^{1:t}_A)-H(f_A({\bar X},{\bar Y}),{\bar X},\bfU^{1:t})| \nonumber\\
&  \leq -\|{Q}_{f_A({\bar X},{\bar Y}) {\bar X} \bfU^{1:t}_A}-{P}_{f_A({\bar X}, {\bar Y}) {\bar X} \bfU^{1:t}}\|_1 \log_2\frac{\|{Q}_{f_A({\bar X},{\bar Y}) {\bar X} \bfU^{1:t}_A}-{P}_{f_A({\bar X},{\bar Y}) {\bar X} \bfU^{1:t}}\|_1}{ |{\mathcal Z}_A|^N\, |{\mathcal X}|^N \,2^{Nt}} \label{computation1}\\&
\leq  -\|{Q}_{\bfU^{1:t}_A {\bar X} {\bar Y}}-{P}_{\bfU^{1:t} {\bar X} {\bar Y}}\|_1 \log_2\frac{\|{Q}_{\bfU^{1:t}_A {\bar X} {\bar Y}}-{P}_{\bfU^{1:t} {\bar X} {\bar Y}}\|_1}
{|{\mathcal Z}_A|^N\, |{\mathcal X}|^N \,2^{Nt}} \label{computation2} \\&
\leq N (t+\log_2 |{\mathcal X}|+\log_2 |{\mathcal Z}_A|) \,\|{Q}_{\bfU^{1:t}_A {\bar X} {\bar Y}}-{P}_{\bfU^{1:t} {\bar X} {\bar Y}}\|_1 \nonumber\\&
-\|{Q}_{\bfU^{1:t}_A {\bar X} {\bar Y}}-{P}_{\bfU^{1:t} {\bar X} {\bar Y}}\|_1 \log_2(\|{Q}_{\bfU^{1:t}_A {\bar X} {\bar Y}}-{P}_{\bfU^{1:t} {\bar X} {\bar Y}}\|_1) \nonumber\\&
= O(N 2^{-N^{\beta_3}})+ O(N^{\beta_3} 2^{-N^{\beta_3}})\label{computation3}\\&
= O(2^{-N^{\beta_4}})\label{computation4}
\end{align}
where \eqref{computation1} uses a standard estimate (e.g., \cite[Theorem 17.3.3]{Cover06}),
\eqref{computation2} is implied by the inequality
$$ 
\|{Q}_{\bfU^{1:t}_A {\bar X} {\bar Y}}-{P}_{\bfU^{1:t} {\bar X} {\bar Y}}\|_1 \geq \|{Q}_{f_A({\bar X},{\bar Y}) {\bar X} \bfU^{1:t}_A}-{P}_{f_A({\bar X},{\bar Y}) {\bar X} \bfU^{1:t}}\|_1
$$
and \eqref{computation3} is a consequence of  \eqref{l2-2}.  In the calculations above, $|{\mathcal Z}_A|$ 
denotes the cardinality of the range of $f_A$. Similarly to \eqref{computation4}, we observe that
\begin{align}
 |H_Q({\bar X},\bfU^{1:t}_A)-H({\bar X},\bfU^{1:t})| = O(2^{-N^{\beta_4}}). \label{computation5}
\end{align} 
%{l3-2}\bfU_A^{1:t}
Now estimates \eqref{computation4},~\eqref{computation5} and the equality \eqref{entropy_eq1} imply that
\begin{align}
H_Q(f_A({\bar X},{\bar Y})|{\bar X},\bfU^{1:t}_A)= O(2^{-N^{\beta_4}}) \label{computation6}
\end{align}
for all $0<\beta_4<\beta<1/2$ and $N$ large enough. 

On account of \eqref{l3-2} and \eqref{entropy_eq2} this derivation can be
repeated for $f_B$ as well,  and we obtain
\begin{align}
H_Q(f_B({\bar X},{\bar Y})|{\bar X},\bfU^{1:t}_B)= O(2^{-N^{\beta_4}}). \label{computation7}
\end{align}
Expanding \eqref{computation6}, we get
\begin{align}
\sum_{{\bar x},\bfu^{1:t}_A} Q_{ {\bar X}, \bfU^{1:t}_A}({\bar x},\bfu^{1:t}_A) \sum_{{\bar z_A} \in {\mathcal Z}_A^{N}} 
Q_{ f_A({\bar X}, {\bar Y})| {\bar X}, \bfU^{1:t}_A}   ({\bar z_A}|{\bar x},\bfu^{1:t}_A)
\log_2 \frac{1}{Q_{ f_A({\bar X}, {\bar Y})| {\bar X}, \bfU^{1:t}_A}({\bar z_A}|{\bar x},\bfu^{1:t}_A)} = O(2^{-N^{\beta_4}}) \leq 2^{-N^{\beta_5}} \label{computation8}
\end{align} 
where $\beta_5<\beta_4$ can be chosen arbitrarily close to $\beta_4$ provided that $N$ is sufficiently large. 

Now let us define the set
$$
S=\biggl\{({\bar x},\bfu^{1:t}_A) ~:~ \sum_{{\bar z_A} \in {\mathcal Z}_A^{N}} 
Q_{ f_A({\bar X}, {\bar Y})| {\bar X}, \bfU^{1:t}_A}({\bar z_A}|{\bar x},\bfu^{1:t}_A)
\log_2 \frac{1}{Q_{ f_A({\bar X}, {\bar Y})| {\bar X}, \bfU^{1:t}_A}({\bar z_A}|{\bar x},\bfu^{1:t}_A)} > \sqrt{2^{-N^{\beta_5}}}\biggr \}.
$$
Using \eqref{computation8} we obtain 
  $$
\sum_{({\bar x},\bfu^{1:t}_A)\in S} Q_{ {\bar X}, \bfU^{1:t}_A}({\bar x},\bfu^{1:t}_A) \leq \sqrt{2^{-N^{\beta_5}}}
  $$  
and therefore with probability at least $1-2^{-\frac{N^{\beta_5}}{2}}$ Terminal A can
find a value $\hat{Z}_A^{1:N}$ such that 
\begin{align}
\frac{1-Q_{f_A({\bar X},{\bar Y})|{\bar X},\bfU^{1:t}_A}(\hat{Z}_A^{1:N} |{\bar x},\bfu^{1:t}_A)}{(e-1)\ln 2} \leq  2^{-\frac{N^{\beta_5}}{2}}.\label{computation9}
\end{align}  
Here we have used the inequality $(1-x)/(e-1)\le - x\ln x, e^{-1}\le x\le 1$ which can be proved by differentiation.
From \eqref{computation9} we obtain
\begin{align}
Q_{f_A({\bar X},{\bar Y})|{\bar X},\bfU^{1:t}_A}(\hat{Z}_A^{1:N} |{\bar x},\bfu^{1:t}_A) \geq 1-\ln 2 \,(e-1)\, 2^{-\frac{N^{\beta_5}}{2}}. \label{computation11}
\end{align} 
Hence, from \eqref{computation11}, we conclude that Terminal A can calculate the function $f_A({\bar X},{\bar Y})$ correctly with probability at least
$$
(1-2^{-\frac{N^{\beta_5}}{2}})\left[1-\ln 2 \,(e-1) \,2^{-\frac{N^{\beta_5}}{2}}\right] 
= 1-O(2^{-N^{\beta_6}}).
$$
Repeating the derivation above starting with \eqref{computation7}, we prove that 
Terminal B can find a value $\hat{Z}_B^{1:N}$ and thus calculate the function $f_B({\bar X},{\bar Y})$ correctly with probability $1-O(2^{-N^{\beta_6}}).$
\remove{
The next step is to argue that both $\hat{Z}_A^{1:N}$ and $\hat{Z}_B^{1:N}$ can be computed
in $O(N)$ time with a high probability. For that purpose,
define $\tilde{Z}_A^{1:N}$ and $\tilde{Z}_B^{1:N}$ as the values which satisfy
\begin{align*}
P_{f_A({\bar X},{\bar Y})| {\bar X}, \bfU^{1:t}} (\tilde{Z}_A^{1:N}| {\bar X}, \bfU^{1:t}_A)&=1\\
P_{f_B({\bar X},{\bar Y})| {\bar Y}, \bfU^{1:t}} (\tilde{Z}_B^{1:N}| {\bar Y}, \bfU^{1:t}_B)&=1
\end{align*}
Note that the computation of both $\tilde{Z}_A^{1:N}$ and $\tilde{Z}_B^{1:N}$  is linear in blocklength. 
From  \eqref{l2-2}-\eqref{l3-2} and the conditional entropy constraints \eqref{iid1}-\eqref{iid2}, it
follows that $\tilde{Z}_A^{1:N}$ and $\tilde{Z}_B^{1:N}$ exist with probability $1-O(2^{-N^{\beta_3}}).$
}
Lastly, using \eqref{l2-2}-\eqref{l3-2} again, we observe that 
\begin{align*}
\text{Pr} \{ \tilde{Z}_A^{1:N}=\hat{Z}_A^{1:N} \} &=1-O(2^{-N^{\beta_3}}) \\
\text{Pr} \{ \tilde{Z}_B^{1:N}=\hat{Z}_B^{1:N} \} &=1-O(2^{-N^{\beta_3}}).
\end{align*}
Hence, we conclude
\begin{align*}
\text{Pr} \{ f_A({\bar X},{\bar Y}) =\tilde{Z}_A^{1:N} \}= 1-O(2^{-N^{\beta_6}})-O(2^{-N^{\beta_3}})= 1-O(2^{-N^{\beta_7}}) \\
\text{Pr} \{ f_B({\bar X},{\bar Y}) =\tilde{Z}_B^{1:N} \}= 1-O(2^{-N^{\beta_6}})-O(2^{-N^{\beta_3}})= 1-O(2^{-N^{\beta_7}})
\end{align*}
as desired.
%This completes the proof of Proposition \ref{prop:4}. 
\end{proof}

The proof that the rate region \eqref{rate} can be achieved using polar coding is now complete.

\vspace*{.1in} In conclusion we note that all the proofs presented in Sections \ref{sect:firstround}, \ref{sect:remaining_rounds}, and \ref{sect:multiround}
can be extended to the case when the auxiliary random variables $U^1,U^2,\dots,U^t$ are not binary using for instance the methods in \cite{sas09}, \cite{park13}.
Another alternative is viewing $U^i$ as the composition of bits $U^{i,1},\dots, U^{i,r}$ and
dividing each round of communication into r steps each of which are responsible from the 
conditional distribution $Q(u^{i,k}|u^{1:i-1},u^{(i,1):(i,k-1)},{\bar x}), k=1,2,\dots,r$. We confine ourselves to
this brief remark, leaving the details to the reader.

\subsection{An example of interactive function computation}\label{sect:example}
As observed earlier, to complete the description of the communication scheme we need to specify
the random variables $U^1,U^2,\dots,U^t$ that satisfy the Markov chain 
conditions and conditional entropy equalities in (\ref{rate}). The description of these random variables depends on the function being computed and
is studied on a case-by-case basis. 

Following \cite{MaIshwar11} consider the example in which 
Terminals A and B observe binary random sequences with 
$X\sim \text{Ber}(p)$, $Y\sim \text{Ber}(q)$, where $X$ and $Y$ are independent.
Suppose that both terminals need to compute
the AND function, i.e., $f_A(x,y)=f_B(x,y)=x\wedge y$. 
We can assume that there exist random variables $(V_x,V_y)\sim \text{Uniform}([0,1]^2)$ such that
$X\triangleq {\mathbbm 1}_{[1-p,1]}(V_x)$ and $Y\triangleq {\mathbbm 1}_{[1-q,1]}(V_y)$.
Further, let  $\Gamma\triangleq \{(\alpha(s),\beta(s)),
0\leq s \leq 1\}$ be a curve defined parametrically with boundary conditions $\alpha(0)=\beta(0)=0$, $\alpha(1)=1-p$ and $\beta(1)=1-q$ 
and let $0=s_0<s_1<\dots<s_{t/2-1}<s_{t/2}=1$ be a partition of the segment $[0,1].$ 
Consider the following $t$ random variables
\begin{equation}
\label{auxiliary_var}
\begin{aligned}&U^{2i-1}\triangleq {\mathbbm 1}_{[\alpha(s_i),1]\times [\beta(s_{i-1}),1]} (V_x,V_y)\\
&U^{2i}\triangleq {\mathbbm 1}_{[\alpha(s_i),1]\times [\beta(s_{i}),1]} (V_x,V_y)
\end{aligned}
\end{equation}
where $i=1,\dots,t/2.$ 
In \cite{MaIshwar11} it is shown that for all partitions and curves $\Gamma$ of the form defined above, random variables \eqref{auxiliary_var} 
satisfy
both the Markov chain and the conditional entropy constraints in (\ref{rate}). 

Hence, for the AND function, we can construct a polar-coded communication scheme based on (\ref{auxiliary_var}).
In each transmission round, we can construct codes following the partition of the index set $[N]$ as in (\ref{partition2}) and (\ref{eq:pt2}).
For example, according to (\ref{eq:pt2}) we have to determine 
the noiseless and noisy bits of the transmission for the channel with binary input $U^{i+1}$ and output $(X,U^{1:i}).$ 
After $n=\log_2 N$ iterations the size of the output alphabets of the virtual channels obtained
will be $2^{N(i+1)}=2^{2^n(i+1)}$. To simplify the computations involved in the code construction one can rely on the alphabet reduction 
methods proposed in \cite{Tal2011}.

Moreover, the choice of random variables according to \eqref{auxiliary_var} minimizes the {\em sum-rate} 
$\sum_{j=1}^t R_j$ for $t\to\infty.$ (See \cite{MaIshwar13} for the proof.) In \cite{MaIshwar11}, it is also shown that
\be
R_{\text{sum},\infty}=h_2(p)+ph_2(q)+p\log_2(q)+p(1-q)\log_2 e< h_2(p)+ph_2(q)= R_{\text{sum},2}^{A}\label{sumratecomparison}
\ee
where $h_2$ is the binary entropy function,  $R_{\text{sum},\infty}$ is the minimum sum-rate as $t\to\infty$, and
$R_{\text{sum},2}^{A}$ is the minimum sum-rate for the case $t=2$ and it is Terminal A that transmits first. This example
shows that for the problem of computing the AND function one can gain by performing several rounds on interactive communication.

\section{Polar Codes for Collocated Networks}\label{sect:networks}
In this section we consider the multi-terminal function computation problem introduced in Sect.~\ref{sect:m_terminal}. 
We will show that the polar-coded communication scheme introduced above can be modified to achieve the
rate region given in Theorem \ref{region2}.

Let $U^1,\dots,U^t$ be random variables that satisfy the Markov chain conditions and conditional entropy conditions 
of Theorem~\ref{region2}. 

\subsection{Communication protocol}
\label{commprotocol}

Before starting to explain the protocol, we define ${P}$ as
\begin{align}
&{P}_{\bfV^{1:t},\bfU^{1:t},\bfX^{1:m}}(\bfv^{1:t},\bfu^{1:t},\bfx^{1:m})\nonumber\\&
=\prod_{i=1}^t{\mathbbm 1}((u^i)^{1:N}G_N=(v^i)^{1:N})
%\nonumber\\&\times(v^1)^i
\prod_{k=1}^N 
{P}_{X^{1:m},U^{1:t}}((x^1)^k,\dots,(x^m)^k,(u^1)^k,\dots,(u^t)^k)
\label{collocatedprob}
\end{align}
where $\bfx^{1:m}\triangleq ((x^1)^{1:N},\dots, (x^m)^{1:N})$, 
$\bfv^{1:t}\triangleq ((v^1)^{1:N},\dots, (v^t)^{1:N})$,
and $\bfu^{1:t}\triangleq ((u^1)^{1:N},\dots, (u^t)^{1:N})$.
Similarly to Section \ref{sect:mainpart}, the aim of
the communication is to let the terminals generate $\bfU^{1:t}$ such that
the joint distribution of $\bfX^{1:m}$ and $\bfU^{1:t}$ is
close to $P_{\bfU^{1:t},\bfX^{1:m}}$.

Suppose that the transmission starts with Terminal $1.$ We again rely on the partition of $[N]$ of the form
\begin{align}
\left.
\begin{array}{l}
{\mathcal F}_r={\mathcal L}^c_{U^1} \cap {\mathcal H}_{U^1|X^1}\\
{\mathcal F}_d={\mathcal L}_{U^1}\\ [.05in]
{\mathcal I}={\mathcal L}^c_{U^1} \cap {\mathcal H}^c_{U^1|X^1}
\end{array}\right\} \label{partition3}
\end{align}
similarly to (\ref{partition2}). Having observed a realization
$(x^1)^{1:N},$ the first terminal finds a sequence $(v^1)^{1:N}$ by sampling
from the distribution
\begin{align}
{Q}_{(V^1)^i|(V^1)^{1:i-1},(X^1)^{1:N}}&((v^1)^i|(v^1)^{1:i-1},(x^1)^{1:N})
\nonumber\\&
=
\begin{cases}
1/2, &i\in {\mathcal F}_r\\
{P}_{(V^1)^i|(V^1)^{1:i-1}}((v^1)^i|(v^1)^{1:i-1}), &i\in {\mathcal F}_d\\
{P}_{(V^1)^i|(V^1)^{1:i-1},(X^1)^{1:N}}((v^1)^i|(v^1)^{1:i-1},(x^1)^{1:N}),& i\in {\mathcal I}.
\end{cases}
\label{qdistA}
\end{align}
Based on $(v^1)^{1:N}$ Terminal 1 finds the sequence $(u^1)^{1:N}=(v^1)^{1:N} G_N.$
and broadcasts the bits $(v^1)^i, i\in {\mathcal I}$. 
The remaining terminals including the sink terminal
calculate their versions of $(v^1)^i, i\notin {\mathcal I}$ from the conditional distribution
%\marginpar{\small They do not exactly calculate $v^1$.}
\begin{align*}
&{Q}_{(V^1)^i|(V^1)^{1:i-1}}((v^1)^i|(v^1)^{1:i-1})= \begin{cases}
1/2, &i\in {\mathcal F}_r,\\
{P}_{(V^1)^i|(V^1)^{1:i-1}}((v^1)^i|(v^1)^{1:i-1}), &i\in {\mathcal F}_d.
\end{cases}
\end{align*}
Then they find the sequence $(u^1)^{1:N}=(v^1)^{1:N} G_N$ and record the result\footnote{With small probability
the sequences $(u^1)^{1:N}$ computed at different terminals will be different; see also Sect.~\ref{sect:analysis} below. Abusing notation, we do not
differentiate them below in this section.}.

Note that for large $N$ the rate of communication converges to $R_1=\lim_{N\to\infty}|{\mathcal I}|/N=I(U^1;X^1),$ consistent with (\ref{rate1point5}).

In general, the $i^{\text{th}}$ message, $i\in[t]$ is generated and sent  by Terminal $j, j= (i-1)\, \text{\rm mod}\, m+1.$
At the start of the $i^{\text{th}}$ round of communication we assume that all the terminals  
have the same $i-1$ sequences $\bfu^{1:i-1},$ each of which was computed
as a result of the previous messages. Terminal $j$ first relies on the partition of $[N]$ given by
\begin{align}
\left.
\begin{array}{l}
{\mathcal F}_r^i={\mathcal L}^c_{U^i} \cap {\mathcal H}_{U^i|X^j,U^{1:i-1}}\\
{\mathcal F}_d^i={\mathcal L}_{U^i}\\[.05in]
{\mathcal I}^i={\mathcal L}^c_{U^i} \cap {\mathcal H}^c_{U^i|X^j,U^{1:i-1}}.
\end{array}\right\}\label{partition4}
\end{align}
and finds computes $(v^i)^{1:N}$ by sampling from the distribution
\begin{align}
&{Q}_{(V^i)^k|(V^i)^{1:k-1},(X^j)^{1:N},\bfU^{1:i-1}}((v^i)^k|(v^i)^{1:k-1},(x^j)^{1:N},\bfu^{1:i-1})
\nonumber\\&=
\begin{cases}
1/2, &k\in {\mathcal F}_r^i\\
{P}_{(V^i)^k|(V^i)^{1:k-1}}((v^i)^k|(v^i)^{1:k-1}), &k\in {\mathcal F}_d^i\\
{P}_{(V^i)^k|(V^i)^{1:k-1},(X^j)^{1:N},\bfU^{1:i-1}}
((v^i)^k|(v^i)^{1:k-1},(x^j)^{1:N},\bfu^{1:i-1}), &k\in {\mathcal I}^i.
\end{cases}
\label{qdistB}
\end{align}
Then, as usual,  Terminal $j$ 
computes $(u^i)^{1:N}=(v^i)^{1:N}G_N$ and broadcasts the sequence $(v^i)^k, k\in {\mathcal I'}^i$, where
${\mathcal I'}^i={\mathcal I}^i\backslash{\mathcal L}^c_{U^i} \cap {\mathcal L}_{U^i|U^{1:i-1}}$.
Since ${\mathcal L}_{U^i|U^{1:i-1}}
\subseteq {\mathcal L}_{U^i|X^j,U^{1:i-1}}$ implies the inclusion ${\mathcal L}^c_{U^i} \cap {\mathcal L}_{U^i|U^{1:i-1}}
\subseteq {\mathcal I}^i$, the rate of this broadcast converges to
\begin{align*}
\lim_{N\to\infty}\frac{|{\mathcal I'}^i|}{N}&=I(U^i;U^{1:i-1},X^j)-I(U^i;U^{1:i-1})\\
&=H(U^i|U^{1:i-1})-H(U^i|U^{1:i-1},X^j)\\
&=I(X^j;U^i|U^{1:i-1})
\end{align*}
in accordance with (\ref{rate1point5}). Based on the sequence $(v^i)^k, k\in {\mathcal I'}^i,$
the remaining terminals determine $(v^i)^k, k\notin {\mathcal I'}^i$ by sampling from the distribution
\begin{align}
{Q}_{(V^i)^k|(V^i)^{1:k-1},\bfU^{1:i-1}}
&((v^i)^k|(v^i)^{1:k-1},\bfu^{1:i-1})\nonumber\\
&= \begin{cases}
1/2, &k\in {\mathcal F}_r^i\\
{P}_{(V^i)^k|(V^i)^{1:k-1}}((v^i)^k|(v^i)^{1:k-1}), &k\in {\mathcal F}_d^i\\
{P}_{(V^i)^k|(V^i)^{1:k-1},\bfU^{1:i-1}}
((v^i)^k|(v^i)^{1:k-1},\bfu^{1:i-1}) &k\in {\mathcal I}^i\backslash{\mathcal I'}^i.
\end{cases}
\label{qdistC}
\end{align}
As a result, the remaining terminals acquire their versions of the sequence $(u^i)^{1:N}.$

\subsection{The analysis of the protocol}\label{sect:analysis}
To show that the proposed protocol attains the overall goal of function computation we need to show
two facts. First, we should prove in each round the sequences $(u^i)^{1:N}$ found by the receiving terminals
with high probability are the same as the sequence computed by the broadcasting terminal.
Second, we need to prove that the sequences $\bfu^{1:N}$ we obtain have a
joint distribution with the source sequence which is very close to the distribution $P$ given by \eqref{collocatedprob}, making it possible to satisfy the condition
% In other words, we need to show that these sequences are close
%to $N$ i.i.d. versions of the joint distribution which makes
$H(f(X^{1:m})|U^{1:t})=0.$
This entails the same problem as the one we faced in Section \ref{sect:firstround}: namely, 
to prove one of these facts directly, we need the other one. As in Lemma \ref{l2} in Section \ref{sect:firstround}, we will prove both statements simultaneously by induction.
Similarly to the above, we assume that the terminals are provided with random bits whose indices fall in
the subsets ${\mathcal F}_r^1,\dots,{\mathcal F}_r^t$.

Let us introduce some notation. Denote by $(U^l)_j^{1:N}$ the random sequence generated by Terminal 
$j= (i-1)\, \text{\rm mod}\, m+1$ in Round $l$ and by $(U^l)_r^{1:N}$
the sequence computed by Terminal $r\neq j$ after the transmission by Terminal $j.$
Denote by $Q_{\bfU^{1:t}, \bfX^{1:m}}$ the joint distribution of the source sequences $\bfx^{1:m}=((x^1)^{1:N}, (x^2)^{1:N},\dots, (x^m)^{1:N})$ and the sequences $\bfu^{1:t}=((u^1)^{1:N}, (u^2)^{1:N},\dots, (u^t)^{1:N})$ 
generated in the course of the communication. More formally, we define
$Q_{\bfU^{1:t}, \bfX^{1:m}}$ as the marginal distribution of
\begin{align*}
{Q}_{\bfU^{1:t} \bfV^{1:t}  \bfX^{1:m}}(\bfu^{1:t}, \bfv^{1:t}, \bfx^{1:m}) &=\prod_{k=1}^N P_{X^{1:m}} ((x^1)^k,\dots,(x^m)^k)  
\prod_{i=1}^{t}{\mathbbm 1}((u^{i})^{1:N}G_N=(v^{i})^{1:N})\\
&\times \prod_{i=1}^{t}  \prod_{k=1}^N
{Q}_{(V^{i})^k |(V^{i})^{1:k-1},((X^j)^{1:N},{\bfU}^{1:i-1})} ((v^{i})^k | (v^{i})^{1:k-1},((x^j)^{1:N},{\bfu}^{1:i-1}))
\end{align*}
where $Q_{(V^{i})^k |(V^{i})^{1:k-1},((X^j)^{1:N},{\bfU}^{1:i-1})}((v^i)^k|(v^i)^{1:k-1},(x^j)^{1:N},\bfu^{1:i-1})$ is given in (\ref{qdistB}).
\begin{lemma}
For any $\beta_7<\beta \in (0,1/2)$ and for all $l\in [t]$, and for all $r\neq j$, starting from some $N$, 
we have 
\begin{align}
\Pr\{(U^l)_j^{1:N}=(U^l)_r^{1:N}\}=1-O(2^{-N^{\beta_7}}) \label{eq:l4}\\
\|{Q}_{\bfU^{1:t}  \bfX^{1:m}}-{P}_{\bfU^{1:t} \bfX^{1:m}}\|_1 = O(2^{-N^{\beta_7}}) \label{l1end}
\end{align}
\label{l4}
\end{lemma}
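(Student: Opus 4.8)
The plan is to prove Lemma \ref{l4} by induction on the round index $l$, mirroring the two-terminal argument of Lemmas \ref{l1} and \ref{l2} (as extended in Lemma \ref{lemma:ml}) and exploiting the fact that the multiterminal problem is in many respects \emph{simpler}: the sources $X^1,\dots,X^m$ are independent, and every broadcast message is heard by all terminals, so the only distinction among the listeners is which ``side information'' $X^r$ or $U^{1:i-1}$ each one holds. Concretely, I would establish the joint claim that, after Round $l$, (a) all terminals agree on $(U^1)^{1:N},\dots,(U^l)^{1:N}$ except with probability $O(2^{-N^{\beta_7}})$, and (b) $\|Q_{\bfU^{1:l}\bfX^{1:m}}-P_{\bfU^{1:l}\bfX^{1:m}}\|_1 = O(2^{-N^{\beta_7}})$. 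The base case $l=1$ is a direct transcription of Lemmas \ref{l1} and \ref{l2}: the broadcasting terminal $1$ uses the partition \eqref{partition3} and samples from \eqref{qdistA}, which is exactly the Round-1 two-terminal scheme with $X^1$ playing the role of $X$ and ``everything the listeners know'' playing the role of $Y$; the proximity of $Q_{(V^1)^{1:N}(X^1)^{1:N}}$ to $P_{(V^1)^{1:N}(X^1)^{1:N}}$ and the agreement of the decoded $(U^1)^{1:N}$ across terminals follow verbatim.

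For the induction step, suppose the claim holds through Round $i-1$, so all terminals share the same $\bfu^{1:i-1}$ with high probability and $\|Q_{\bfU^{1:i-1}\bfX^{1:m}}-P_{\bfU^{1:i-1}\bfX^{1:m}}\|_1$ is exponentially small. In Round $i$, terminal $j=(i-1)\bmod m+1$ partitions $[N]$ as in \eqref{partition4} and samples from \eqref{qdistB}. The key observations are: first, the Bhattacharyya-parameter inclusions needed for the scheme to make sense, namely ${\mathcal L}_{U^i|U^{1:i-1}}\subseteq {\mathcal L}_{U^i|X^j,U^{1:i-1}}$ (so that ${\mathcal I'}^i$ is well-defined as a subset of ${\mathcal I}^i$), which follows because $U^{1:i-1}$ is a coarsening of $(X^j,U^{1:i-1})$; second, the Markov chain condition (ii) of Theorem \ref{region2}, $U^i\to(U^{1:i-1},X^j)\to(X^{1:j-1},X^{j+1:m})$, which is precisely the analogue of $U^1\to X\to Y$ and is what guarantees that a non-transmitting terminal $r$ holding $X^r$ gains nothing extra: ${\mathcal L}_{U^i|U^{1:i-1},X^r}={\mathcal L}_{U^i|U^{1:i-1}}$ on the relevant index set, so that terminal $r$ can reconstruct the bits outside ${\mathcal I'}^i$ by sampling from \eqref{qdistC} exactly as terminal $j$ intended. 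Conditioning on the high-probability event that $\bfu^{1:i-1}$ is common to all terminals, the single-round analysis of Lemma \ref{l2} applies directly with $(X^j,U^{1:i-1})$ in place of $X$ and $U^{1:i-1}$ (the listeners' common knowledge) in place of $Y$, giving both the agreement of $(U^i)^{1:N}$ across all terminals and the closeness of the updated $Q$ to the updated $P$. A triangle-inequality decomposition of the type used in \eqref{remroundineq5} — splitting $\|Q_{\bfV^{1:i}\bfX^{1:m}}-P_{\bfV^{1:i}\bfX^{1:m}}\|_1$ into the Round-$(i-1)$ error plus a conditional single-round term — then propagates the bound, with the finitely many constants absorbed into the $O(\cdot)$ and $\beta_7$ chosen slightly below $\beta$.

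The main obstacle, as in the two-terminal case, is the interdependence noted in Section \ref{sect:analysis}: proving the agreement statement \eqref{eq:l4} seems to require the distributional closeness \eqref{l1end}, and vice versa, so neither can be bootstrapped alone — this is exactly why the two must be carried jointly through the induction. A secondary technical point specific to the network setting is that there are $m-1$ listeners in each round rather than one, so the agreement event is an intersection of $m-1$ events; but since $m$ is a fixed constant and each individual failure probability is $O(2^{-N^{\beta}})$, a union bound costs only a constant factor and does not affect the exponent. I would also need to check that the ``frozen'' random bits in ${\mathcal F}_r^1,\dots,{\mathcal F}_r^t$ are consistently shared (as assumed in Section \ref{sect:analysis}) so that the sampling rules \eqref{qdistB} and \eqref{qdistC} agree on those coordinates by construction; this is a bookkeeping matter rather than a genuine difficulty. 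Finally, the deterministic coordinates ${\mathcal F}_d^i={\mathcal L}_{U^i}$ are handled exactly as in Lemma \ref{l1}: the $\argmax$ rule biases the empirical distribution only on a vanishing fraction of coordinates with exponentially small per-coordinate cost, and the careful accounting from the appendix proof of Lemma \ref{l1} carries over unchanged.
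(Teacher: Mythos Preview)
Your proposal is correct and follows essentially the same approach as the paper: induction on the round index, reduction of each round to the two-terminal Lemmas \ref{l1} and \ref{l2} via the Markov condition $U^i\to(U^{1:i-1},X^j)\to(X^{1:j-1},X^{j+1:m})$ of Theorem \ref{region2}, and a triangle-inequality decomposition of the $L_1$ distance into the accumulated error plus a single-round $\widehat Q$-to-$P$ term. The paper's proof is slightly terser on the listener side (it simply notes that \eqref{eq:l4} for $t=1$ is the special case of \eqref{eq:agreement} with $Y^{1:N}$ absent, since listeners in \eqref{qdistC} use only $\bfu^{1:i-1}$), whereas you additionally spell out the union bound over the $m-1$ listeners and the shared-randomness bookkeeping, but these are cosmetic differences.
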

\begin{proof}
We begin with the case $t=1$ in which case \eqref{l1end} takes the form
\begin{align}
&\|{Q}_{(U^1)^{1:N} \bfX^{1:m}}-{P}_{(U^1)^{1:N} \bfX^{1:m}}\|_1  =O(2^{-N^{\beta_7}}). \label{l1A}
\end{align}
Recall that from Lemma \ref{l1} we have the estimate
  \begin{equation}\label{eq:ll1}
\|{Q}_{(U^1)^{1:N} (X^1)^{1:N}}-{P}_{(U^1)^{1:N} (X^1)^{1:N}}\|_1 = O(2^{-N^{\beta_1}}).
   \end{equation}
On account of the Markov condition $U^1\rightarrow X^1 \rightarrow X^{2:m}$ in the statement of Theorem
\ref{region2}, 
%there is nothing that Terminal $1$ loses by not observing $(x^2)^{1:N},\dots,(x^m)^{1:N}$. More precisely, 
we have
\begin{align}
&{\mathcal L}_{U^1}^c \cap {\mathcal H}_{U^1|X^1}={\mathcal L}_{U^1}^c \cap {\mathcal H}_{U^1|X^{1:m}}\nonumber\\
&{\mathcal L}_{U^1}^c \cap {\mathcal H}^c_{U^1|X^1}={\mathcal L}_{U^1}^c \cap {\mathcal H}^c_{U^1|X^{1:m}}
\label{markovA}
\end{align}
Hence for all $i \in {\mathcal I}= {\mathcal L}_{U^1}^c \cap {\mathcal H}^c_{U^1|X^1}$ we have
\begin{align}
{P}&_{(V^1)^i|(V^1)^{1:i-1},(X^1)^{1:N}}((v^1)^i|(v^1)^{1:i-1},(x^1)^{1:N}) 
={P}_{(V^1)^i|(V^1)^{1:i-1}, \bfX^{1:m}}((v^1)^i|(v^1)^{1:i-1}, \bfx^{1:m})
\label{markovB}
\end{align}
From (\ref{markovA}) and (\ref{markovB}), we see that (\ref{qdistA}) is fully equivalent to
the computation which uses $\bfx^{1:m}$ rather than just $(x^1)^{1:N}$
in the conditional probability for the case $i\in {\mathcal I}$. Therefore, (\ref{l1A}) follows
from Lemma \ref{l1}, completing the proof of \eqref{l1end} for $t=1$. In regards to  \eqref{eq:l4} we note
that for $t=1$ it reduces to a special case 
of \eqref{eq:agreement}  in which $Y^{1:N}$ is unavailable.

Our next step is to generalize (\ref{l1A}) to $t$ broadcasts, i.e., to show (\ref{l1end}).
%\begin{align}
%&\|{Q}_{(U^1)^{1:N},\dots,(U^t)^{1:N},(X^1)^{1:N},\dots,(X^m)^{1:N}}\nonumber\\&-{P}_{(U^1)^{1:N},
%\dots, (U^t)^{1:N}, (X^1)^{1:N},\dots,(X^m)^{1:N}}\|_1  \leq 2^{-N^{\beta}} \label{l1end}
%\end{align}
For that purpose, similarly to the triangle inequality method used in Sections \ref{sect:firstround} and \ref{sect:multiround}, we write
\begin{align}
\|{\mathbb Q}_{\bfU^{1:i} \bfX^{1:m}}-{\mathbb P}_{ \bfU^{1:i} \bfX^{1:m}}\|_1
\leq \|{\mathbb Q}_{\bfU^{1:i-1} \bfX^{1:m}}-{\mathbb P}_{ \bfU^{1:i-1} \bfX^{1:m}}\|_1
+
\|\widehat{\mathbb Q}_{\bfU^{1:i} \bfX^{1:m}}-{\mathbb P}_{\bfU^{1:i} \bfX^{1:m}}\|_1\label{collocatedineq}
\end{align}
where 
\begin{align}
\widehat{Q}_{\bfU^{1:i} \bfX^{1:m}}(\bfu^{1:i},\bfx^{1:m})&=%Q((u^i)^{1:N}|\bfu^{1:i-1},\bfx^{1:m}) 
%P_{\bfU^{1:i-1},\bfX^{1:m}}(\bfu^{1:i-1},\bfx^{1:m})\nonumber\\
 {\mathbbm 1}((v^i)^{1:N}G_N=(u^i)^{1:N}) \, P_{\bfU^{1:i-1},\bfX^{1:m}}(\bfu^{1:i-1},\bfx^{1:m}) \nonumber\\
 &\times \prod_{k=1}^N 
Q_{(V^i)^k|(V^i)^{1:k-1},(X^j)^{1:N},\bfU^{1:i-1}}((v^i)^k|(v^i)^{1:k-1},(x^j)^{1:N},\bfu^{1:i-1}). 
\end{align}

Now we are ready to use induction. From (\ref{l1A}) we see that (\ref{l1end}) is true for
$t=1$. Assume that it is also true for $t=i-1$. Eq. (\ref{collocatedineq}) implies that
to prove (\ref{l1end}) holds for $t=i$, it is sufficient to show
\begin{align}
\|\widehat{\mathbb Q}_{\bfU^{1:i} \bfX^{1:m}}-{\mathbb P}_{\bfU^{1:i} \bfX^{1:m}}\|_1 =O(2^{-N^{\beta_7}}). \label{collocatedineq2}
\end{align}
Since the marginal of $\widehat{Q}$ for $\bfu^{1:i-1},\bfx^{1:m}$ equals ${P}$,
to show \eqref{collocatedineq2}
we need to focus on the error introduced by the $i^{\text{th}}$ round of sequence generation.
Owing to the Markov chain condition $U^i\rightarrow(U^{1:i-1},X^j)\rightarrow(X^{1:j-1},X^{j+1:m})$, 
for all $k\in {\mathcal I}^i={\mathcal L}^c_{U^i} \cap {\mathcal H}^c_{U^i|X^j,U^{1:i-1}}=
{\mathcal L}^c_{U^i} \cap {\mathcal H}^c_{U^i|X^{1:m},U^{1:i-1}}$ we have
\begin{align*}
{P}&_{(V^i)^k|(V^i)^{1:k-1},(X^j)^{1:N},\bfU^{1:i-1}}%\nonumber\\
((v^i)^k|(v^i)^{1:k-1},(x^j)^{1:N},\bfu^{1:i-1}) \\ 
&= {P}_{(V^i)^k|(V^i)^{1:k-1},\bfX^{1:m},\bfU^{1:i-1}}
((v^i)^k|(v^i)^{1:k-1},\bfx^{1:m},\bfu^{1:i-1}) \hspace*{.5in}
\end{align*}
Therefore (\ref{collocatedineq2}) follows from Lemma \ref{l1}. This completes the induction argument for \eqref{l1end}.

Finally let us justify the induction step for \eqref{eq:l4} for $t=i.$ For this assume that \eqref{eq:l4} and
\eqref{l1end} hold for $t=i-1$ and note that the proof follows the steps in the proof of Lemma \ref{l2} with no
changes.
\end{proof}

In regards to the function computation, we note that the analysis carried out in Section \ref{sect:strongtypicality}, implies that the sink node computes
the function $f(\bfX^{1:m})$ correctly with probability converging to $1$ as $N$ goes to
infinity. This completes the proof of achievability for the region (\ref{rate2}) using the described
polar coding scheme.

The analysis presented in this section can be easily modified to account for the case of nonbinary
auxiliary random variables $U^1,\dots,U^t.$ The 
remarks made in the end of Section \ref{sect:strongtypicality} apply to the present case as well.
%\end{proof}
\section{Conclusion}
In this paper, we have considered the two-terminal interactive function computation problem
of \cite{MaIshwar11} and its generalization to many terminals given in  \cite{MaIshwar12}.
For these problems we designed constructive schemes based on polar codes that achieve
the optimal rates established earlier by information-theoretic considerations.
%Finding the optimizing auxiliary random variables in the achievability theorems
%is generally not immediate, and we are not aware of solving this question for
%an arbitrary given function and joint distribution of the sources.
%Given the auxiliary random variables identified in Theorem , 
The communication scheme
designed in this paper supports distributed computation under the rates of data exchange
that approach the optimal values.

\vspace*{.2in}{\sc Acknowledgment:} The authors are grateful to their colleague Prakash Narayan who drew their attention to the problems
of interactive computation.

\appendix
\section{Proof of Lemma \ref{l1}}\label{sect:AA}
%{\em Proof of Lemma \ref{l1}:}
To simplify the notation, in the proof we write $Q({\bar u^1},{\bar x},{\bar y})$, $Q({\bar v^1},{\bar x})$
% and $P({\bar u^1},{\bar x},{\bar y})$, $P({\bar v^1},{\bar x})$ 
%omitting the subscript random variables when there is no ambiguity,
instead of 
$${Q}_{{\bar U^1_A} {\bar X} {\bar Y}}({\bar u^1_A},{\bar x},{\bar y}), Q_{{\bar V^1_A} {\bar X}}({\bar v^1_A},{\bar x})$$ etc.
and extend this convention to the distributions derived 
from $P$ as well as the corresponding conditional and marginal distributions. Recall also the notational
convention 
$\bar X=X^{1:N}, \bar x=x^{1:N}$ etc. from Sect.~\ref{sect:mainpart}.
% of the form ${P}_{(V^1)^i|(V^1)^{1:i-1}{\bar X}}((v^1)^i|(v^1)^{1:i-1},{\bar x}).$
 
First let us rewrite $P({\bar u^1},{\bar x},{\bar y})$ as
  \begin{align}
  P({\bar u^1},{\bar x},{\bar y})&=\prod_{i=1}^N P_{XYU^1}(x^i,y^i,(u^1)^i)\nonumber\\&=\prod_{i=1}^N P_{XY}(x^i,y^i)P_{U^1|X}((u^1)^i|x^i)
  \label{newmarkov}\\& =P({\bar x},{\bar y}) P({\bar u^1}|{\bar x}) \label{newmarkov1}
  \end{align}
  where (\ref{newmarkov}) is due to $U^1\rightarrow X\rightarrow Y$. 
 Now note that according to (\ref{qdist}) Terminal A has to generate the sequence ${\bar u^1}$ based only on ${\bar x}$ because 
  it does not have access to ${\bar y}$. So, for all ${\bar u^1},{\bar x},{\bar y}$ it follows that
  \begin{align}
  Q({\bar u^1},{\bar x},{\bar y})&=Q({\bar x},{\bar y}) Q({\bar u^1}|{\bar x})=P({\bar x},{\bar y}) Q({\bar u^1}|{\bar x}).\label{newmarkov2}
  \end{align}
Using (\ref{newmarkov1}) and (\ref{newmarkov2}) we compute
  \begin{align}
  \sum_{{\bar u^1},{\bar x},{\bar y}} &|Q({\bar u^1},{\bar x},{\bar y})-P({\bar u^1},{\bar x},{\bar y})|=
  \sum_{{\bar u^1},{\bar x},{\bar y}} P({\bar x},{\bar y})\, \left|Q({\bar u^1}|{\bar x})-P({\bar u^1}|{\bar x})\right|\nonumber\\&
  =\sum_{{\bar u^1},{\bar x}} P({\bar x})\, \left|Q({\bar u^1}|{\bar x})-P({\bar u^1}|{\bar x})\right|\nonumber\\&
  =\sum_{{\bar u^1},{\bar x}} |Q({\bar u^1},{\bar x})-P({\bar u^1},{\bar x})|.\label{delta1}
  \end{align}
Denote the right-hand side of \eqref{delta1} by $\Delta(P,Q).$
\remove{  Let 
$$
   \Delta(P,Q)=\sum_{{\bar v^1},{\bar x}} |Q({\bar v^1},{\bar x})-P({\bar v^1},{\bar x})|.
   $$}
Since Ar{\i}kan's transform is a one-to-one map between ${\bar u^1}$ and ${\bar v^1}$, we have
  \begin{equation}
  \sum_{{\bar v^1},{\bar x}} |Q({\bar v^1},{\bar x})-P({\bar v^1},{\bar x})|=\Delta(P,Q).\label{delta2}
  \end{equation}
 Then from (\ref{delta1}) and (\ref{delta2}) we conclude that
 \begin{equation*}
\sum_{{\bar v^1},{\bar x}} |Q({\bar v^1},{\bar x})-P({\bar v^1},{\bar x})|= \|{Q}_{{\bar U^1_A},{\bar X},{\bar Y}}-{P}_{{\bar U^1},{\bar X},{\bar Y}}\|_1
 \end{equation*} 
 Thus, to prove the lemma it suffices to show that $$\Delta(P,Q) = \sum_{{\bar v^1},{\bar x}} |Q({\bar v^1},{\bar x})-P({\bar v^1},{\bar x})|=O(2^{-N^{\beta_1}}).$$
  Let us write $\Delta(P,Q)$ as
\begin{align}
&\Delta(P,Q)=\sum_{{\bar v^1},{\bar x}}  P({\bar x})\Big|\prod_{i=1}^N Q((v^1)^i|(v^1)^{1:i-1},{\bar x})- \prod_{i=1}^N P((v^1)^i|
(v^1)^{1:i-1},{\bar x})\Big|.\label{korada}
\end{align}
Applying the telescoping expansion argument used in Lemma 3.5 of \cite{kor09a},
one can bound above the right-hand side of (\ref{korada}) to obtain
\begin{align}
&\Delta(P,Q)\leq\sum_{{\bar x}} P({\bar x}) \sum_{{\bar v^1}}\sum_{i=1}^N|Q((v^1)^i|(v^1)^{1:i-1},{\bar x})
-P((v^1)^i|(v^1)^{1:i-1},{\bar x})| \nonumber\\
&\hspace*{1in}\times \prod_{j=1}^{i-1} P((v^1)^j|(v^1)^{1:j-1},{\bar x}) \prod_{j=i+1}^{N} Q((v^1)^j|(v^1)^{1:j-1},{\bar x}) .\label{korada2}
\end{align}
Substituting (\ref{qdist}) into (\ref{korada2}), we obtain
\begin{align}
\Delta(P,Q)&\leq \sum_{i\in{\mathcal F}_r \cup {\mathcal F}_d}\sum_{(v^1)^{1:i-1},{\bar x}}\sum_{(v^1)^i=0}^{1} |Q((v^1)^i|(v^1)^{1:i-1},{\bar x})-P((v^1)^i|(v^1)^{1:i-1},{\bar x})| P((v^1)^{1:i-1},{\bar x})\nonumber\\
&=2\sum_{i\in{\mathcal F}_r} {E}_P \left|\frac{1}{2}-P((V^1)^i=0|(V^1)^{1:i-1},{\bar X})\right|\nonumber\\
&\hspace*{1in}+2\sum_{i\in{\mathcal F}_d} {E}_P |P((V^1)^i=0|(V^1)^{1:i-1})-P((V^1)^i=0|(V^1)^{1:i-1},{\bar X})|\label{main_ineq}
%&{E}_P |Q(V^i=v^i|V^{1:i-1},X^{1:n})-P(V^i=v^i|V^{1:i-1},X^{1:n})|
\end{align}
where ${E}_P$ is a shorthand for the expected value ${E}_{P_{(V^1)^{1:i-1},{\bar X}}}.$

\begin{proposition} \label{prop:1} If $i\in {\mathcal F}_r$, then
\be
{E}_P \left|\frac{1}{2}-P((V^1)^i=0|(V^1)^{1:i-1},{\bar X}) \right|\leq 2^{-\frac{N^{\beta}}{2}-\frac{1}{2}}.\label{claim1}
\ee
\end{proposition}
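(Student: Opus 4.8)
The plan is to convert the defining property of the index set $\mathcal{F}_r$ into a quantitative bound on the conditional bias $P((V^1)^i=0\mid(V^1)^{1:i-1},\bar X)$ and then average over $(V^1)^{1:i-1},\bar X$. First I would recall from the partition \eqref{partition2} that $i\in\mathcal{F}_r$ implies $i\in\mathcal{H}_{U^1|X}$, so by the definition of $\mathcal{H}_{U^1|X}$ (the analogue of \eqref{eq:LY} in which $U^1$ plays the role of the polarized source and $\bar X$ that of the side information),
$$Z\big((V^1)^i\mid (V^1)^{1:i-1},\bar X\big)\ \ge\ 1-\delta_N,\qquad \delta_N=2^{-N^{\beta}}.$$

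Next I would unfold the Bhattacharyya parameter. For a fixed value $((v^1)^{1:i-1},\bar x)$ set $p\triangleq P((V^1)^i=0\mid(v^1)^{1:i-1},\bar x)$ and $a\triangleq|2p-1|=2\bigl|p-\tfrac12\bigr|\in[0,1]$. Using the identity $\sqrt{p(1-p)}=\tfrac12\sqrt{1-a^{2}}$ together with the definition of $Z$ gives
$$Z\big((V^1)^i\mid (V^1)^{1:i-1},\bar X\big)=E_P\big[\sqrt{1-a^{2}}\,\big],$$
where $E_P=E_{P_{(V^1)^{1:i-1}\bar X}}$ as in \eqref{main_ineq}; hence $E_P\big[\,1-\sqrt{1-a^{2}}\,\big]\le\delta_N$. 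Applying the elementary inequality $\sqrt{1-t}\le 1-\tfrac t2$ for $t\in[0,1]$ with $t=a^{2}$ yields $1-\sqrt{1-a^{2}}\ge\tfrac12 a^{2}$, so $E_P[a^{2}]\le 2\delta_N$; then Cauchy--Schwarz (equivalently Jensen) gives $E_P[a]\le\sqrt{E_P[a^{2}]}\le\sqrt{2\delta_N}$. Since $E_P[a]=2\,E_P\bigl|p-\tfrac12\bigr|$, this is exactly
$$E_P\Bigl|\tfrac12-P\big((V^1)^i=0\mid(V^1)^{1:i-1},\bar X\big)\Bigr|\ \le\ \tfrac12\sqrt{2\delta_N}=\tfrac12\sqrt{2\cdot 2^{-N^{\beta}}}=2^{-N^{\beta}/2-1/2},$$
which is \eqref{claim1}.

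I do not expect a genuine obstacle here: once the membership $i\in\mathcal{H}_{U^1|X}$ is invoked, the argument is a short self-contained computation. The only point requiring care is chasing the constants so that the exponent comes out as exactly $N^{\beta}/2+1/2$ rather than a weaker $O(2^{-N^{\beta}/2})$; this hinges on using the sharp bound $1-\sqrt{1-t}\ge t/2$ together with the Cauchy--Schwarz step, and on the precise value $\delta_N=2^{-N^{\beta}}$ fixed in \eqref{eq:L}. The companion estimate for the $\mathcal{F}_d$ term in \eqref{main_ineq} will run along the same lines, additionally using the inclusion $\mathcal{L}_{U^1}\subseteq\mathcal{L}_{U^1|X}$ (an instance of \eqref{eq:ZZ}) to transfer the smallness of $Z$ from the unconditioned to the $\bar X$-conditioned setting before comparing the two conditionals.
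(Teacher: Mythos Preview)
Your proof is correct and essentially identical to the paper's: both start from $Z((V^1)^i\mid (V^1)^{1:i-1},\bar X)\ge 1-2^{-N^\beta}$, convert this into the second-moment bound $E_P\big[(\tfrac12-p)^2\big]\le 2^{-N^\beta}/2$, and then apply Jensen/Cauchy--Schwarz. Your substitution $a=|2p-1|$ and the inequality $1-\sqrt{1-t}\ge t/2$ are just a repackaging of the paper's step $1/4-P(0)P(1)\le 1/2-\sqrt{P(0)P(1)}$ followed by $1/4-P(0)P(1)=(1/2-P(0))^2$.
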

\begin{proof}
The proof of Lemma 3.8 of \cite{kor09a} is directly applicable here. We
first observe
\begin{align}
Z(&(V^1)^i|(V^1)^{1:i-1},{\bar X})\\&=2\hspace*{-.1in}\sum_{(v^1)^{1:i-1},{\bar x}}\hspace*{-.1in} P((v^1)^{1:i-1},{\bar x}) \sqrt{P((V^1)^i=0|(v^1)^{1:i-1},{\bar x})P((V^1)^i=1|(v^1)^{1:i-1},{\bar x})}\nonumber\\
&=2 {E}_P \Big[\sqrt{P((V^1)^i=0|(V^1)^{1:i-1},{\bar X})P((V^1)^i=1|(V^1)^{1:i-1},{\bar X})}\Big].\label{expectation}
\end{align}
Making use of the fact that for $i\in {\mathcal F}_r,$ Def.~(\ref{partition2})
implies that $Z((V^1)^i|(V^1)^{1:i-1},{\bar X}) \geq 1-2^{-N^{\beta}}$, we observe that
$$
{E}_P\left[\frac{1}{2}-\sqrt{P((V^1)^i=0|(V^1)^{1:i-1},{\bar X})P((V^1)^i=1|(V^1)^{1:i-1},{\bar X})}\right]
\le2^{-N^{\beta}}/2.
$$
Hence also
$$
{E}_P\left[\frac{1}{4}-P((V^1)^i=0|(V^1)^{1:i-1},{\bar X})P((V^1)^i=1|(V^1)^{1:i-1},{\bar X})\right]\le2^{-N^{\beta}}/2.
$$
Note that the two probabilities inside the brackets sum to one, so we obtain
$$
{E}_P\left[\frac{1}{2}-P((V^1)^i=0|(V^1)^{1:i-1},{\bar X})\right]^2 \leq 2^{-N^{\beta}}/2.
$$
Finally, using convexity, we obtain (\ref{claim1}), as desired.
\end{proof}

\begin{proposition} \label{prop:2} If $i\in {\mathcal F}_d$, then there exists an absolute constant $c\in{\mathbb R}$ such that
\begin{align*}
{E}_P |P((V^1)^i=0|(V^1)^{1:i-1})-P((V^1)^i=0|(V^1)^{1:i-1},{\bar X})|\leq c\, 2^{-\frac{N^{\beta}}{2}}.
\end{align*}
\end{proposition}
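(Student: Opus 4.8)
The plan is to exploit the defining property of $\mathcal{F}_d=\mathcal{L}_{U^1}$: for $i\in\mathcal{F}_d$ the Bhattacharyya parameter obeys $Z((V^1)^i|(V^1)^{1:i-1})\le\delta_N=2^{-N^{\beta}}$, so $(V^1)^i$ is almost deterministic given $(V^1)^{1:i-1}$, and hence also almost deterministic given the finer conditioning $((V^1)^{1:i-1},\bar X)$; consequently the two conditional laws of $(V^1)^i$ appearing in the proposition must be close, and the distance between them is controlled directly by $Z$.

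First I would abbreviate $p=P((V^1)^i=0\,|\,(V^1)^{1:i-1})$ and $q=P((V^1)^i=0\,|\,(V^1)^{1:i-1},\bar X)$ and record the tower identity $p=E_P[q\,|\,(V^1)^{1:i-1}]$, which holds simply because marginalizing $\bar X$ out of $q$ returns $p$. The quantity to be bounded is then $E_P|p-q|$, where (as in the text following \eqref{main_ineq}) $E_P$ denotes expectation with respect to $P_{(V^1)^{1:i-1},\bar X}$.

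The next step is elementary. Conditioning on a value of $(V^1)^{1:i-1}$, the number $p\in[0,1]$ is fixed and $q$ is a $[0,1]$-valued function of $\bar X$ with conditional mean $p$. If $p\le 1/2$, then $|p-q|\le p+q$ gives $E_P[\,|p-q|\,|\,(V^1)^{1:i-1}]\le 2p$; if $p>1/2$, applying the same bound to $1-p$ and $1-q$ gives $E_P[\,|p-q|\,|\,(V^1)^{1:i-1}]\le 2(1-p)$. In both cases the conditional expectation is at most $2\min(p,1-p)$, so averaging over $(V^1)^{1:i-1}$ yields $E_P|p-q|\le 2\,E_P[\min(p,1-p)]$. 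Finally I would connect this with the Bhattacharyya parameter: pointwise $\min(p,1-p)\le\sqrt{p(1-p)}$, and by definition $Z((V^1)^i|(V^1)^{1:i-1})=2\,E_P[\sqrt{p(1-p)}]$, so $E_P|p-q|\le Z((V^1)^i|(V^1)^{1:i-1})\le\delta_N=2^{-N^{\beta}}\le 2^{-N^{\beta}/2}$, and the claim holds with $c=1$.

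If an argument paralleling the proof of Proposition \ref{prop:1} is preferred, one may instead use Jensen's inequality and write $E_P|p-q|\le\sqrt{E_P[(p-q)^2]}=\sqrt{E_P[\Var(q\,|\,(V^1)^{1:i-1})]}\le\sqrt{E_P[p(1-p)]}\le\sqrt{\tfrac12 Z((V^1)^i|(V^1)^{1:i-1})}\le 2^{-N^{\beta}/2-1/2}$, using $q^2\le q$ for the variance estimate and $p(1-p)\le\sqrt{p(1-p)}$ for the last step; this gives the stated form with $c=2^{-1/2}$. I do not expect a genuine obstacle: the only mildly delicate point is recognizing $p$ as the conditional mean of $q$ and choosing the bound $|p-q|\le p+q$, which is precisely what allows the mean constraint to absorb the ``bad'' term, after which everything reduces to the one-line estimate $\min(p,1-p)\le\sqrt{p(1-p)}$ and the definition of $Z$.
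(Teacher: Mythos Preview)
Your proof is correct, and in fact sharper and more elementary than the paper's. You obtain the bound $E_P|p-q|\le Z((V^1)^i|(V^1)^{1:i-1})\le 2^{-N^\beta}$, which is stronger than the stated $c\,2^{-N^\beta/2}$.

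The paper takes a genuinely different route. Rather than using the tower identity $p=E[q\mid (V^1)^{1:i-1}]$, it argues via concentration: since both $Z((V^1)^i|(V^1)^{1:i-1})$ and $Z((V^1)^i|(V^1)^{1:i-1},\bar X)$ are at most $2^{-N^\beta}$, a Markov-type bound shows that each of $p$ and $q$ lies in $[0,2^{-N^\beta}]\cup[1-2^{-N^\beta},1]$ with probability at least $1-O(2^{-N^\beta/2})$. The paper then needs a separate case analysis to rule out the ``crossing'' events where $p$ is near $0$ while $q$ is near $1$ (or vice versa), bounding each crossing probability by $O(2^{-N^\beta})$ through a short counting argument. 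Combining the pieces yields the $O(2^{-N^\beta/2})$ bound, the square-root loss coming from the Markov step. Your argument bypasses all of this by exploiting directly that $q$ has conditional mean $p$, so $|p-q|\le p+q$ (resp.\ $(1-p)+(1-q)$) averages to $2\min(p,1-p)$; then $\min(p,1-p)\le\sqrt{p(1-p)}$ ties back to $Z$ in one line. The paper's approach is more robust in that the same crossing-event machinery is reused later (Proposition~\ref{prop3} and the case $i\in\mathcal I\setminus\mathcal I'$ in Lemma~\ref{l2}), where the two conditionals involve different side information ($\bar X$ versus $\bar Y$) and no tower relation is available; your method, by contrast, is specific to the situation where one conditioning refines the other.
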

\begin{proof}
First note that $i\in {\mathcal F}_d\subseteq {\mathcal L}_{U^1}$ implies
$Z((V^1)^i|(V^1)^{1:i-1})\leq2^{-N^{\beta}}$ which in turn implies
$$Z((V^1)^i|(V^1)^{1:i-1},{\bar X})\leq2^{-N^{\beta}}.$$ %Then, we make use of the inequalities \cite{Arikan2010}
Hence  for any $a\in(0,1)$
              \begin{align*}
2\sqrt{a (1-a)}\Pr\{&a<P((V^1)^i=0|(V^1)^{1:i-1})<1-a\} \\
&\leq 2 {E}_P \sqrt{P((V^1)^i=0|(V^1)^{1:i-1})P((V^1)^i=1|(V^1)^{1:i-1})}\\
& \leq 2^{-N^{\beta}}
\end{align*}
and
\begin{align*}
2\sqrt{a (1-a)}\Pr \{&a<P((V^1)^i=0|(V^1)^{1:i-1},{\bar X})<1-a \}  \nonumber\\
&\leq 2 {E}_P \sqrt{P((V^1)^i=0|(V^1)^{1:i-1},{\bar X})P((V^1)^i=1|(V^1)^{1:i-1},{\bar X})} \nonumber\\ 
&\leq 2^{-N^{\beta}}
\end{align*}
follows. In particular, for $a=2^{-N^{\beta}}$, we obtain
    \begin{align}
{P}\left(2^{-N^{\beta}}<P((V^1)^i=0|(V^1)^{1:i-1})<1-2^{-N^{\beta}}\right) \leq \frac{1}{2}\sqrt{\frac{2^{-N^{\beta}}}{1-2^{-N^{\beta}}}} \label{updatedproof1}\\
{P}\left(2^{-N^{\beta}}<P((V^1)^i=0|(V^1)^{1:i-1},{\bar X})<1-2^{-N^{\beta}}\right) \leq \frac{1}{2}\sqrt{\frac{2^{-N^{\beta}}}{1-2^{-N^{\beta}}}}.\label{updatedproof2}
\end{align}
Now, letting $D=[0,2^{-N^{\beta}}]\cup[1-2^{-N^{\beta}},1]$ we obtain
\begin{align}
\Pr\Big\{P((V^1)^i=0|(V^1)^{1:i-1})\in D \wedge  P((V^1)^i=0|(V^1)^{1:i-1},{\bar X}) \in D\Big\}
\geq 1-\sqrt{\frac{2^{-N^{\beta}}}{1-2^{-N^{\beta}}}}.\label{bigprob}
\end{align}

Our next step will be to show that both the probabilities
\begin{align}
\Pr\Big\{P((V^1)^i=0|(V^1)^{1:i-1}) \in [1-2^{-N^{\beta}},1] \wedge P((V^1)^i=0|(V^1)^{1:i-1},{\bar X}) \in [0,2^{-N^{\beta}}] \Big\}\label{prob1}\\
\Pr\Big\{P((V^1)^i=0|(V^1)^{1:i-1}) \in [0,2^{-N^{\beta}}] \wedge  P((V^1)^i=0|(V^1)^{1:i-1},{\bar X}) \in [1-2^{-N^{\beta}},1] \Big\}\label{prob2}
\end{align}
are small. 
Let $S(i,N)$ be the set of pairs $((v^1)^{1:i-1},{\bar x})$ accounting for the event in (\ref{prob1}).
%If $((v^1)^{1:i-1},{\bar x})\in S(i,N)$, then it should satisfy

Write
  \begin{align*}
\Pr\{(V^1)^i=0|&(V^1)^{1:i-1}=(v^1)^{1:i-1}\}\Pr\{(V^1)^i=1,(V^1)^{1:i-1}=(v^1)^{1:i-1}\}\\
     &=\Pr\{(V^1)^i=1|(V^1)^{1:i-1}=(v^1)^{1:i-1}\}\\&\hspace*{.3in}\times\Pr\{(V^1)^i=0|(V^1)^{1:i-1}=(v^1)^{1:i-1}\}\Pr\{(V^1)^{1:i-1}=(v^1)^{1:i-1}\}
  \end{align*}
and observe that if the pair $((v^1)^{1:i-1},{\bar x})\in S(i,N)$ then the first term on the left is $\approx 1$
and the first term on the right is $\approx 0.$ This implies that
 \begin{align}
(1-2^{-N^{\beta}}) \Pr\{(V^1)^i=1,(V^1)^{1:i-1}=(v^1)^{1:i-1}\}& \leq 2^{-N^{\beta}} \Pr\{(V^1)^i=0,(V^1)^{1:i-1}=(v^1)^{1:i-1}\}.\label{eq:prob1-1}
  \end{align}
In the same way from \eqref{prob1} we obtain  
  \begin{align}
2^{-N^{\beta}} \Pr\{(V^1)^i=1,&(V^1)^{1:i-1}=(v^1)^{1:i-1},{\bar X}={\bar x}\}\nonumber
\\  &\geq (1-2^{-N^{\beta}}) \Pr\{(V^1)^i=0,(V^1)^{1:i-1}=(v^1)^{1:i-1},{\bar X}={\bar x}\}.
\label{prob1-2}
\end{align} 
From \eqref{eq:prob1-1}, \eqref{prob1-2} we see that \eqref{prob1} can be bounded above as follows:
\begin{align} 
\sum_{((v^1)^{1:i-1},{\bar x})\in S(i,N)} &\Pr\{(V^1)^{1:i-1}=(v^1)^{1:i-1},{\bar X}={\bar x}\}\\
&=\sum_{((v^1)^{1:i-1},{\bar x})\in S(i,N)}\sum_{(v^1)^i=0}^1
\Pr\{(V^1)^i=(v^1)^i,(V^1)^{1:i-1}=(v^1)^{1:i-1},{\bar X}={\bar x}\} \nonumber\\
&\le\Big(1+\frac{2^{-N^{\beta}}}{1-2^{-N^{\beta}}}\Big)\sum_{((v^1)^{1:i-1},{\bar x})\in S(i,N)}\hspace*{-.3in}
\Pr\{(V^1)^i=1,(V^1)^{1:i-1}=(v^1)^{1:i-1},{\bar X}={\bar x}\} \nonumber\\
&\leq\frac{1}{1-2^{-N^{\beta}}}\sum_{(v^1)^{1:i-1}\in S(i,N)} \Pr\{(V^1)^i=1,(V^1)^{1:i-1}=(v^1)^{1:i-1}\}\nonumber\\
&\leq\frac{2^{-N^{\beta}}}{(1-2^{-N^{\beta}})^2}\sum_{(v^1)^{1:i-1}\in S(i,N)} \Pr\{(V^1)^i=0,(V^1)^{1:i-1}=(v^1)^{1:i-1}\}\nonumber\\
&\leq\frac{2^{-N^{\beta}}}{(1-2^{-N^{\beta}})^2} \Pr\{(V^1)^i=0\}.\label{prob1-3}
\end{align}
Similarly, it can be shown that the probability (\ref{prob2}) is small. Indeed, let $T(i,N)$ be the set 
of pairs $((v^1)^{1:i-1},{\bar x})$ accounting for the event in (\ref{prob2}).
As in \eqref{eq:prob1-1}, \eqref{prob1-2}, for each $((v^1)^{1:i-1},{\bar x})\in T(i,N)$ we have
   \begin{align*}
2^{-N^{\beta}} \Pr\{(V^1)^i=1,(V^1)^{1:i-1}=(v^1)^{1:i-1}\}& \geq (1-2^{-N^{\beta}}) \Pr\{(V^1)^i=0,(V^1)^{1:i-1}=(v^1)^{1:i-1}\}\\
(1-2^{-N^{\beta}}) \Pr\{(V^1)^i=1,(V^1)^{1:i-1}=(v^1)^{1:i-1},&{\bar X}={\bar x}\}\\  
\leq 2^{-N^{\beta}} &\Pr\{(V^1)^i=0,(V^1)^{1:i-1}=(v^1)^{1:i-1},{\bar X}={\bar x}\}.
    \end{align*}
From these two relations we conclude that (\ref{prob2}) can be bounded above as 
\begin{align}
\sum_{((v^1)^{1:i-1},{\bar x})\in T(i,N)} &\Pr\{(V^1)^{1:i-1}=(v^1)^{1:i-1},{\bar X}={\bar x}\}\\
&=\sum_{((v^1)^{1:i-1},{\bar x})\in T(i,N)}\sum_{(v^1)^i=0}^1
\Pr\{(V^1)^i=(v^1)^i,(V^1)^{1:i-1}=(v^1)^{1:i-1},{\bar X}={\bar x}\} \nonumber\\
&\leq \Big(1+\frac{2^{-N^{\beta}}}{1-2^{-N^{\beta}}}\Big)\sum_{((v^1)^{1:i-1},{\bar x})\in T(i,N)} \hspace*{-.3in}
\Pr\{(V^1)^i=0,(V^1)^{1:i-1}=(v^1)^{1:i-1},{\bar X}={\bar x}\}\nonumber\\
&\leq\frac{1}{1-2^{-N^{\beta}}}\sum_{(v^1)^{1:i-1}\in T(i,N)} \Pr\{(V^1)^i=0,(V^1)^{1:i-1}=(v^1)^{1:i-1}\}\nonumber\\
&\leq\frac{2^{-N^{\beta}}}{(1-2^{-N^{\beta}})^2}\sum_{(v^1)^{1:i-1}\in T(i,N)} \Pr\{(V^1)^i=1,(V^1)^{1:i-1}=(v^1)^{1:i-1}\} \nonumber\\
&\leq\frac{2^{-N^{\beta}}}{(1-2^{-N^{\beta}})^2} \Pr\{(V^1)^i=1\}.\label{prob2-3}
\end{align}
Substituting (\ref{prob1-3}) and (\ref{prob2-3}) in (\ref{bigprob}), we observe that
\begin{align*} 
\Pr\Big\{ \left|P((V^1)^i=0|(V^1)^{1:i-1})-P((V^1)^i=0|(V^1)^{1:i-1},{\bar X})\right| \leq 2^{-N^{\beta}}\Big\}\geq 
1-\sqrt{\frac{2^{-N^{\beta}}}{1-2^{-N^{\beta}}}}-\frac{2^{-N^{\beta}}}{(1-2^{-N^{\beta}})^2}.
\end{align*}
Let $\xi$ be the random variable in the brackets, and note that $\Pr\{\xi\in[0,1]\}=1.$ Then use the fact that
   $$
   E\xi\le 2^{-N^{\beta}}\Pr(\xi\le 2^{-N^{\beta}})+\Pr(\xi>2^{-N^{\beta}})\le 2^{-N^{\beta}}+\Pr(\xi>2^{-N^{\beta}}).
   $$
This translates into
\begin{align}
{E}_P |P((V^1)^i=0|(V^1)^{1:i-1})-P((V^1)^i=0|(V^1)^{1:i-1},{\bar X})|\leq\sqrt{\frac{2^{-N^{\beta}}}{1-2^{-N^{\beta}}}}+\frac{2^{-N^{\beta}}}{(1-2^{-N^{\beta}})^2}+2^{-N^{\beta}}.\label{claim2}
\end{align}
%since the maximum value that $f(x,y)=|x-y|$ can attain on $[0,1]\times [0,1]$ is $1$.
which completes the proof of Proposition \ref{prop:2}.
\end{proof}

Combining Propositions \ref{prop:1} and \ref{prop:2} with (\ref{main_ineq}), we obtain
$$
\|{Q}_{{\bar U^1_A} {\bar X} {\bar Y}}-{P}_{{\bar U^1} {\bar X} {\bar Y}}\|_1=\Delta(P,Q) =
O(N2^{-\frac{N^{\beta}}{2}}) %= O(2^{-N^{\beta}})!
$$
which proves Lemma \ref{l1}.

\vspace*{.2in}\section{Proof of Lemma \ref{l2}}\label{sect:AB}
We prove \eqref{eq:l2} and \eqref{eq:agreement} by induction on $i,$ using the following forms of these relations
for a given value of $i$:  
\begin{gather}  
     \|{Q}_{(U^1_B)^{1:i}{\bar X}{\bar Y}}-{P}_{(U^1)^{1:i}{\bar X}{\bar Y}}\|_1 =  O(2^{-N^{\beta_2}}) \label{eq:1}\\
   \Pr\big\{(V_A^1)^{1:i}=(V_B^1)^{1:i}\big\}=1-O(2^{-N^{\beta_2}})\label{newinduction1}.
\end{gather}
To prove the induction base, note that there are four different possibilities for $i=1$ which may be contained in any of the sets 
${\mathcal F}_r$,
${\mathcal F}_d$,  ${\mathcal I}\backslash {\mathcal I}'$, and ${\mathcal I}';$ see \eqref{partition2}, \eqref{eq:I'}.

\begin{enumerate}
\item If $1\in{\mathcal F}_r$, then Terminals A and B will make the same decision with probability $1$, i.e.,
 $\text{Pr}\left\{(V_A^1)^{1}=(V_B^1)^{1}\right\}=1$. This is because we assume that the terminals share
a common randomness to decide $(v^1_A)^i$ and $(v^1_B)^i$, $i\in {\mathcal F}_r$. To prove
\eqref{eq:1}, note that the Markov condition $U^1\rightarrow X\rightarrow Y$ implies that ${\bar U^1}\rightarrow {\bar X}\rightarrow {\bar Y},$ which implies that ${\bar V^1}\rightarrow {\bar X}\rightarrow {\bar Y}$ and finally $(V^1)^1\rightarrow {\bar X}\rightarrow {\bar Y}.$
We use this in the following calculation:
     \begin{align*}
   \|{Q}_{(V^1_B)^{1}{\bar X}{\bar Y}}&-{P}_{(V^1)^{1}{\bar X}{\bar Y}}\|_1=\sum_{{\bar x},{\bar y}}\sum_{v^1=0}^{1}|Q_{(V^1_B)^1|{\bar X}{\bar Y}}(v^1|{\bar x},{\bar y})\\
   &\hspace*{1in}-P_{(V^1)^1|{\bar X}{\bar Y}}(v^1|{\bar x},{\bar y})| P_{{\bar X}{\bar Y}}({\bar x},{\bar y})\\
   &=\sum_{{\bar x}}\sum_{v^1=0}^{1} |1/2-P_{(V^1)^1|{\bar X}}(v^1|{\bar x})| P_{{\bar X}}({\bar x})  \\
   &= 
2{E}_P |(1/2)-P((V^1)^1=0|{\bar X})|.
   \end{align*} 
Here the last step follows because $((v^1_B)^i)$'s are uniformly random for $i\in \cF_r$, as given by \eqref{qdist2}.
Now using Proposition \ref{prop:1}, we obtain \eqref{eq:1} for $i=1.$
%(\ref{eq:l2}).
%
\vspace*{.05in}\item Let $1\in{\mathcal F}_d={\mathcal L}_{U^1}.$
To prove \eqref{eq:1} we use the same argument as above in item (1), using the Markov condition 
$(V^1)^1\rightarrow {\bar X}\rightarrow {\bar Y}$ together with Proposition \ref{prop:2}.

To prove \eqref{newinduction1} note that for $i\in \cF_d$ we have 
$Z((V^1)^i|(V^1)^{1:i-1})\le2^{-N^{\beta}};$ see \eqref{partition2}, \eqref{eq:L}\footnote{If $i=1$ then $(V)^{1:i-1}$ is empty, but below we will use
this argument for all $i$.}. Therefore, %conditional on $(V_A^1)^{1:i-1}$ 
the random variable $(V_A^1)^i$ is almost deterministic, and the same is true for the random variable $(V_B^1)^i$. %conditional on $(V_B^1)^{1:i-1}.$
This observation is stated formally in \eqref{updatedproof1}.

From \eqref{updatedproof1}, we see that with probability $1-\frac{1}{2}\sqrt{\frac{2^{-N^{\beta}}}{1-2^{-N^{\beta}}}}$, Terminals A and B decide $(V_A^1)^1$ and $(V_B^1)^1$
respectively based on independent copies of a Bernoulli random variable that takes the value 0 with probability $p$ such that 
 either $p\le 2^{-N^{\beta}}$ or $p\ge 1-2^{-N^{\beta}}$. Therefore, it follows that
 for sufficiently large $N$
   \begin{align*}
   \Pr\{(V_A^1)^{1}=(V_B^1)^{1}\}&\ge \left(1-\frac{1}{2}\sqrt{\frac{2^{-N^{\beta}}}{1-2^{-N^{\beta}}}}\right)(1-2p(1-p))=1-O(2^{-\frac{N^\beta}2})
   =1-O(2^{-{N^{\beta_2}}}).   \end{align*}

\vspace*{.05in}\item Let $1\in{\mathcal I}\backslash {\mathcal I}'\subseteq {\mathcal L}_{U^1|Y}\subseteq {\mathcal L}_{U^1|X}.$
Estimate (\ref{eq:1}) will follow from the following proposition.

\begin{proposition} \label{prop3} If $i\in {\mathcal I}\backslash{\mathcal I}',$ 
then for sufficiently large $N$
\begin{align*}
{E}_P |P((V^1)^i=0|(V^1)^{1:i-1},{\bar Y})-P((V^1)^i=0|(V^1)^{1:i-1},{\bar X},{\bar Y})|=O(2^{-\frac{N^{\beta}}{2}}).
\end{align*}
\end{proposition}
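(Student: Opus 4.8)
The plan is to transcribe the proof of Proposition~\ref{prop:2}, with the unconditional law of $(V^1)^i$ there replaced by its law conditioned on ${\bar Y}$, and the law conditioned on ${\bar X}$ replaced by the law conditioned on $({\bar X},{\bar Y})$. The starting point is that, by \eqref{partition2} and \eqref{eq:I'},
\[
{\mathcal I}\backslash{\mathcal I}'={\mathcal L}^c_{U^1}\cap{\mathcal H}^c_{U^1|X}\cap{\mathcal L}_{U^1|Y}\subseteq{\mathcal L}_{U^1|Y},
\]
so that for $i\in{\mathcal I}\backslash{\mathcal I}'$ one has $Z((V^1)^i|(V^1)^{1:i-1},{\bar Y})\le 2^{-N^{\beta}}$; and since adjoining ${\bar X}$ to the conditioning cannot increase the Bhattacharyya parameter, also $Z((V^1)^i|(V^1)^{1:i-1},{\bar X},{\bar Y})\le 2^{-N^{\beta}}$. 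Hence both $P((V^1)^i=0|(V^1)^{1:i-1},{\bar Y})$ and $P((V^1)^i=0|(V^1)^{1:i-1},{\bar X},{\bar Y})$ are almost Bernoulli-deterministic, exactly as in Proposition~\ref{prop:2}.

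First I would invoke the boundary-concentration estimate of \eqref{updatedproof1}--\eqref{updatedproof2}: from $Z((V^1)^i|(V^1)^{1:i-1},{\bar Y})\le 2^{-N^{\beta}}$ one obtains, for every $a\in(0,1)$,
\[
\Pr\bigl\{a<P((V^1)^i=0|(V^1)^{1:i-1},{\bar Y})<1-a\bigr\}\le\frac{1}{2}\sqrt{\frac{2^{-N^{\beta}}}{a(1-a)}},
\]
and likewise with $({\bar X},{\bar Y})$ in place of ${\bar Y}$. Taking $a=2^{-N^{\beta}}$ and a union bound yields the analogue of \eqref{bigprob}: with probability $1-O(2^{-N^{\beta}/2})$ both conditional probabilities fall into $D=[0,2^{-N^{\beta}}]\cup[1-2^{-N^{\beta}},1]$.

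Next I would bound the two ``crossed'' events --- in which $P((V^1)^i=0|(V^1)^{1:i-1},{\bar Y})$ is near $1$ while $P((V^1)^i=0|(V^1)^{1:i-1},{\bar X},{\bar Y})$ is near $0$, and symmetrically --- each by $O(2^{-N^{\beta}})$, repeating the computation \eqref{eq:prob1-1}--\eqref{prob2-3} verbatim. The only algebraic identity used there is the chain rule $\Pr\{(V^1)^i=0,(V^1)^{1:i-1}\}\Pr\{(V^1)^i=1\mid(V^1)^{1:i-1}\}=\Pr\{(V^1)^i=1,(V^1)^{1:i-1}\}\Pr\{(V^1)^i=0\mid(V^1)^{1:i-1}\}$, which remains valid after ${\bar Y}$ (resp.\ $({\bar X},{\bar Y})$) is appended to every conditioning event; so the sets $S(i,N)$ and $T(i,N)$ defined as in the proof of Proposition~\ref{prop:2} satisfy the same bounds, now with $\Pr\{(V^1)^i=0\}$ and $\Pr\{(V^1)^i=1\}$ on the right-hand sides.

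Combining the last two steps, off an event of probability $O(2^{-N^{\beta}/2})$ both conditional probabilities lie in $D$ and on the same side, so their difference is at most $2^{-N^{\beta}}$; bounding the difference by $1$ on the exceptional event and taking $E_P$ gives
\[
E_P\bigl|P((V^1)^i=0|(V^1)^{1:i-1},{\bar Y})-P((V^1)^i=0|(V^1)^{1:i-1},{\bar X},{\bar Y})\bigr|\le 2^{-N^{\beta}}+O(2^{-N^{\beta}/2})=O(2^{-N^{\beta}/2}),
\]
which is the claim. The main obstacle is essentially bookkeeping: one must check that every inequality in \eqref{eq:prob1-1}--\eqref{prob2-3} still closes after the substitution of conditioning variables and that nothing beyond the inclusion ${\mathcal I}\backslash{\mathcal I}'\subseteq{\mathcal L}_{U^1|Y}$ --- which is immediate from the definition \eqref{eq:I'} of ${\mathcal I}'$ --- is being invoked. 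In fact the present case parallels Proposition~\ref{prop:2} step for step, and is if anything slightly cleaner, since the needed bound $Z((V^1)^i|(V^1)^{1:i-1},{\bar X},{\bar Y})\le Z((V^1)^i|(V^1)^{1:i-1},{\bar Y})$ is again just monotonicity of the Bhattacharyya parameter under additional conditioning.
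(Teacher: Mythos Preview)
Your proposal is correct and follows essentially the same route as the paper: the paper's own proof simply observes that $i\in\mathcal I\backslash\mathcal I'\subseteq\mathcal L_{U^1|Y}$ yields $Z((V^1)^i|(V^1)^{1:i-1},\bar Y)\le 2^{-N^\beta}$ and hence $Z((V^1)^i|(V^1)^{1:i-1},\bar X,\bar Y)\le 2^{-N^\beta}$, then states that \eqref{bigprob}, \eqref{prob1-3}, \eqref{prob2-3} go through verbatim after the substitution of conditioning variables, giving the same final bound. Your write-up is in fact more detailed than the paper's, but the argument is identical.
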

\begin{proof} On account of \eqref{eq:LY} for
$i\in {\mathcal I}\backslash{\mathcal I}'\subseteq{\mathcal L}_{U^1|Y}$ we obtain
$Z((V^1)^i|(V^1)^{1:i-1},{\bar Y})\leq 2^{-N^{\beta}}$, which implies that 
$Z((V^1)^i|(V^1)^{1:i-1},{\bar X},{\bar Y})\leq 2^{-N^{\beta}}$.
\remove{By the arguments in the proof of Proposition \ref{prop:2} we conclude that
$Z((V^1)^i|(V^1)^{1:i-1},{\bar X},{\bar Y})\leq \sqrt{2^{-N^{\beta}}/\ln 2},$ and there exists
$N_0$ such that for all $N>N_0$ and all $i\in {\mathcal I}\backslash {\mathcal I}'$
both $Z((V^1)^i|(V^1)^{1:i-1},{\bar Y})$ and $Z((V^1)^i|(V^1)^{1:i-1},{\bar X},{\bar Y})$ are
less than $\delta'_N=\sqrt{2^{-N^{\beta}}/\ln 2}$.}

The remaining part of the proof follows the steps in the proof of Proposition \ref{prop:2}. Namely, inequalities
(\ref{prob1-3}), (\ref{prob2-3}) and (\ref{bigprob}) are valid 
in this case as well, and we again obtain the estimate 
\begin{align*}
{E}_P \big|P((V^1)^i=0|(V^1)^{1:i-1},{\bar Y}) -P((V^1)^i&=0|(V^1)^{1:i-1},{\bar X},{\bar Y})\big| \\
&\leq\sqrt{\frac{2^{-N^{\beta}}}{1-2^{-N^{\beta}}}}+\frac{2^{-N^{\beta}}}{(1-2^{-N^{\beta}})^2}+2^{-N^{\beta}}.%\label{claim3}
\end{align*}
This completes the proof of Proposition \ref{prop3}.
\end{proof}
Now notice that
  \begin{align*}
  \|{Q}_{(V^1_B)^{1}{\bar X}{\bar Y}}&-{P}_{(V^1)^{1}{\bar X}{\bar Y}}\|_1 
  =\sum_{{\bar x},{\bar y}}
  \sum_{a=0}^1 |Q_{(V^1_B)^{1}{\bar X}{\bar Y}}(a,{\bar x},{\bar y})-P_{(V^1)^{1}{\bar X}{\bar Y}}(a,{\bar x},{\bar y})|\\
  &=\sum_{{\bar x},{\bar y}}
  \sum_{a=0}^1 P_{{\bar X}{\bar Y}}({\bar x},{\bar y})
  |Q_{(V^1_B)^{1}|{\bar X}{\bar Y}}(a|{\bar x},{\bar y})-P_{(V^1)^{1}|{\bar X}{\bar Y}}(a|{\bar x},{\bar y})|\\
  &\stackrel{\eqref{qdist2}}=\sum_{{\bar x},{\bar y}}
  \sum_{a=0}^1 P_{{\bar X}{\bar Y}}({\bar x},{\bar y})|P_{(V^1)^{1}|{\bar Y}}(a|{\bar y})-P_{(V^1)^{1}|{\bar X}{\bar Y}}(a|{\bar x},{\bar y})|\\
  &=\sum_{a=0}^1 E_{P_{{\bar X}{\bar Y}}}\big|P_{(V^1)^{1}|{\bar Y}}(a|{\bar Y}) -P_{(V^1)^{1}|{\bar X}{\bar Y}}(a|{\bar X},{\bar Y})\big|\\
  &=2 E_{P_{{\bar X}{\bar Y}}}\big|P_{(V^1)^{1}|{\bar Y}}(0|{\bar Y}) -P_{(V^1)^{1}|{\bar X}{\bar Y}}(0|{\bar X},{\bar Y})\big|
 \\&=O(2^{-\frac{N^{\beta}}{2}})
  \end{align*}
where the last estimate follows from Proposition \ref{prop3}. This proves \eqref{eq:1}.
  
Regarding \eqref{newinduction1} note that for $i\in {\mathcal I}\backslash{\mathcal I}'$ we have 
$Z((V^1)^i|(V^1)^{1:i-1},{\bar Y})\le 2^{-N^{\beta}};$ see \eqref{partition2}, \eqref{eq:LY}. 
This also implies that $Z((V^1)^i|(V^1)^{1:i-1},{\bar X})=Z((V^1)^i|(V^1)^{1:i-1},{\bar X},{\bar Y})\le 2^{-N^{\beta}}$.
\remove{Now observe that we
can repeat the arguments in item (2) above with the only change of adding the ``side information'' ${\bar Y}$. We have
  $$
   Z((V^1)^i|(V^1)^{1:i-1},{\bar Y})=2E_P\sqrt{P_{(V^1)^i|(V^1)^{1:{i-1}}{\bar Y}}(1|V^{1:{i-1}},{\bar Y})P_{(V^1)^i|(V^1)^{1:{i-1}}{\bar Y}}(0|V^{1:{i-1}},{\bar Y})}
   $$
Let the first of the two probabilities under the square root be $\xi_1,$ then 
  $\Pr\{\min(\xi_1,1-\xi_1)> \sqrt{2^{-N^{\beta}}/2}\}\leq\sqrt{2^{-N^{\beta}}/2}.$   Similarly, let 
  $\xi_2=P_{(V^1)^i|(V^1)^{1:{i-1}}{\bar X}}(1|V^{1:{i-1}},{\bar X}),$ then we obtain that
  $\Pr\{\min(\xi_2,1-\xi_2)> \sqrt{2^{-N^{\beta}}/2}\}\leq\sqrt{2^{-N^{\beta}}/2}.$
Now use the proof of Proposition \ref{prop:2} to argue that
  $$
  \Pr\Big\{\Big(\xi_1<\sqrt{\frac {2^{-N^{\beta}}}2} \wedge \xi_2\ge 1-\sqrt{\frac {2^{-N^{\beta}}}2}\Big)\vee 
  \Big(\xi_1\ge 1-\sqrt{\frac {2^{-N^{\beta}}}2} \wedge \xi_2<\sqrt{\frac {2^{-N^{\beta}}}2}\Big)\Big\}=O(2^{-N^{\beta}/4})
  $$}
  Hence, similarly to \eqref{updatedproof1} and \eqref{updatedproof2}, we have
  \begin{align*}
  {P}\left(2^{-N^{\beta}}<P((V^1)^i=0|(V^1)^{1:i-1},{\bar X})<1-2^{-N^{\beta}}\right) \leq \frac{1}{2}\sqrt{\frac{2^{-N^{\beta}}}{1-2^{-N^{\beta}}}} \\
{P}\left(2^{-N^{\beta}}<P((V^1)^i=0|(V^1)^{1:i-1},{\bar Y})<1-2^{-N^{\beta}}\right) \leq \frac{1}{2}\sqrt{\frac{2^{-N^{\beta}}}{1-2^{-N^{\beta}}}}.
\end{align*}
Repeating the arguments that led us to conclude that the probabilities in \eqref{prob1} and \eqref{prob2} are small, we obtain
 \begin{align*}
&\Pr\Big\{P((V^1)^i=0|(V^1)^{1:i-1},{\bar X}) \in [1-2^{-N^{\beta}},1] \wedge P((V^1)^i=0|(V^1)^{1:i-1},{\bar Y}) \in [0,2^{-N^{\beta}}] \Big\}\\
&+\Pr\Big\{P((V^1)^i=0|(V^1)^{1:i-1},{\bar X}) \in [0,2^{-N^{\beta}}] \wedge  P((V^1)^i=0|(V^1)^{1:i-1},{\bar Y}) \in [1-2^{-N^{\beta}},1] \Big\} \\&  \hspace*{.5in}\leq
\frac{2^{-N^{\beta}}}{(1-2^{-N^{\beta}})^2}
\end{align*}
Now let us perform a calculation similar to the one in item (2):
  \begin{align*}
  \Pr\{(V_A^1)^{1}=(V_B^1)^{1}\}&\ge \left(1-\sqrt{\frac{2^{-N^{\beta}}}{1-2^{-N^{\beta}}}}-\frac{2^{-N^{\beta}}}{(1-2^{-N^{\beta}})^2}\right) (1-2^{-N^{\beta}+1}) 
  \\&=1-O(2^{-\frac{ N^{\beta}}{2}})\\& =1-O(2^{-{N^{\beta_2}}}) 
   \end{align*}
This completes the proof of \eqref{newinduction1}.

\vspace*{.1in}\item If $1\in {\mathcal I}'$, then using the Markov condition $U^1\rightarrow X\rightarrow Y,$ 
we observe that (\ref{eq:1}) holds trivially. Since the bit $(V_A^1)^{1}$ is known perfectly at terminal $B$, the same is true for \eqref{newinduction1}.

\end{enumerate}
This establishes the induction base. 
%Moreover, as far as (\ref{newinduction1}) is concerned, the previous proof did not make a special use
%of the fact that $i=1,$ so it is valid for any $i$. 

Now assume that \eqref{eq:1} and (\ref{newinduction1}) hold for some $i\ge 1$. To prove that 
\eqref{eq:1} is also valid for $i+1$ write
     \begin{align}
\|{Q}&_{(V^1_B)^{1:i+1}{\bar X}{\bar Y}}-{P}_{(V^1)^{1:i+1}{\bar X}{\bar Y}}\|_1\nonumber\\
&=\sum_{(v^1)^{1:i+1},{\bar x},{\bar y}}  
|Q_B((v^1)^{1:i+1},{\bar x},{\bar y})-P((v^1)^{1:i+1},{\bar x},{\bar y})|\nonumber\\
& \leq \sum_{(v^1)^{1:i+1},{\bar x},{\bar y}} 
Q_B((v^1)^{1:i+1}|(v^1)^{1:i},{\bar x},{\bar y}) \,|Q_B((v^1)^{1:i},{\bar x},{\bar y})-P((v^1)^{1:i},{\bar x},{\bar y})| \nonumber\\
&\hspace*{.5in}+
\sum_{(v^1)^{1:i+1},{\bar x},{\bar y}} P((v^1)^{1:i},{\bar x},{\bar y}) \,|Q_B((v^1)^{1:i+1}|(v^1)^{1:i},{\bar x},{\bar y})
\nonumber\\
&\hspace*{2.5in}-P((v^1)^{1:i+1}|(v^1)^{1:i},{\bar x},{\bar y})|\label{eq:triangle}\\&
= \sum_{(v^1)^{1:i},{\bar x},{\bar y}} |Q_B((v^1)^{1:i},{\bar x},{\bar y})-P((v^1)^{1:i},{\bar x},{\bar y})| \nonumber\\&+
\sum_{(v^1)^{1:i+1},{\bar x},{\bar y}} |\widehat{Q}_B((v^1)^{1:i+1},{\bar x},{\bar y})-P((v^1)^{1:i+1},{\bar x},{\bar y})|\nonumber\\&=
\|{Q}_{(V^1_B)^{1:i}{\bar X}{\bar Y}}-{P}_{(V^1)^{1:i}{\bar X}{\bar Y}}\|_1+
\|\widehat{Q}_{(V^1_B)^{1:i+1}{\bar X}{\bar Y}}-{P}_{(V^1)^{1:i+1}{\bar X}{\bar Y}}\|_1
\label{trianglefirst}
    \end{align}
where for simplicity we write $Q_B((v^1)^{1:i+1},{\bar x},{\bar y})$ instead of 
${Q}_{(V^1_B)^{1:i+1}{\bar X}{\bar Y}}((v^1_B)^{1:i+1},{\bar x},{\bar y})$,
and where
\begin{align}
\widehat{Q}_{(V^1_B)^{1:i+1}{\bar X}{\bar Y}}((v^1)^{1:i+1},{\bar x},{\bar y})&=Q_{(V^1_B)^{i+1}|(V^1_B)^{1:i} {\bar X}{\bar Y}}((v^1)^{i+1}|(v^1)^{1:i},{\bar x},{\bar y}) P_{(V^1)^{1:i} {\bar X}{\bar Y}}((v^1)^{1:i},{\bar x},{\bar y})
\label{widehat}
\end{align}
is the distribution whose marginal for $(v^1)^{1:i},{\bar x},{\bar y}$ equals $P((v^1)^{1:i},{\bar x},{\bar y})$. From the induction
hypothesis given by (\ref{eq:1}), the first term in \eqref{trianglefirst} is small, and so it is enough to prove that
\begin{equation*}
\|\widehat{Q}_{(V^1_B)^{1:i+1}{\bar X}{\bar Y}}-{P}_{(V^1)^{1:i+1}{\bar X}{\bar Y}}\|_1= O(2^{-N^{\beta_2}}). %\label{induction22}
\end{equation*}
This estimate follows from the arguments made for the case $i=1$ with no changes. 

Regarding \eqref{newinduction1} we note that the induction hypothesis implies that the distribution
${Q}_{(V^1_B)^{1:i}{\bar X}{\bar Y}}$ is close to the ``true'' distribution $P$ by the $L_1$ distance. 
Therefore, the arguments
given above for each of the cases (1)-(4) for $i=1$ are applicable to the case of general $i+1$ given $i.$

This completes the induction argument and finishes the proof of Lemma \ref{l2}.

\bibliography{polar}
\bibliographystyle{amsplain}
\end{document}